\newtheorem{theorem}{Theorem}
\newtheorem{proposition}{Proposition}[section]
\newtheorem{lemma}{Lemma}[section]
\newtheorem{corollary}{Corollary}[section]
\newtheorem{definition}{Definition}[section]
\newcommand{\bra}[1]{\mbox{$\left\langle #1 \right|$}}
\newcommand{\ket}[1]{\mbox{$\left| #1 \right\rangle$}}
\newcommand{\op}[2]{|#1\rangle\langle #2|}
\newcommand{\ceil}[1]{\left\lceil #1 \right\rceil}
\newcommand{\tens}{\mathbin{\mathop{\otimes}}}
\newcommand{\Tr}{\operatorname{Tr}}
\newcommand{\tr}{\operatorname{Tr}}
\newcommand{\1}{\mathds{1}}
\newcommand{\mc}{\mathcal}
\newcommand{\mbb}{\mathbb}
\newcommand{\mr}{\mathrm}
\newcommand{\id}{\mathrm{id}}
\newcommand{\pr}{\mathrm{pre}}
\newcommand{\post}{\mathrm{post}}
\definecolor{cool_green}{rgb}{0.0, 0.5, 0.0}
\begin{document}

\title{Entanglement-breaking superchannels}
\author{Senrui Chen}
\email{csenrui@gmail.com}
\affiliation{Department of Electronic Engineering, Tsinghua University, Beijing 100084, China}
\affiliation{Center for Quantum Information, Institute for Interdisciplinary Information Sciences, Tsinghua University, Beijing 100084, China }
\author{Eric Chitambar}
\email{echitamb@illinois.edu}
\affiliation{Department of Electrical and Computer Engineering, Coordinated Science Laboratory, University of Illinois at Urbana-Champaign, Urbana, IL 61801}


\maketitle

\begin{abstract}
\begin{spacing}{1}

In this paper we initiate the study of entanglement-breaking (EB) superchannels.  These are processes that always yield separable maps when acting on one side of a bipartite completely positive (CP) map. EB superchannels are a generalization of the well-known EB channels. We give several equivalent characterizations of EB supermaps and superchannels. Unlike its channel counterpart, we find that not every EB superchannel can be implemented as a measure-and-prepare superchannel.  We also demonstrate that many EB superchannels can be superactivated, in the sense that they can output non-separable channels when wired in series. 

We then introduce the notions of CPTP- and CP-complete images of a superchannel, which capture deterministic and probabilistic channel convertibility, respectively.  This allows us to characterize the power of EB superchannels for generating CP maps in different scenarios, and it reveals some fundamental differences between channels and superchannels.  Finally, we relax the definition of separable channels to include $(p,q)$-non-entangling channels, which are bipartite channels that cannot generate entanglement using $p$- and $q$-dimensional ancillary systems. By introducing and investigating $k$-EB maps, we construct examples of $(p,q)$-EB superchannels that are not fully entanglement breaking. Partial results on the characterization of $(p,q)$-EB superchannels are also provided.
%
%

\end{spacing}

\end{abstract}

\newpage

\tableofcontents

\newpage

\section{Introduction}

\sloppy Suppose that Alice and Rachel have access to some bipartite quantum channel $\mc{E}^{A_0R_0\to A_1R_1}$.  They are interested in using this channel to generate entangled states across their spatially separated laboratories.  As shown in Fig.~\ref{fig:entangling-state}, the most general method for doing so would involve using local quantum memories.  Alice prepares a locally entangled state $\rho^{A_0A_E}$, with $A_E$ being her memory register, and Rachel does likewise with the state $\omega^{R_0R_E}$.  Sending systems $A_0$ and $R_0$ through the channel leads to the state 
\[\sigma^{A_EA_1:R_ER_1}=\id^{A_ER_E}\otimes\mc{E}^{A_0R_0\to A_1R_1}\left(\rho^{A_0A_E}\otimes \omega^{B_0B_E}\right),\]
which they hope is entangled.  It is known that such a procedure can be used to generate entanglement if and only if $\mc{E}^{A_0R_0\to A_1R_1}$ does not have the form of a so-called \textit{separable} channel \cite{Cirac-2001a}.  Hence for Alice and Rachel's goal of obtaining bipartite entangled states, separable channels are completely useless.

\begin{figure}[b]
    \centering
    \includegraphics[width=0.55\columnwidth]{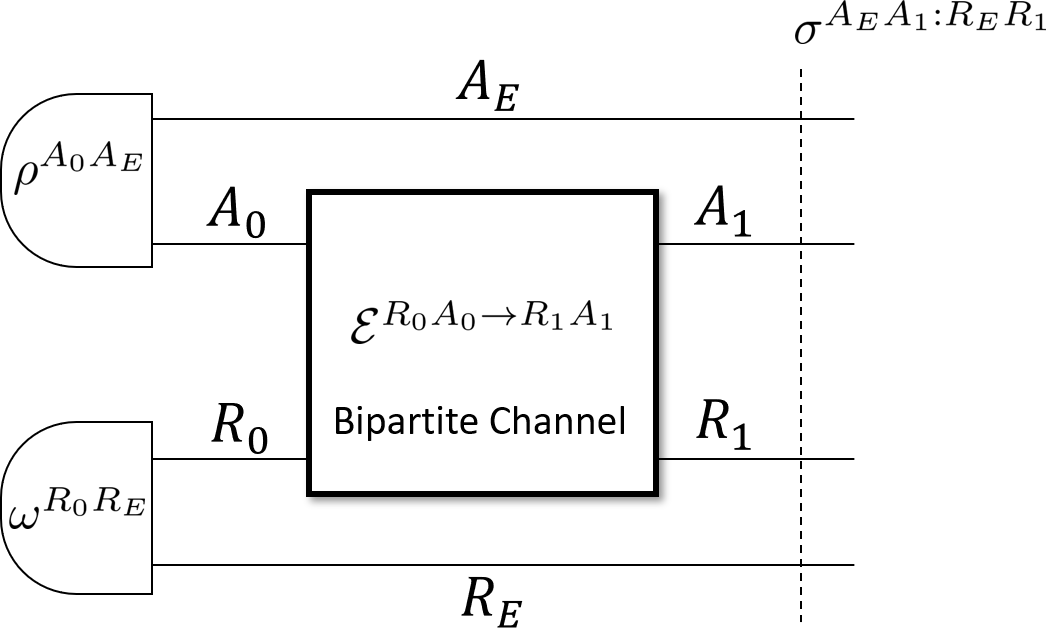}
    \caption{Alice and Rachel can use the their bipartite channel $\mc{E}$ with local quantum memories to generate entanglement in $\sigma^{A_EA_1:R_ER_1}$ if and only if $\mc{E}$ is not a separable channel.}
    \label{fig:entangling-state}
\end{figure}

Frustrated with the situation, Alice naively wonders if manipulating her part of the channel could improve their prospects of obtaining entanglement.  Any physical procedure she attempts can be described as in Fig.~\ref{fig:local-superchannel}; it involves her first applying some pre-processing map that couples her input system $A_0$ to the memory register $A_E$, and then applying a post-processing map to system $A_E$ and her channel output $A_1$ \cite{Chiribella2008}.  Such a process is known as a superchannel, and specifically here it is a local superchannel since it is being implemented only in Alice's laboratory.  Unfortunately for Alice, local superchannels are not able to transform a separable channel into a non-separable one.  Consequently, if $\mc{E}^{A_0R_0\to A_1R_1}$ is useless for entanglement generation before Alice's manipulation, it will be useless after.  On the other hand, it is quite possible that a local superchannel converts a non-separable channel into a separable one.  This begs the question of whether there exist certain local superchannels for Alice that convert \textit{every} bipartite channel into a separable channel.  We refer to such processes as \textit{entanglement-breaking superchannels} since they completely eliminate any channel's ability to distribute entanglement, and they are the focus of this paper.

\begin{figure}[t]
    \centering
    \includegraphics[width=0.65\columnwidth]{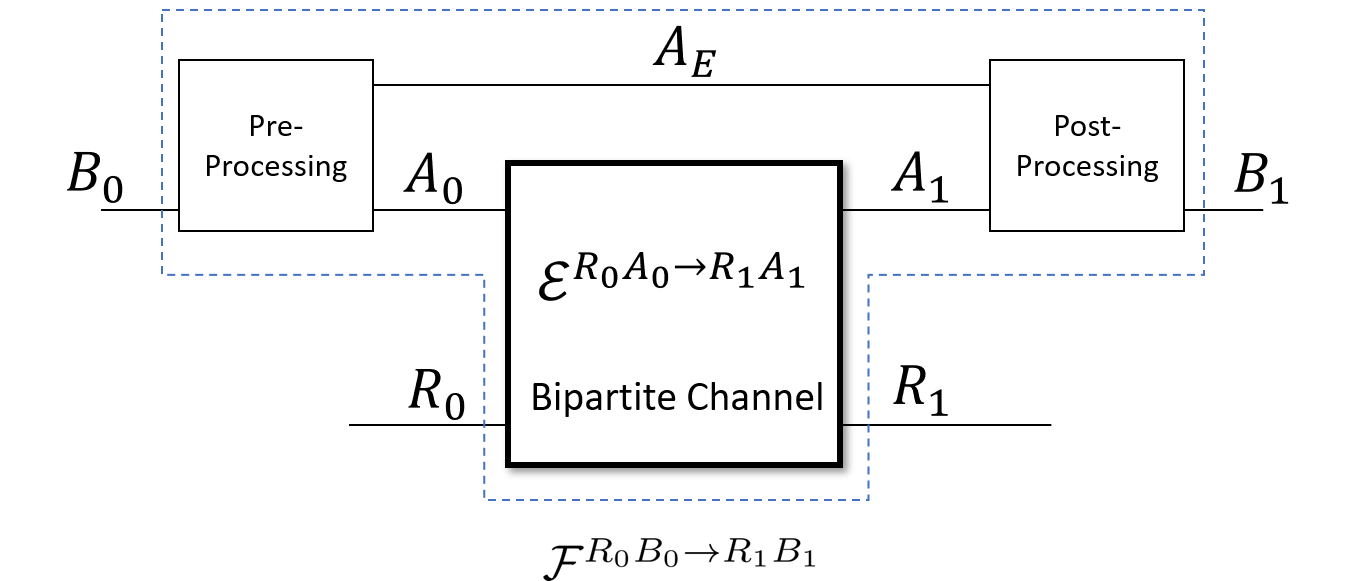}
    \caption{A local superchannel transforms $\mc{E}^{R_0A_0\to R_1A_1}$ into $\mc{F}^{R_0B_0\to R_1B_1}$.  An entanglement-breaking superchannel outputs a separable channel for every initial channel. }
    \label{fig:local-superchannel}
\end{figure}

Entanglement-breaking superchannels (EBSCs) generalize the class of entanglement-breaking channels (EBCs), a well-studied object within quantum information theory \cite{Horodecki2003EB}.  A channel $\mc{N}^{A_1\to B_1}$ is called entanglement breaking (EB) if $\id^{R_1}\otimes\mc{N}^{A_1\to B_1}(\rho^{R_1A_1})$ is separable for every $\rho^{R_1A_1}$ and all systems $R_1$.  That an EBC is a special case of an EBSC comes from the fact that every quantum state can be regarded as a quantum channel with a one-dimensional input.  An EBC $\mc{N}^{A_1\to B_1}$ can then be seen as an EBSC that locally transforms quantum channels with trivial input (see Fig.~\ref{fig:EBC}).  

\begin{figure}[t]
    \centering
    \includegraphics[width=0.45\columnwidth]{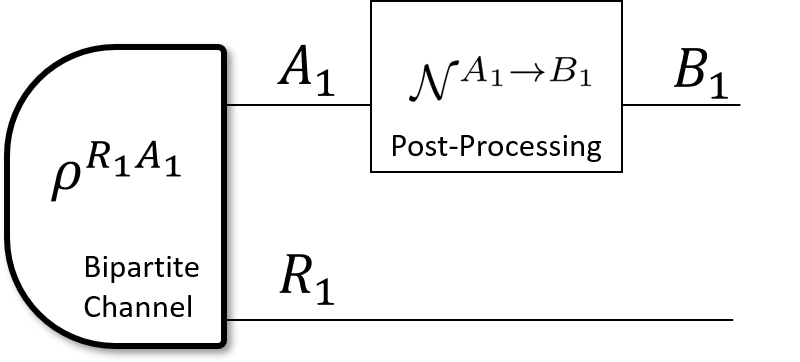}
    \caption{A state $\rho^{R_1A_1}$ represents a special type of bipartite channel, and a standard EBC can be seen as an EBSC post-processing map.}
    \label{fig:EBC}
\end{figure}

A central theorem is that every EBC can be realized by a measure-and-prepare protocol~\cite{Horodecki2003EB}.  That is, $\mc{N}^{A_1\to B_1}$ is EB if and only if  there exists a measurement described by a positive operator-valued measure (POVM) $\{F^{A_1}_k\}_k$ along with a family of states $\{\omega_k^{B_1}\}_k$ such that 
\begin{equation}
\label{Eq:m-p-c}
\mc{N}^{A_1\to B_1}(\rho^{A_1})=\sum_k \tr[F_k^{A_1}\rho^{A_1}]\omega_k^{B_1}    
\end{equation}
for all states $\rho^{A_1}$ of system $A_1$.  The interpretation is that the EBC $\mc{N}^{A_1\to B_1}$ can be implemented simply by first measuring system $A_1$ with POVM $\{F^{A_1}_{k}\}_k$ and then preparing the state $\omega_k^{B_1}$ contingent on outcome $k$.  A chief question of interest in this paper is whether there exists a similar type of implementation for EBSCs.  We find that obvious forms of EBSC implementation fail to capture the entire class of EBSCs, including the generalization of measure-and-prepare channels to superchannels.

The study of EBSCs falls within the broader research program of understanding dynamical quantum resources.  A quantum resource theory (QRT) describes a generic framework for isolating some particular feature of a quantum system, like entanglement or coherence, and analyzing how that feature, or ``resource'', behaves under a restricted set of operations \cite{Chitambar-2019a}.  Most attention has been previously devoted to studying static resources, that is, features that arise in particular states of a quantum system.  However, recently, significant progress has been made in developing the theory of \textit{dynamical} quantum resources, which refer to certain properties of quantum channels that are of interest for quantum information processing \cite{Theurer-2019a, Berk-2019a, liu2020operational, Liu-2019b, Gour-2019b}.  For example, in the QRT of entanglement for bipartite channels \cite{Gour2019ent, Bauml-2019a}, a channel's ability to distribute entanglement is a resource, and when separable processing is taken as free, an EBSC can be interpreted as a one-sided resource-erasing map.

The paper is organized as follows. In Section~\ref{sec:pre}, we fix notations and provide necessary preliminaries. In Section~\ref{sec:EBSC}, we formally define entanglement-breaking superchannels and give several equivalent characterizations. We also characterize a subset of EBSCs which allows for intuitive pre/post-processing realizations. In Section \ref{sec:superactivation}, we show how two copies of an EBSC can be combined to no longer be an EBSC, thereby demonstrating a type of superactivation.  In Section~\ref{sec:image}, we introduce the notion of CPTP and CP-complete image and consider what is the largest set of CPTP (CP) maps that can be generated through the action of EBSC, as well as two important subsets of EBSC.  In Section~\ref{sec:sidechannel-free}, we generalize EBSCs to superchannels that output $k$-non-entangling maps, and we connect these superchannels to the family of $k$-entanglement-breaking maps, the latter being a type of map that we introduce and thoroughly discuss. We summarize our results and conclude with some discussion in Sec.~\ref{sec:con}.


\section{Preliminaries} \label{sec:pre}
\subsection{Notations}
Throughout this paper we adopt most of the notations used in Ref.~\cite{Gour2019ent}. We use capital Latin letters $A, B, C$, \textit{etc.}~to denote physical systems, and $\mathcal{H}^A$, $\mathcal{H}^B$, $\mathcal{H}^C$, \textit{etc.}~to denote their corresponding Hilbert space. Sometimes we also use capital letters to denote Hilbert spaces for simplicity.   The collection of all bounded operators on system $A$ will be denoted by $\mathcal{B}(A)$, all Hermitian operators by $\textrm{Herm}(A)$, all positive operators by $\mathcal{P}(A)$, and all density matrices by $\mathcal{D}(A)$. The set of all linear maps from $\mathcal B(A)$ to $\mathcal B(B)$ will be denoted as $\mathcal{L}(A\to B)$, among which all completely-positive maps (CP) and all completely-positive and trace-preserving maps (CPTP) are denoted as $\textrm{CP}(A\to B)$ and $\textrm{CPTP}(A\to B)$, respectively.  A CPTP map is also called a quantum channel.

Since we are considering here dynamical resources, we will always assume that a system $A$ has an associated input and an output system, denoted by $A_0$, $A_1$, respectively. Therefore, we can use the shorthand notation $\mc{L}(A):=\mc{L}(A_0\to A_1)$, $\textrm{CP}(A):=\textrm{CP}(A_0\to A_1)$, \textit{etc.}~A linear map from $\mc L(A)$ to $\mc L(B)$ is called a supermap, and the set of all such supermaps will be denoted by $\mbb{L}(A\to B)$. The action of a supermap will be written as a square bracket, like $\Phi[\mc E]$, whereas the action of quantum channel will usually denoted by round brackets, like $\mc E(\rho)$.  

We use $\tilde A$ to denote a system with the same dimension of $A$, and $\phi_+^{\tilde{A}A}=\sum_{i,j=1}^{d_A}\op{ii}{jj}^{\tilde{A}A}$ is the unnormalized maximally entangled state on space $\tilde{A}A$.  For any linear map $\mc{E}\in\mc L(A_0\to A_1)$, its Choi matrix is defined as \cite{jamiolkowski1972linear, choi1975completely}
\begin{align}\label{eq:Choidef}
    J_{\mc{E}}^{A_0A_1}:=\id^{A_0}\otimes\mc{E}^{\tilde{A}_0\to A_1}(\phi_+^{A_0\tilde{A}_0}),
\end{align}
which establishes an isomorphism between $\mc L(A_0\otimes A_1)$ and $\mc L(A_0\to A_1)$, known as the Choi-Jamiolkowski isomorphism. The inverse of Eq.~\eqref{eq:Choidef} is
\begin{equation}
\mc E^{A_0\to A_1}(\rho^{A_0}) = \tr_{A_0}\left( \left((\rho^{A_0})^T\otimes I^{A_1}\right) J_{\mc E}^{A_0A_1}\right).
\end{equation}
Furthermore, $\mc E^{A_0\to A_1}$ is a CP map iff $J_{\mc E}^{A_0A_1}\ge 0$ (which means it is positive semidefinite), and $\mc E^{A_0\to A_1}$ is TP iff $\tr_{A_1}({J_{\mc E}^{A_0A_1}})=I^{A_0}$.

Throughout this paper, we denote the identity operator on state space $\mc{H}$ as $I$, the identity channel as $\id$ (\textit{i.e.}~$\id(\rho)=\rho$), and the identity superchannel as $\mbb{1}$ (\textit{i.e.}~$\mbb{1}[\mc E]=\mc E$). Hence, when we write $\mbb{1}\otimes\Theta[\mc E]$, it should be unambiguously understood as $(\mbb{1}\otimes\Theta)[\mc E]$ since $\mbb 1$ is a supermap, and should not be confused with $\id\otimes\Theta[\mc E]$.
 
\subsection{Separable and Entanglement-Breaking Maps}

We next review the meaning of separable and entanglement-breaking maps. 
\begin{definition} \cite{vedral1997quantifying, Barnum1998transmission}
A CP map $\mc E\in \mr{CP}(A_0R_0\to A_1R_1)$ is called $A_0A_1:R_0R_1$ \textbf{separable} if it can be written as $\mc E = \sum_k \Phi_k \otimes \Psi_k$ for some $\Phi_k\in \mr{CP}(A_0\to A_1)$ and $\Psi_k \in \mr{CP}(R_0\to R_1)$. 
\end{definition}
\noindent It is not difficult to see that $\mc{E}^{A_0R_0\to A_1R_1}$ is separable iff its Choi matrix
\begin{equation}
    J_{\mc{E}}^{A_0R_0A_1R_1}:=\id^{A_0R_0}\otimes \mc{E}^{\tilde{A}_0\tilde{R}_0\to A_1R_1}(\phi_+^{A_0\tilde{A}_0}\otimes \phi_+^{R_0\tilde{R}_0})
\end{equation}
is $A_0A_1:R_0R_1$ separable \cite{Cirac-2001a}.  This means that we can write $J_{\mc{E}}^{A_0A_1:R_0R_1}=\sum_kM_k\otimes N_k$ for some $M_k\in\mc{P}(A_0A_1)$ and $N_k\in\mc{P}(R_0R_1)$.  As alluded to in the introduction, separability of $\mc{E}^{A_0R_0\to A_1R_1}$ means that it cannot be used for distributing entanglement between Alice and Rachel, and this fact can be most easily established by examining its Choi matrix and using the identity $\mc{E}(\rho)=\tr_{A_0R_0}[(\rho^{A_0R_0})^T\otimes I^{A_1R_1}~J_\mc{E}^{A_0R_0A_1R_1}]$.

A close cousin to the separable maps are those that are entanglement breaking.
\begin{definition}\cite{Horodecki2003EB}\label{de:CEBC}
A CP map $\mc{N}\in\mathrm{CP}(A_1\to B_1)$ is called \textbf{entanglement breaking} (EB) if $\id^{R_1}\tens\mc{N}^{A_1\to B_1}(\rho^{R_1A_1})$ is separable for any $\rho^{R_1A_1}$ and any system $R_1$.  An EB map $\mc{N}$ is called an \textbf{entanglement breaking channel} (EBC) if it is also trace-preserving.
\end{definition}
\noindent The following provides different characterizations of EB maps and clarifies the relationship between EB and separable maps.
\begin{proposition}
\label{Prop:EBC}
    For a CP map $\mc{N}^{A_1\to B_1}$, the following are equivalent.
    \begin{enumerate}[label=(\Alph*)]
        \item $\mc{N}^{A_1\to B_1}$ is EB.
        \item Its Choi matrix $J_{\mc{N}}^{A_1B_1}$ is $A_1:B_1$ separable.
		\item $\mc{N}^{A_1\to B_1}(\rho)= \sum_k\tr(F^{A_1}_k\rho^{A_1})\omega_k^{B_1}$, for $F^{A_1}_k\ge 0$ and $\omega_k\in\mc D(B_1)$.     
%
        \item For any system $R$ and any bipartite channel $\mc{E}^{R_0B_1\to R_1A_1}$, the composition $\mc{F}^{R_0A_1\to R_1B_1}=\mc{N}^{A_1\to B_1}\circ\mc{E}^{R_0B_1\to R_1A_1}\circ\mc{N}^{A_1\to B_1}$ is a separable map.
    \end{enumerate}
\end{proposition}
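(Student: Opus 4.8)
The plan is to prove the chain of equivalences (A) $\Leftrightarrow$ (B) $\Leftrightarrow$ (C) $\Rightarrow$ (D) $\Rightarrow$ (A), since the first three are the classical statement of Horodecki--Shor--Ruskai and only (D) is genuinely new here. For (A) $\Leftrightarrow$ (B): one direction is immediate by taking $\rho^{R_1A_1}$ to be (proportional to) $\phi_+^{R_1A_1}$, so that $\id\otimes\mc{N}(\phi_+)=J_{\mc{N}}^{A_1B_1}$ must be separable. For the converse, if $J_{\mc{N}}^{A_1B_1}=\sum_k M_k\otimes N_k$ with $M_k,N_k\geq 0$, then feeding any $\rho^{R_1A_1}$ through $\id\otimes\mc{N}$ via the inversion formula $\mc{N}(\sigma)=\tr_{A_1}[(\sigma^{A_1})^T\otimes I^{B_1}\, J_{\mc{N}}^{A_1B_1}]$ produces a sum of terms of the form $(\text{operator on }R_1)\otimes N_k$, hence a separable state. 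For (B) $\Leftrightarrow$ (C): given a separable Choi matrix $\sum_k p_k\, \op{e_k}{e_k}^{A_1}\otimes \omega_k^{B_1}$ (rank-one on each side can be assumed by refining the decomposition), set $F_k^{A_1}=p_k\, \overline{\op{e_k}{e_k}}$ and plug into the inversion formula to recover the measure-and-prepare form; conversely the measure-and-prepare form has Choi matrix $\sum_k (F_k^{A_1})^T\otimes \omega_k^{B_1}$, manifestly separable. I would keep this part brief or simply cite \cite{Horodecki2003EB}.

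The substantive step is (C) $\Rightarrow$ (D) $\Rightarrow$ (A). For (C) $\Rightarrow$ (D): write $\mc{N}(\rho)=\sum_k\tr(F_k\rho)\omega_k$, so the first application of $\mc{N}$ inside the composition $\mc{N}\circ\mc{E}\circ\mc{N}$ destroys all correlation between $R$ and the $A_1$ wire --- the output of $\mc{N}^{A_1\to B_1}$ is $\sum_k \tr(F_k^{A_1}\rho^{A_1})\,\omega_k^{B_1}$, which is a fixed state $\omega_k$ tagged by classical data $k$. Then $\mc{E}^{R_0B_1\to R_1A_1}$ acts, and finally the outer $\mc{N}^{A_1\to B_1}$ acts on the $A_1$ output. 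The cleanest way to see separability of $\mc{F}^{R_0A_1\to R_1B_1}$ is to track the structure term by term: after the inner $\mc{N}$ the overall map factors as $\sum_k (\text{functional }\tr(F_k\cdot)\text{ on }A_1)\otimes(\text{prepare }\omega_k)$ composed with $\mc{E}$, and since the outer $\mc{N}$ again collapses its input to $\sum_j \tr(F_j\cdot)\,\omega_j$, the $A_1$-to-$B_1$ side of every resulting term is of the form $\rho\mapsto \tr(F_k\rho)\,c_{kj}\,\omega_j$ --- a CP map $A_1\to B_1$ --- tensored with a CP map $R_0\to R_1$ obtained from $\mc{E}$ with the $B_1$ input fixed to $\omega_k$ and the $A_1$ output measured by $F_j$. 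Summing over $k,j$ exhibits $\mc{F}$ as $\sum_{k,j}\Phi_{kj}\otimes\Psi_{kj}$, i.e. separable. I would do this bookkeeping carefully in terms of Choi matrices to avoid sign/transpose errors.

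For (D) $\Rightarrow$ (A): I want to choose $R$ and $\mc{E}^{R_0B_1\to R_1A_1}$ cleverly so that separability of $\mc{F}^{R_0A_1\to R_1B_1}=\mc{N}\circ\mc{E}\circ\mc{N}$ forces separability of $J_{\mc{N}}^{A_1B_1}$, then invoke (B) $\Rightarrow$ (A). The natural choice is to let $\mc{E}$ be (essentially) a swap/identity-type channel that routes the inner $\mc{N}$'s output $B_1$ into a system that can be read out, while using the $R$ register to ``teleport in'' a maximally entangled state so that one application of $\mc{N}$ suffices to imprint $J_{\mc{N}}$. Concretely, pick $R_0$ trivial (one-dimensional), $R_1\cong B_1$, and let $\mc{E}^{B_1\to R_1A_1}$ prepare a maximally entangled state $\phi_+^{R_1 A_1}$ (discarding the $B_1$ input); then $\mc{F}^{A_1\to R_1B_1}=\mc{N}^{A_1\to B_1}(\cdot)\otimes$ ... --- hmm, this particular choice makes $\mc{F}$ trivially separable and yields nothing, so I will instead take $R_1\cong A_1$, let $\mc{E}^{B_1 \to R_1 A_1}$ be the map that discards $B_1$ and outputs $\phi_+^{R_1 A_1}$, so that $\mc{F}^{A_1\to R_1 B_1} = (\id^{R_1}\otimes \mc{N}^{A_1\to B_1})(\phi_+^{R_1 A_1}) \otimes (\text{const on input }A_1)$; the $R_1 B_1$ marginal is exactly $J_{\mc N}$, and separability of $\mc F$ across $R_1 : B_1$ would not quite be the statement we need either. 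The honest approach is to use an $\mc E$ with nontrivial $R_0$: take $R_0 \cong A_1$, let $\mc E^{R_0 B_1 \to R_1 A_1}$ swap, so $\mc F = \mc N^{A_1 \to B_1} \circ \mathrm{SWAP} \circ \mc N^{A_1\to B_1}$ acting as $R_0 A_1 \to R_1 B_1$ means the inner $\mc N$ acts on $A_1$, SWAP moves its $B_1$ output... the dimensions must be made to match. The main obstacle, and where I would spend the most care, is engineering this $\mc{E}$ so that the composition's separability is equivalent to $J_{\mc N}^{A_1B_1}$ being separable; I expect the right construction routes $A_1$ through the first $\mc N$, stores the result, and uses a maximally entangled resource on $R$ so the second $\mc N$ produces $J_{\mc N}$ on a subsystem while the $R$-$B_1$ cut of the total map inherits separability exactly from $A_1:B_1$-separability of $J_{\mc N}$. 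Once that gadget is pinned down, (D) $\Rightarrow$ (B) is immediate and the cycle closes.
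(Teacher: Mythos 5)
Your handling of (A)$\Leftrightarrow$(B)$\Leftrightarrow$(C) and of (C)$\Rightarrow$(D) is sound: the term-by-term factorization $\mc F=\sum_{k,j}\Psi_{kj}\otimes\Phi_{kj}$, with $\Psi_{kj}(X)=\tr_{A_1}\bigl[(I^{R_1}\otimes F_j)\,\mc E(X\otimes\omega_k)\bigr]$ and $\Phi_{kj}(Y)=\tr(F_kY)\,\omega_j$, is exactly the content of the paper's remark that separability of $J_{\mc F}$ follows from the measure-and-prepare form (C). The genuine gap is (D)$\Rightarrow$(A): you never fix the channel $\mc E$ that forces separability of $J_{\mc N}$, and you explicitly leave the ``gadget'' to be pinned down, so the cycle of implications does not close in your write-up.

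What is worse, you discarded the correct construction for a wrong reason. The choice you abandoned --- $R_1\cong A_1$ with $\mc E^{R_0B_1\to R_1A_1}(X)=\tr[X]\,\tfrac{1}{d_{A_1}}\phi_+^{R_1A_1}$, i.e.\ discard everything and prepare the maximally entangled state (your trivial-$R_0$ variant works just as well) --- is precisely the paper's choice, and it does yield the needed statement. With this $\mc E$ the composition is the constant channel
\begin{equation*}
\mc F(\rho^{R_0A_1})=\tr[\rho]\,\tfrac{1}{d_{A_1}}\bigl(\id^{R_1}\otimes\mc N^{A_1\to B_1}\bigr)(\phi_+^{R_1A_1})=\tr[\rho]\,\tfrac{1}{d_{A_1}}\,J_{\mc N}^{R_1B_1},
\end{equation*}
whose Choi matrix is $I^{R_0A_1}\otimes\tfrac{1}{d_{A_1}}J_{\mc N}^{R_1B_1}$. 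The separability demanded in (D) is across the cut $R_0R_1:A_1B_1$ (each party keeps her own input and output wires), and $J_{\mc N}^{R_1B_1}$ straddles exactly this cut; since tracing out the local factors $R_0$ and $A_1$ preserves separability, $\mc F$ is separable iff $J_{\mc N}$ is $A_1:B_1$ separable, which is item (B). Your objection that ``separability across $R_1:B_1$ would not quite be the statement we need'' conflates the bipartition of the channel with the bipartition of its Choi state --- here they coincide, and (D)$\Rightarrow$(B)$\Rightarrow$(A) is immediate. The subsequent SWAP-based attempt is unnecessary and, as you note yourself, never made dimensionally consistent; as written, your proposal establishes only (A)$\Leftrightarrow$(B)$\Leftrightarrow$(C)$\Rightarrow$(D).
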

\begin{proof}
Items (A)--(C) are standard results found in Ref.~\cite{Horodecki2003EB}.  From the form of Eq.~\eqref{Eq:m-p-c}, it is easy to see that $J_{\mc{F}}^{R_0A_1R_1B_1}$ is $R_0R_1:A_1B_1$ separable whenever $\mc{N}$ is EB, and so (A) $\Rightarrow$ (D).  Conversely, if (D) holds for all bipartite channels $\mc{E}^{R_0B_1\to R_1 A_1}$, then by considering the discard-and-prepare channel $\mc{E}(X^{R_0B_1})=\tr[X]\frac{1}{d_{A_1}}\phi_+^{R_1A_1}$ for $R_1\cong A_1$, we see that
separability of $\mc{F}$ implies that $\mc{N}$ is EB; hence (D) $\Rightarrow$ (A).
\end{proof}

\noindent\textbf{Remark:}  In Section \ref{Sect:CMPSC}, we will see that the channel $\mc{F}^{R_0A_1\to R_1B_1}$ constructed in (D) of this proposition is the output of a conditional prepare-and-measure superchannel $\mc{E}\mapsto \mc{N}\circ\mc{E}\circ\mc{N}$.


\subsection{Supermaps and Superchannels}

We next review the basic structure of superchannels.  The following definitions and theorems can be found in \cite{Gour2019channel}.
\begin{definition} \label{de:SC}\cite{Gour2019channel, Chiribella2008}
A supermap $\Theta\in \mathbb L(A\to B)$ is called a \textbf{superchannel} if both of the following are satisfied:
\begin{enumerate}[label=(\Alph*)]
    \item $\Theta$ is completely CP preserving: $\1^R \otimes \Theta [\mc E^{RA}]$ is completely positive for any CP map $\mc E$ and arbitrarily large system $R$.
    \item $\Theta$ is TP preserving: $\Theta[\mc E]$ is trace preserving for every TP map $\mc E$.
\end{enumerate}
\noindent We call $\Theta$ a \textbf{CP supermap} if condition (A) is satisfied.
\end{definition}


\noindent The notion of \textit{superchannel} defined here is equivalent to the \textit{deterministic supermap} defined in \cite{Chiribella2008}, and the term \textit{CP supermap} here is equivalent to \textit{probabilistic supermap} or simply \textit{supermap} in \cite{Chiribella2008}. In contrast to \cite{Chiribella2008}, here we use \textit{supermap} to refer to any linear map in $\mbb L(A\to B)$.


\medskip

The Choi matrix of a supermap $\Theta^{A\to B}$ is defined as
\begin{equation}
    \mc J_\Theta^{AB}=\sum_{a_0,a_1}  J^A_{\mc E_{a_0a_1}}\otimes J^B_{\Theta[\mc E_{a_0a_1}]},
\end{equation}
where $J^A_{\mc E_{a_0a_1}}$ and $J^B_{\Theta[\mc E_{a_0a_1}]}$ are the Choi matrices of $\mc E_{a_0a_1}\in\mc L(A)$ and $\Theta[\mc E_{a_0a_1}]\in\mc L(B)$ respectively, and $\left\{\mc E_{a_0a_1}\right\}_{a_0,a_1}$ is a complete orthogonal basis of $\mc L(A)$ whose action in the computational basis is given by
\begin{equation}
    \mc E_{ijkl}(\rho^{A_0}) = \bra i \rho^{A_0} \ket j \ket k \bra l^{A_1}. 
\end{equation}
Alternatively, $\mc J_\Theta^{AB}$ equals the Choi matrix of ${(\1^A\otimes\Theta^{\tilde A\to B})[\Phi_+^{A\tilde A}]}$, where
\begin{equation}
    \Phi_+^{A\tilde A} \equiv \sum_{a_0,a_1}\mc E^A_{a_0a_1}\otimes\mc E^{\tilde A}_{a_0a_1}.
\end{equation}
The action of $\Phi_+^{A\tilde{A}}$ can be expressed as
\begin{equation}
    \Phi_+^{A\tilde A}(\rho^{A_0\tilde{A_0}}) = \Tr(\rho\phi_+^{A_0\tilde A_0})\phi_+^{A_1\tilde A_1}, 
\end{equation}
and hence it can be viewed as a maximally entangled map, in analogy with $\phi_+$.  Note that $\Phi_+$ is not trace-preserving and hence not a quantum channel.  

From the Choi-Jamiolkowski duality, there is a one-to-one correspondence between supermaps and their Choi matrix. Hence, the linear spaces $\mbb L(A\to B)$, $\mc L(A_0\otimes A_1\to B_0\otimes B_1)$, $\mc L(A_0\otimes B_0\to A_1\otimes B_1)$, and $\mc L(A_1\otimes B_0\to A_0\otimes B_1)$ are all isomorphic. We have already seen that ${(\1^A\otimes\Theta^{\tilde A\to B})[\Phi_+^{A\tilde A}]}$ has the same Choi matrix as $\Theta$. Further define $\Delta_\Theta \in \mc L(A\to B)$ to be the unique map that satisfies $J_{\Delta_\Theta} = \mc J^{AB}_\Theta$, and $\Gamma_{\Theta}\in\mc L(A_1B_0\to A_0B_1)$ to be the unique map that satisfies $J_{\Gamma_\Theta} = \mc J^{AB}_\Theta$ . The properties of $\Theta^{A\to B}$ are then directly related to these three maps. Specifically, $\Theta$ is a CP supermap iff $(\1\otimes\Theta)[\Phi_+^{A\tilde A}]$, $\Gamma^{A_1B_0\to A_0B_1}_\Theta$ and $\Delta^{A\to B}_\Theta$ are CP. The condition when $\Theta$ becomes a superchannel is a little bit more involved and is given by the following lemma.


\begin{lemma} \cite{Chiribella2008, Gour2019channel}\label{le:superchannel}
Let $\Theta\in\mathbb L(A\to B)$. The following are equivalent.
\begin{enumerate}[label=(\Alph*)]
    \item $\Theta$ is a superchannel.
    \item The Choi matrix $\mc J^{AB}_{\Theta}\ge 0$ has marginals
    \begin{equation}\label{eq:marginal}
        \mc J^{A_1B_0}_{\Theta} = I^{A_1B_0};\quad \mc J_\Theta^{AB_0}=\mc J^{A_0B_0}\otimes u^{A_1}
    \end{equation}
    where $u^{A_1} \equiv \frac1{d_{A_1}}I^{A_1}$ is the normalized maximally mixed state on $A_1$.
    \item There exists a Hilbert space $\mc H_E$ with $d_E\le d_{A_0}d_{B_0}$, and two CPTP maps $\Gamma_\mathrm{pre}^{B_0\to A_0E}$ and $\Gamma_\mathrm{post}^{A_1E\to B_1}$ such that for all linear maps $\mc E^A$,
    \begin{equation}
        \Theta[\mc E^A]=\Gamma_\mathrm{post}^{A_1E\to B_1}\circ\left(\mc E^A\otimes\id^E\right)\circ\Gamma_\mathrm{pre}^{B_0\to A_0E}
    \end{equation}
\end{enumerate}
\end{lemma}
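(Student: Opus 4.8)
The plan is to prove the cycle (C) $\Rightarrow$ (A) $\Rightarrow$ (B) $\Rightarrow$ (C). The implication (C) $\Rightarrow$ (A) is immediate: tensoring the realization of $\Theta$ with the trivial realization of the identity superchannel on $R$ shows that $\1^R\otimes\Theta$ acts as $\mc E\mapsto(\id^{R_1}\otimes\Gamma_\mr{post}^{A_1E\to B_1})\circ(\mc E^{RA}\otimes\id^E)\circ(\id^{R_0}\otimes\Gamma_\mr{pre}^{B_0\to A_0E})$; this is a composition of CP maps tensored with identities whenever $\mc E$ is CP, hence CP, and if $\mc E$ is TP then each factor of $\Gamma_\mr{post}\circ(\mc E\otimes\id^E)\circ\Gamma_\mr{pre}$ is trace preserving, so $\Theta[\mc E]$ is.

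For (A) $\Rightarrow$ (B), positivity $\mc J^{AB}_\Theta\ge 0$ holds because $\mc J^{AB}_\Theta$ is the Choi matrix of $(\1^A\otimes\Theta^{\tilde A\to B})[\Phi_+^{A\tilde A}]$, the map $\Phi_+^{A\tilde A}$ (which acts as $\rho\mapsto\Tr(\rho\,\phi_+^{A_0\tilde A_0})\,\phi_+^{A_1\tilde A_1}$) is CP, being measure-and-prepare with effect $\phi_+\ge 0$ and prepared state $\phi_+\ge 0$, and complete CP preservation forces this output to be CP. For the marginals I would expand $\mc J^{AB}_\Theta=\sum_{a_0,a_1}J^A_{\mc E_{a_0a_1}}\otimes J^B_{\Theta[\mc E_{a_0a_1}]}$ in the matrix-unit basis $\mc E_{ijkl}$ and take partial traces. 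Tracing out $A_0$ and $B_1$ and collecting terms reduces everything to $\Theta$ evaluated on preparation maps $\rho\mapsto\Tr(\rho)\op{k}{l}$; using TP preservation for $k=l$, linearity, and $\Tr\op{k}{l}=\delta_{kl}$ for $k\ne l$ then gives $\mc J^{A_1B_0}_\Theta=I^{A_1B_0}$. The identity $\mc J^{AB_0}_\Theta=\mc J^{A_0B_0}\otimes u^{A_1}$ follows from the observation that $\Tr_{B_1}\circ\Theta$ annihilates every trace-annihilating map (any two TP maps differ by such a map and $\Theta$ sends both to TP maps), so $\Tr_{B_1}[J^B_{\Theta[\mc E]}]$ depends on $\mc E$ only through $\Tr_{A_1}[J^A_{\mc E}]$; substituting this linear dependence into the expansion produces the claimed tensor form, with $u^{A_1}=\tfrac1{d_{A_1}}I^{A_1}$ absorbing the dimensional constant.

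The substantial step is (B) $\Rightarrow$ (C), which I expect to be the main obstacle. I would reshuffle $\mc J^{AB}_\Theta$ and view it as the Choi matrix of $\Gamma_\Theta\in\mc L(A_1B_0\to A_0B_1)$: then $\mc J^{AB}_\Theta\ge 0$ makes $\Gamma_\Theta$ CP, and $\mc J^{A_1B_0}_\Theta=I^{A_1B_0}$ is exactly the trace-preservation condition $\Tr_{A_0B_1}[J_{\Gamma_\Theta}]=I^{A_1B_0}$, so $\Gamma_\Theta$ is a channel. The remaining marginal $\Tr_{B_1}[J_{\Gamma_\Theta}]=\mc J^{A_0B_0}\otimes u^{A_1}$ says that $\Gamma_\Theta$ is \emph{semicausal}: discarding the $B_1$ output leaves the product channel $\mc N^{B_0\to A_0}\circ\Tr_{A_1}$, where $\mc N$ has Choi matrix $\tfrac1{d_{A_1}}\mc J^{A_0B_0}$ (its trace preservation follows from $\Tr_{A_0}\mc J^{A_0B_0}=d_{A_1}I^{B_0}$, itself a consequence of the first marginal). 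I would then let $\Gamma_\mr{pre}^{B_0\to A_0E}$ be the \emph{minimal} isometric Stinespring dilation of $\mc N$, so that $d_E=\operatorname{rank}J_{\mc N}\le d_{A_0}d_{B_0}$ and $\id^{A_1}\otimes\Gamma_\mr{pre}$ is the minimal isometric dilation of $\mc N\circ\Tr_{A_1}$. Since $\Gamma_\Theta$ is itself a dilation of this channel with ``environment output'' $B_1$, the essential uniqueness of the Stinespring dilation supplies a channel $\Gamma_\mr{post}^{A_1E\to B_1}$ with $\Gamma_\Theta=(\id^{A_0}\otimes\Gamma_\mr{post})\circ(\id^{A_1}\otimes\Gamma_\mr{pre})$; translating this back through the isomorphism $\mbb L(A\to B)\cong\mc L(A_1B_0\to A_0B_1)$ --- i.e.\ checking that $\mc E\mapsto\Gamma_\mr{post}\circ(\mc E\otimes\id^E)\circ\Gamma_\mr{pre}$ has reshuffled Choi map $(\id^{A_0}\otimes\Gamma_\mr{post})\circ(\id^{A_1}\otimes\Gamma_\mr{pre})$ --- shows it equals $\Theta$. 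The crux throughout is recognizing that the two marginal conditions are precisely the semicausality of $\Gamma_\Theta$ and that semicausality is exactly what makes the one-way pre/post-processing dilation possible; secondary care is needed to track the $d_{A_1}$ normalizations and to read off $d_E\le d_{A_0}d_{B_0}$ from minimality of the dilation.
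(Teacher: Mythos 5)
Your proposal is correct: the paper itself states this lemma without proof, citing Chiribella \emph{et al.}~and Gour, and your argument is essentially the standard route of those references. In particular, your (A)$\Rightarrow$(B) marginal computation via the matrix-unit basis and the trace-annihilating-map observation is sound, and your (B)$\Rightarrow$(C) step --- reading the two marginal conditions as saying that the reshuffled channel $\Gamma_\Theta^{A_1B_0\to A_0B_1}$ is trace preserving and semicausal (no signalling from $A_1$ to $A_0$), then splitting it using the minimal Stinespring dilation of $\mc N$ together with essential uniqueness, which also yields the bound $d_E\le d_{A_0}d_{B_0}$ --- is exactly the dilation argument used in the cited works.
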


\noindent For a superchannel $\Theta^{A\to B}$, one can verify that the Choi matrix of $\Gamma_\mathrm{post}^{A_1E\to B_1}\circ\Gamma_\mathrm{pre}^{B_0\to A_0E}$ is $J^{AB}_{\Theta}$, and hence $\Gamma_\Theta^{A_1B_0\to A_0B_1} = \Gamma_\mathrm{post}^{A_1E\to B_1}\circ\Gamma_\mathrm{pre}^{B_0\to A_0E}$. On the other hand, by the Choi-Jamiolkowski duality we have
\begin{equation}\label{eq:choi_trans}
	J_{\Theta[\mc E]}^B = \Tr_A\left(\mc J_\Theta^{AB}\left((J_{\mc E}^A)^\mr T\otimes I^B\right)\right).
\end{equation}
Therefore, the CP map $\Delta_{\Theta}^{A\to B}$ has a property that it transforms the Choi matrix of a channel to another Choi matrix of a channel, \textit{i.e. $\Delta_\Theta(J^A_{\mc E}) = J^B_{\Theta[\mc E]}$}.

Pictorially, the calculation of the Choi matrix of a superchannel $\Theta^{A\to B}$ is shown in Fig~\ref{fig:Choi}.

\begin{figure}[!htb]
    \centering
    \includegraphics[width=0.6\columnwidth]{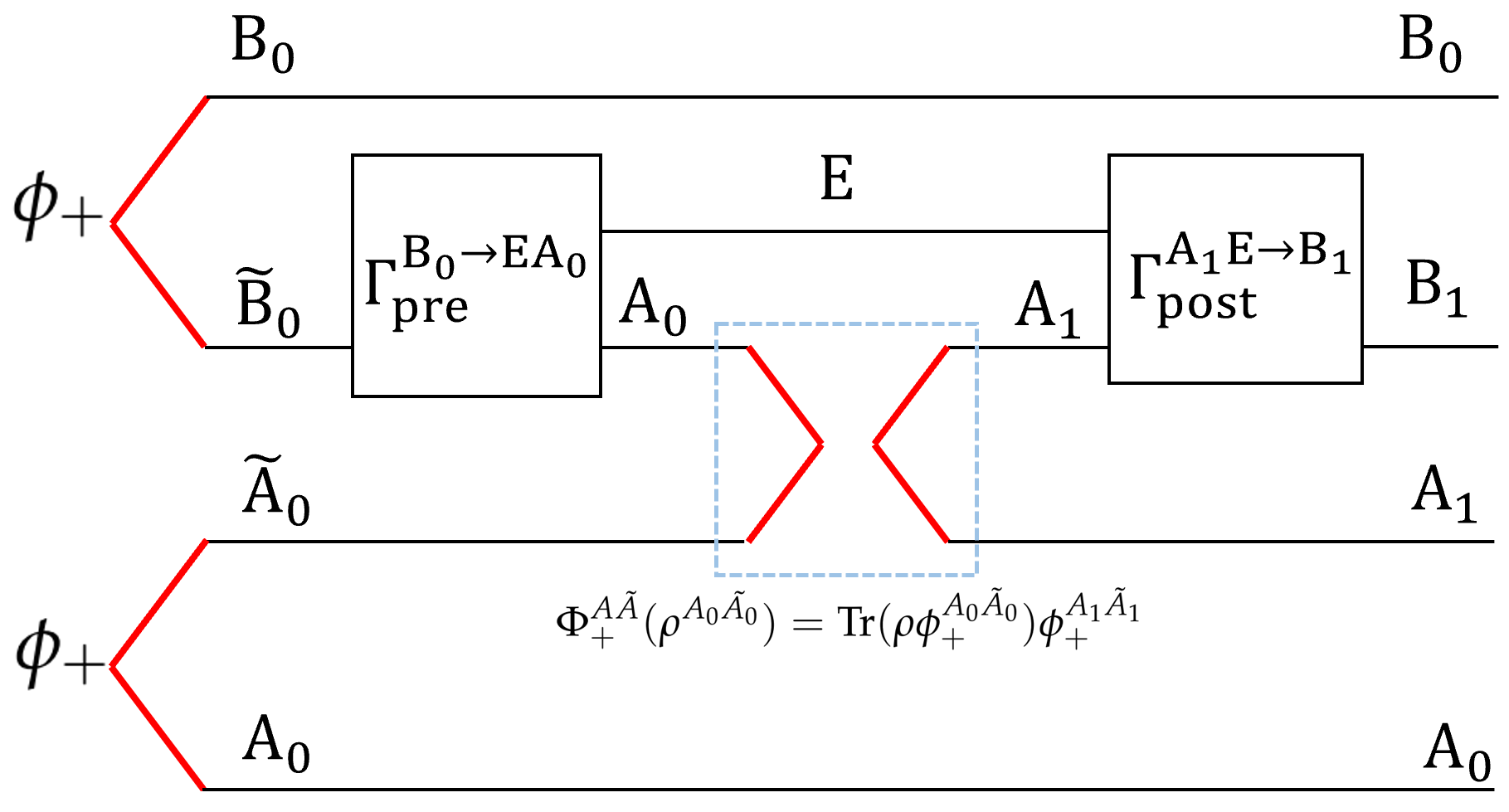}
     \caption{Circuit diagram for calculating the Choi matrix $\mc J_\Theta^{AB}$ of a superchannel $\Theta^{A\to B}$.} 
     \label{fig:Choi} 
\end{figure}



\section{Characterization and Realizations of EBSCs}
\label{sec:EBSC}

\subsection{Characterization of EBSCs}

With the background concepts in place from the previous section, we are now able to introduce the notion of entanglement-breaking supermap {and entanglement-breaking superchannel}.


\begin{definition}\label{de:EBSC}
A CP supermap $\Theta^{A\to B}$ is called an \textbf{entanglement-breaking supermap} if $\1^R \otimes\Theta[\mc E^{RA}]$ is a separable map for every $\mc{E}\in \textrm{CP}(RA)$, with $R$ being an arbitrary finite-dimensional system. If $\Theta^{A\to B}$ is furthermore a superchannel, then it is called an \textbf{entanglement-breaking superchannel} (EBSC).
\end{definition}

\noindent In this work we primarily focus on EBSCs which is a more physical object. However, in order to better understand the properties of EBSCs, in this and the next subsections, we will also look at EB supermaps. Our first result provides several equivalent characterizations of EB supermaps and EBSCs, in analogy to Proposition~\ref{Prop:EBC} for EBCs.


\begin{theorem}\label{th:EBSC}
For a CP supermap $\Theta^{A\to B}$, the following are equivalent.
\begin{enumerate}[label=(\Alph*)]
\item $\Theta^{A\to B}$ is an EB supermap.
\item $\1^{A} \otimes\Theta^{\tilde A\to B}[\Phi_+^{A\tilde A}]$ is separable for $\Phi_+^{A\tilde A}(\rho) = \Tr(\rho\phi_+^{ A_0\tilde{A}_0})\phi_+^{A_1\tilde{A}_1}$.
\item $\mc J_\Theta^{AB}$ is separable with respect to $A:B$.
\item $\Theta$ can be realized with pre/post-processing CP maps $\Gamma_\mathrm{pre}^{B_0\to A_0E}$ and $\Gamma_\mathrm{post}^{A_1E\to B_1}$ such that $\Gamma_\Theta^{A_1B_0\to A_0B_1} := \Gamma_\mathrm{post}^{A_1E\to B_1}\circ\Gamma_\mathrm{pre}^{B_0\to A_0E}$ is a separable map (with respect to $A_1A_0:B_0B_1$).
\item $\Delta_\Theta\in \mc L(A\to B)$ defined as the unique map with Choi matrix $J_\Theta^{AB}$ is an entanglement breaking map.
\end{enumerate}
If $\Theta$ is furthermore a superchannel, then the CP conditions in (D) can be strengthened to CPTP.
\end{theorem}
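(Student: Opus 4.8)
The strategy is to prove the cycle of implications (A) $\Rightarrow$ (B) $\Rightarrow$ (C) $\Rightarrow$ (D) $\Rightarrow$ (E) $\Rightarrow$ (A), exploiting the dictionary between a supermap $\Theta^{A\to B}$ and its Choi matrix $\mc J_\Theta^{AB}$ already laid out in the preliminaries. Several of these steps are near-tautologies once the right objects are identified: (A) $\Rightarrow$ (B) is just the special case $\mc E = \Phi_+^{A\tilde A}$ of the defining condition of an EB supermap; (B) $\Rightarrow$ (C) follows because, by the discussion preceding Lemma~\ref{le:superchannel}, $\mc J_\Theta^{AB}$ is precisely the Choi matrix of $(\1^A\otimes\Theta^{\tilde A\to B})[\Phi_+^{A\tilde A}]$, and separability of a CP map $R_0R_1:A_1B_1$ is equivalent to separability of its Choi matrix across the corresponding cut (the same fact used in the proof of Proposition~\ref{Prop:EBC}); and (C) $\Leftrightarrow$ (E) is immediate from the definition $J_{\Delta_\Theta}=\mc J_\Theta^{AB}$ together with Proposition~\ref{Prop:EBC}(B), which says $\Delta_\Theta$ is EB iff its Choi matrix is $A:B$ separable.

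The two substantive steps are (C) $\Rightarrow$ (D) and (D) $\Rightarrow$ (A). For (C) $\Rightarrow$ (D): since $\Theta$ is a CP supermap, $\Gamma_\Theta^{A_1B_0\to A_0B_1}$ is CP with Choi matrix $\mc J_\Theta^{AB}$, and one can always realize a CP supermap via CP pre/post maps $\Gamma_{\pr}^{B_0\to A_0E}$, $\Gamma_{\post}^{A_1E\to B_1}$ with $\Gamma_\Theta = \Gamma_{\post}\circ\Gamma_{\pr}$ (this is the CP-supermap analogue of Lemma~\ref{le:superchannel}(C), which I would cite from \cite{Chiribella2008, Gour2019channel}). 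The point is that $A:B$ separability of $\mc J_\Theta^{AB}$ is the same as $A_1A_0:B_0B_1$ separability of $J_{\Gamma_\Theta}$, so $\Gamma_\Theta$ is itself a separable map in the sense of the first definition; there is nothing further to do once the cut is matched up. For (D) $\Rightarrow$ (A): given $\mc E^{RA}$, one writes $\1^R\otimes\Theta[\mc E^{RA}] = \Gamma_{\post}^{A_1E\to B_1}\circ(\mc E^{RA}\otimes\id^E)\circ\Gamma_{\pr}^{B_0\to A_0E}$ (acting on the $A$ side only), expand $\Gamma_\Theta = \Gamma_{\post}\circ\Gamma_{\pr} = \sum_k \Phi_k^{B_0\to A_0}\otimes\Psi_k^{A_1\to B_1}$ — wait, more carefully, the separable decomposition is across the $A_1A_0$ versus $B_0B_1$ cut, so I would insert this decomposition and verify that the resulting $\1^R\otimes\Theta[\mc E^{RA}]$ splits as a sum of tensor products of a CP map on $RB_0\to RB_1$-type legs against a CP map on the $A$-legs, which is exactly $RB : (\text{trivial})$... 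I need to be careful here that the right bipartition for ``separable map'' in Definition~1 is $RB_1$-side vs. nothing, but actually the output of $\Theta$ lives only on $B$, so $\1^R\otimes\Theta[\mc E^{RA}]\in\mr{CP}(RB_0\to RB_1)$ and separability means the $R$-part factors from the $B$-part; tracking the legs through the composition with a separable $\Gamma_\Theta$ gives precisely this factorization.

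\textbf{Main obstacle.} The routine-but-delicate part is the bookkeeping of which systems sit on which side of each tensor cut as one passes between four isomorphic pictures ($\mbb L(A\to B)$, the Choi matrix $\mc J_\Theta^{AB}$, the comb map $\Gamma_\Theta^{A_1B_0\to A_0B_1}$, and the pre/post realization), and checking that ``$A:B$ separable'' in one picture really does correspond to ``separable map $A_1A_0:B_0B_1$'' in another and to ``$\1^R\otimes\Theta[\mc E^{RA}]$ is $R:B$ separable'' in yet another. I expect no conceptual difficulty, but I would state the leg-relabelings explicitly. The final sentence — that for a superchannel the CP maps in (D) can be taken CPTP — follows by invoking Lemma~\ref{le:superchannel}(C) directly in place of its CP-supermap analogue: a superchannel admits CPTP pre/post maps, and the separability of $\Gamma_\Theta$ is a statement about the composition that is unaffected by whether the constituent maps are merely CP or CPTP, so the same decomposition argument goes through verbatim with CPTP $\Gamma_{\pr},\Gamma_{\post}$.
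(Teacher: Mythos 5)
Most of your plan coincides with the paper's proof: (A)$\Rightarrow$(B) by specialization, (B)$\Leftrightarrow$(C) via $\mc J_\Theta^{AB}=J_{\1\otimes\Theta[\Phi_+^{A\tilde A}]}$, (C)$\Leftrightarrow$(D)$\Leftrightarrow$(E) because $\Gamma_\Theta$ and $\Delta_\Theta$ share the Choi matrix $\mc J_\Theta^{AB}$, and the CPTP upgrade for superchannels via Lemma~\ref{le:superchannel}. The genuine gap is the step that closes the loop back to (A). As you sketch it --- ``insert the separable decomposition of $\Gamma_\Theta=\Gamma_\post\circ\Gamma_\pr$ and track the legs through the composition'' --- the argument cannot be carried out at the level of map composition: the decomposition $\Gamma_\Theta=\sum_k\Phi_k^{A_1\to A_0}\otimes\Psi_k^{B_0\to B_1}$ is a decomposition of the comb with the side-channel $E$ already contracted, and its terms $\Phi_k^{A_1\to A_0}$ map the ``late'' input $A_1$ to the ``early'' output $A_0$, i.e.\ against the causal order of the realization, so there is no composition diagram into which $\mc E^{RA}\otimes\id^E$ can be slotted termwise. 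Knowing only that the contracted comb is separable, you cannot literally re-open the $E$ wire and factor the action on an inserted $\mc E$ by leg-chasing.

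What is missing is the engine the paper uses (stated there as (C)$\Rightarrow$(A)): the Choi/link identity $J^{RB}_{\1\otimes\Theta[\mc E]}=\Tr_A\bigl(\mc J^{AB}_\Theta\,(J^{RA}_{\mc E})^{\Gamma_A}\bigr)$ (identity factors omitted), obtained from Eq.~\eqref{eq:choi_trans} together with $\mc J^{RA\tilde RB}_{\1\otimes\Theta}=\phi_+^{R_0\tilde R_0}\otimes\phi_+^{R_1\tilde R_1}\otimes\mc J^{AB}_\Theta$. Substituting a separable decomposition $\mc J_\Theta^{AB}=\sum_k M_k^A\otimes N_k^B$ gives $J^{RB}_{\1\otimes\Theta[\mc E]}=\sum_k\Tr_A\bigl(M_k^A(J^{RA}_{\mc E})^{\Gamma_A}\bigr)\otimes N_k^B$, and each $R$-factor is positive since $\Tr_A\bigl(M_k^A(J^{RA}_{\mc E})^{\Gamma_A}\bigr)=\Tr_A\bigl((M_k^A)^{\mr T}J^{RA}_{\mc E}\bigr)\ge 0$; hence every output is $R:B$ separable and $\Theta$ is EB. With this computation supplied (equivalently: prove (C)$\Rightarrow$(A) directly and obtain your (D)$\Rightarrow$(A) from the trivial (D)$\Rightarrow$(C)), your proposal becomes essentially the paper's proof; without it, the crucial direction is only asserted.
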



Sketch of the proof of Theorem~\ref{th:EBSC}: $(B),~(D),~(E)$ correspond to three different CP maps that have the same Choi matrix as $\Theta$, as discussed in Sec.~\ref{sec:pre}, so it is easy to show they are equivalent to $(C)$. $(A)\Rightarrow(B)$ is by definition of EBSCs. $(C)\Rightarrow (A)$ can be shown by considering the Choi matrix of $\1\otimes\Theta[\mc E]$ for any bipartite CP map $\mc E$.

\begin{proof}
$(A)\Rightarrow(B)$ is by Def.~\ref{de:EBSC} of EBSCs. $\1^{A} \otimes\Theta^{\tilde A\to B}[\Phi_+^{A\tilde A}]$ is separable if and only if its Choi matrix $J_{\1\otimes\Theta[\Phi_+^{A\tilde A}]}$ is separable with respect to $A:B$. Since $\mc J^{AB}_\Theta = J_{\1\otimes\Theta[\Phi_+^{A\tilde A}]}$ by definition, we see that $(B)\Leftrightarrow(C)$.

We now prove $(C)\Rightarrow(A)$. For any bipartite map $\mc E^{RA}\in\mathrm{CP}(R_0A_0\to R_1A_1)$, according to Eq.~\eqref{eq:choi_trans}, we have
\begin{align}
    J^{\tilde{R}B}_{\1\otimes\Theta[\mc E]}&=\Tr_{RA}\left(\mc J_{\1\otimes \Theta}^{RA\tilde{R}B}\left( (J^{RA}_{\mc E})^\mr T \otimes I^{\tilde{R}B} \right) \right)\\
    &= \Tr_{RA}\left(\left( \phi_+^{R_0\tilde{R}_0}\otimes\phi_+^{R_1\tilde{R}_1}\otimes \mc J_{\Theta}^{AB}\right)\left( (J^{RA}_{\mc E})^\mr T \otimes I^{\tilde{R}B} \right) \right)\\
    &= \Tr_A\left( \mc J^{AB}_\Theta \left((J^{\tilde{R}A}_{\mc E})^{\Gamma_A}\otimes I^{B}\right) \right),
\end{align}
where the superscript $\Gamma_A$ denotes partial transpose on system A. Since system $R$ is actually unchanged, let $\tilde{R} = R$ and rewrite this equation as
\begin{equation}\label{eq:prime}
    J^{RB}_{\1\otimes\Theta[\mc E]} = \Tr_A\left( \mc J^{AB}_\Theta \left(J^{RA}_{\mc E}\right)^{\Gamma_A} \right),
\end{equation}
where we omit the identity operator. Now $\mc J_\Theta^{AB}$ is separable, which means it can be written as
\begin{equation}\label{eq:sepDecomp}
    \mc J_\Theta^{AB} = \sum_{k=1}^{K} M^A_k\otimes N^B_k
\end{equation}
for some $M_k^A\in \mc P(A)$, $N_k^B \in \mc P(B)$, and positive integer $K$. Substituting into Eq.~\eqref{eq:prime}, we get
\begin{equation}
    J^{RB}_{\1\otimes\Theta[\mc E]} = \sum_k \Tr_A\left(M^A_k\left(J^{RA}_{\mc E}\right)^{\Gamma_A} \right)\otimes N^B,
\end{equation}
which means $J^{RB}_{\1\otimes\Theta[\mc E]}$ is separable, and hence $\1^R\otimes\Theta^{A\to B}[\mc E^{RA}]$ is separable. Since $\mc E^{RA}$ is an arbitrary bipartite CP map, we conclude that $\Theta$ is an EBSC, which completes the proof of $(C)\Rightarrow(A)$.

This has established the equivalence $(A)\Leftrightarrow(B)\Leftrightarrow(C)$. As for the equivalence of $(C),~(D),~(E)$, we know that the Choi matrix of $\Gamma_\Theta^{A_1B_0\to A_0B_1}$ and $\Delta_\Theta^{A\to B}$ are both $\mc J^{AB}_\Theta$, as shown in the last section. Therefore, the separability of $\mc J^{AB}_\Theta$ with respect to $A:B$ is equivalent to the separability of $\Gamma_\Theta^{A_1B_0\to A_0B_1}$ with respect to $A_1A_0:B_0B_1$, which is also equivalent to the fact that $\Delta^{A\to B}$ is entanglement breaking. 

The final remark on the case $\Theta$ being a superchannel is by Lemma~\ref{le:superchannel} that the pre/post-processing map of $\Theta$ can always be chosen to be CPTP (with $\Gamma_\Theta^{A_1B_0\to A_0B_1}$ unchanged).
\end{proof}

\begin{figure}[!htb]
    \centering
    \includegraphics[width=0.6\columnwidth]{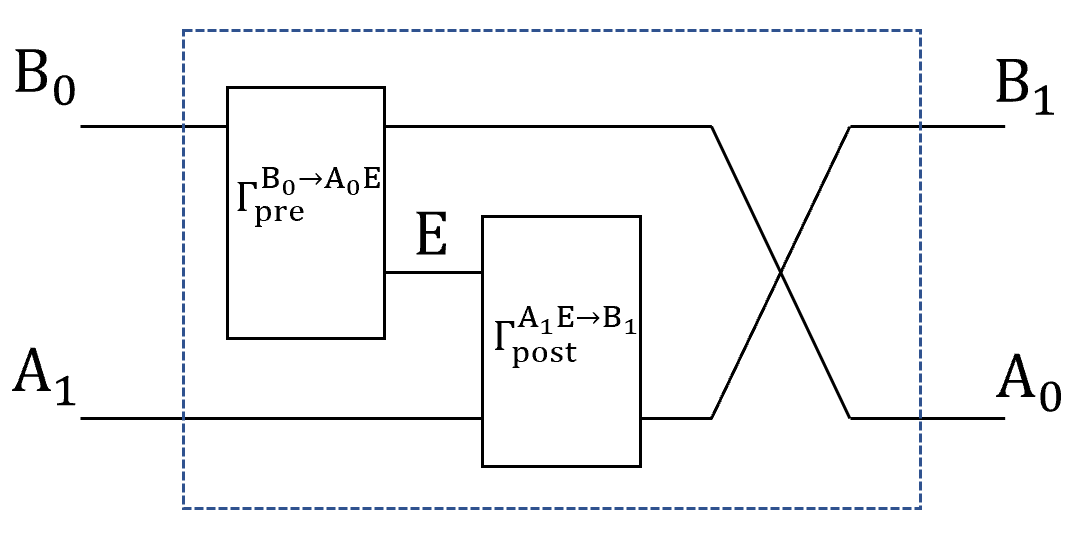}
     \caption{The map $\Gamma_{\Theta}^{A_1B_0\to A_0B_1}=\Gamma_\mathrm{post}^{A_1E\to B_1}\circ\Gamma_\mathrm{pre}^{B_0\to A_0E}$.  For any EB supermap $\Theta^{A\to B}$, Theorem~\ref{th:EBSC} (D) requires this map to be $A_0A_1:B_0B_1$ separable.} 
     \label{fig:prepost} 
\end{figure}

\subsection{EBSCs realized by EB pre/post-processing maps}

Theorem~\ref{th:EBSC} (D) provides a structural requirement for the pre/post-processing maps of an EBSC.  However, the separability on $\Gamma_\Theta$ only characterizes the \textit{concatenation} of the pre/post-processing maps, as depicted in Fig.~\ref{fig:prepost}.  
It remains unclear what constraints are placed on the pre/post-processing maps individually in order for the resulting superchannel to be entanglement-breaking.  Here, we obtain such a condition when the physical constraint of being trace-preserving is relaxed for the pre/post-processing maps.  Namely, we show that the pre/post-processing maps can be expressed as partly entanglement-breaking maps, as defined below.  While this provides some mathematical insight into the structure of EBSCs, it is not a statement about physical implementation due to the non-TP property of the maps.  When the pre/post-processing maps are required to valid quantum channels, it is unclear whether the pre/post-processing maps can always be taken as party-EB maps.  Instead, we show that entanglement needs to be carried by the side channel $E$ that connects the pre and post-processing channels for the implementation of certain EBSCs.

 

We begin by generalizing the definition of entanglement breaking for CP maps that have bipartite input or output.
\begin{definition}
\label{Defn:partly-EB-out}
A CP map $\mc E^{B_0\to A_0E}$ is called \textbf{partly entanglement breaking for output $A_0$} if $\id^R\otimes\mc E^{B_0\to A_0E}(\rho^{RB_0})$ is separable with respect to $A_0:RE$ for any $\rho\in\mc D(RB_0)$ and any system $R$ with finite dimension.
\end{definition}

\begin{definition}
\label{Defn:partly-EB-in}
A CP map $\mc E^{EA_1\to B_1}$ is called \textbf{partly entanglement breaking for input $A_1$} if $\id^{RS}\otimes\mc E^{EA_1\to B}(\sigma^{RE}\otimes\omega^{SA_1})$ is separable with respect to $S:RB_1$ for any $\sigma^{RE}\in\mc{D}(RE)$ and $\omega^{SA_1}\in\mc{D}(SA_1)$ for any system $R$, $S$ with finite dimension.
\end{definition}

The following two lemmas offer alternative characterizations of partly EB maps, similar to Proposition~\ref{Prop:EBC} for standard EB maps.
\begin{lemma}\label{le:EBout}
For a CP map $\mc E^{B_0\to A_0E}$, the following are equivalent:
\begin{enumerate}[label=(\Alph*)]
    \item $\mc E^{B_0\to A_0E}$ is partly EB for output system ${A}_0$.
    \item $\id^{B_0}\otimes\mc E^{\tilde{B}_0\to A_0E}(\phi_+^{B_0\tilde{B}_0})$ is separable with respect to $A_0:B_0E$.
    \item $\mc E^{B_0\to A_0E}$ can be written as
        \begin{equation}\label{eq:partlyEBout}
            \mc E^{B_0\to A_0E}(\rho^{B_0}) = \sum_k \mc E_k^{B_0\to E}(\rho^{B_0})\otimes \sigma^{A_0}_k,
        \end{equation}
for some $\mc E_k^{B_0\to E}\in\textrm{CP}(B_0\to E)$ and $\sigma^{A_0}_k\in\mc D(B)$ independent of the input.
\end{enumerate}
\noindent If $\mc E$ is furthermore CPTP, then $\{\mc E^{B_0\to E}_k\}_k$ in (C) is a quantum instrument, which means $\sum_k\mc E_k$ is CPTP.

\end{lemma}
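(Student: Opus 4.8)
The plan is to prove the chain of equivalences $(A)\Leftrightarrow(B)\Leftrightarrow(C)$ by mimicking the proof of Proposition~\ref{Prop:EBC}, with the Choi-Jamiolkowski picture doing the bookkeeping. First I would establish $(A)\Rightarrow(B)$: this is just the special case $R=B_0$ and $\rho^{RB_0}=\phi_+^{B_0\tilde B_0}$ (suitably normalized) of Definition~\ref{Defn:partly-EB-out}, so nothing is needed beyond reading off the definition. The converse $(B)\Rightarrow(A)$ is the crux of the first half: given $\rho^{RB_0}\in\mc D(RB_0)$, write $\rho^{RB_0}=\id^R\otimes\mc R^{\tilde B_0\to B_0}(\phi_+^{R\tilde B_0})$ for an appropriate CP map $\mc R$ (its Choi matrix is $\rho$ up to a transpose), so that $\id^R\otimes\mc E^{B_0\to A_0E}(\rho^{RB_0}) = \id^R\otimes(\mc E\circ\mc R)(\phi_+^{R\tilde B_0})$; then by linearity and the fact that $\id^{B_0}\otimes\mc E(\phi_+)$ is $A_0:B_0E$-separable, pushing $\mc R^{\tilde B_0\to B_0}$ (acting only on the $B_0$-side and hence only affecting the $B_0E$ factor, never the $A_0$ factor) through the separable decomposition preserves separability across $A_0:RE$. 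Equivalently, one can argue entirely at the level of Choi matrices: $(B)$ says $J_{\mc E}^{B_0A_0E}$ is $A_0:B_0E$-separable, and $\id^R\otimes\mc E(\rho^{RB_0})=\Tr_{B_0}[(\rho^{RB_0})^{T_{B_0}}\otimes I^{A_0E}\, J_{\mc E}^{B_0A_0E}]$ manifestly inherits separability in $A_0:RE$ from a decomposition $J_{\mc E}^{B_0A_0E}=\sum_k P_k^{B_0E}\otimes Q_k^{A_0}$.

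Next I would handle $(B)\Leftrightarrow(C)$. The direction $(C)\Rightarrow(B)$ is immediate: if $\mc E(\rho)=\sum_k \mc E_k(\rho)\otimes\sigma_k^{A_0}$, then applying $\id^{B_0}\otimes(\cdot)$ to $\phi_+^{B_0\tilde B_0}$ gives $\sum_k \big(\id^{B_0}\otimes\mc E_k(\phi_+)\big)\otimes\sigma_k^{A_0}$, a sum of product positive operators across $A_0:B_0E$, hence separable. For $(B)\Rightarrow(C)$, take a separable decomposition of the Choi matrix $J_{\mc E}^{B_0A_0E}=\sum_k R_k^{B_0E}\otimes Q_k^{A_0}$ with $R_k\ge 0$, $Q_k\ge 0$; normalize each $Q_k$ to a density matrix $\sigma_k^{A_0}$ (absorbing scalars into $R_k$), let $\mc E_k^{B_0\to E}$ be the CP map whose Choi matrix is $R_k^{B_0E}$, and check via the inverse Choi formula that $\mc E(\rho)=\sum_k\mc E_k(\rho)\otimes\sigma_k^{A_0}$; the $\sigma_k$ are manifestly input-independent. (One should note the range of $k$ may need reducing to the span, as in the statement, but that is cosmetic.)

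Finally, the CPTP addendum: suppose $\mc E$ is CPTP. Apply $\Tr_{A_0}$ to Eq.~\eqref{eq:partlyEBout}: since $\mc E$ is trace-preserving, $\Tr_{A_0E}\mc E(\rho)=\Tr\rho$, while the right side gives $\Tr_E\big(\sum_k\mc E_k(\rho)\big)\cdot 1$ after using $\Tr\sigma_k=1$; more carefully, tracing out only $A_0$ yields $\sum_k\mc E_k^{B_0\to E}(\rho^{B_0})$ (using $\Tr\sigma_k^{A_0}=1$), and this must equal $\Tr_{A_0}\mc E(\rho)$, which is CPTP in $B_0\to E$ because it is the composition of the channel $\mc E$ with the partial trace over $A_0$. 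Hence $\sum_k\mc E_k^{B_0\to E}$ is CPTP and $\{\mc E_k\}_k$ is a quantum instrument. The main obstacle, such as it is, lies in $(B)\Rightarrow(A)$ — being careful that the extra map on the $B_0$-wire, or equivalently the state $\rho$ on $RB_0$, only acts on the $B_0E$ side of the separable cut and therefore cannot spoil $A_0:RE$ separability; once that observation is made explicit the rest is routine Choi-matrix manipulation. I would present the Choi-matrix version of $(B)\Rightarrow(A)$ since it makes the "acts only on one side of the cut" point fully transparent and reuses machinery already set up in the Preliminaries.
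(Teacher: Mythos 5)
Your proposal is correct and follows essentially the same route as the paper: the substantive step is $(B)\Rightarrow(C)$ via a separable decomposition $J_{\mc E}^{B_0A_0E}=\sum_k M_k^{B_0E}\otimes\sigma_k^{A_0}$ with each $M_k$ read as the Choi matrix of a CP map $\mc E_k^{B_0\to E}$, and your CPTP addendum (tracing out $A_0$) is equivalent to the paper's marginal condition $I^{B_0}=\sum_k M_k^{B_0}$; the paper simply closes the cycle through the trivial $(C)\Rightarrow(A)$ rather than your direct $(B)\Rightarrow(A)$. One small caution: your first sketch of $(B)\Rightarrow(A)$ puts the map $\mc R$ on the $B_0$ wire \emph{before} $\mc E$, which cannot be ``pushed through'' the separable decomposition without a ricochet/transpose argument, but your Choi-matrix version (noting each term $\Tr_{B_0}\bigl[(\rho^{RB_0})^{T_{B_0}}(M_k^{B_0E}\otimes I^R)\bigr]$ is positive because it is the output of the CP map $\id^R\otimes\mc E_k$) is fine and is what the paper's route amounts to.
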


\begin{proof}
$(A) \Rightarrow (B)$ is by definition and $(C)\Rightarrow(A)$ is trivial. It remains to prove $(B)\Rightarrow(C)$. Note $(B)$ says that the Choi matrix $J^{B_0A_0E}_{\mc E}$ is separable with respect to $A_0:B_0E$.  Hence we can write
\begin{equation}
    J^{B_0A_0E}_{\mc E} = \sum_k M^{B_0E}_{k}\otimes \sigma^{A_0}_{k}
\end{equation}
for positive operators $M^{B_0E}_k$, $\sigma^{A_0}_k$, and without loss of generality, $\tr(\sigma^{A_0}_k)=1$ for all $k$. Taking $\mc E_k^{B_0\to E}$ to be the unique CP map whose Choi matrix equals $M_K^{B_0E}$ completes the proof. When $\mc E$ is furthermore CPTP, we have $I^{B_0}=J^{B_0}_{\mc E} =\sum_k M^{B_0}_k$.  Hence, each $M^{B_0}_k$ is the Choi matrix for a CP map $\mc{E}_k^{B_0\to E}$ and their sum is trace-preserving.
\end{proof}
\noindent Similar conclusions hold for partly EB maps with a bipartite input.
\begin{lemma}\label{le:EBin}
For a quantum maps $\mc E^{EA_1\to B_1}$, the following are equivalent.
\begin{enumerate}[label=(\Alph*)]
    \item $\mc{E}^{EA_1\to B_1}$ is partly EB for input system $A_1$.
    \item $\id^{A_1E}\otimes\mc E^{\tilde{E}\tilde{A}_1\to B_1}(\phi_+^{A_1\tilde{A}_1}\otimes\phi_+^{E\tilde{E}})$ is separable with respect to $A_1:EB_1$.
    \item $\mc E^{EA_1\to B_1}$ can be written as
        \begin{equation}\label{eq:partlyEBin}
        \mc E^{EA_1\to B_1}(\rho^{EA_1}) = \sum_k \mc E_k^{E\to B_1}\left( \Tr_{A_1}(F^{A_1}_k\rho^{EA_1})\right).
    \end{equation} 
for some $\mc E_k^{E\to B_1}\in CP(E\to B_1)$ and $\{F^{A_1}_k\}\in\mc P(A_1)$.
\end{enumerate}
\noindent If $\mc E$ is furthermore CPTP, then $\{\mc E_k^{E\to B_1}\}_k$ in (C) forms a quantum instrument and $\{F^{A_1}_k\}_k$ forms a POVM.
\end{lemma}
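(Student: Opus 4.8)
The plan is to mirror the proof of Lemma~\ref{le:EBout}, exploiting the Choi–Jamiolkowski correspondence to translate the separability conditions into statements about a Choi matrix. First I would prove $(A)\Rightarrow(B)$: this is immediate by specializing the defining condition of Definition~\ref{Defn:partly-EB-in} to the particular inputs $\sigma^{RE}=\phi_+^{E\tilde E}/d_E$ (so $R=\tilde E$) and $\omega^{SA_1}=\phi_+^{A_1\tilde A_1}/d_{A_1}$ (so $S=\tilde A_1$), up to the irrelevant normalization. For $(C)\Rightarrow(A)$ I would plug the decomposition~\eqref{eq:partlyEBin} into $\id^{RS}\otimes\mc E^{EA_1\to B_1}(\sigma^{RE}\otimes\omega^{SA_1})$; each term becomes $\Tr_{A_1}\bigl(F_k^{A_1}\,\id^S\otimes\id^{A_1}(\omega^{SA_1})\bigr)$ on the $S$-side tensored with $\mc E_k^{E\to B_1}\circ\id^R(\sigma^{RE})$ on the $RB_1$-side, which is manifestly a product across the cut $S:RB_1$; summing over $k$ gives a separable operator.

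The substantive direction is $(B)\Rightarrow(C)$. Condition (B) says the Choi matrix $J^{A_1E B_1}_{\mc E}$, viewed on systems $A_1 \tilde E \to B_1$ — or more precisely the operator $\id^{A_1E}\otimes\mc E^{\tilde E\tilde A_1\to B_1}(\phi_+^{A_1\tilde A_1}\otimes\phi_+^{E\tilde E})$, which after relabeling $\tilde E\to E$, $\tilde A_1\to A_1$ is exactly the Choi matrix of $\mc E^{EA_1\to B_1}$ living on the tripartite space $A_1 E B_1$ — is separable across $A_1 : EB_1$. Hence I can write $J^{A_1 E B_1}_{\mc E}=\sum_k F_k^{A_1}\otimes G_k^{EB_1}$ with $F_k^{A_1}\ge 0$ and $G_k^{EB_1}\ge 0$, and (normalizing so that $\Tr_{B_1} G_k^{E B_1}$ has a fixed trace, or absorbing scalars appropriately) take $\mc E_k^{E\to B_1}$ to be the CP map whose Choi matrix on $E\to B_1$ is $G_k^{EB_1}$. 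The final step is to verify that $\sum_k \mc E_k^{E\to B_1}\bigl(\Tr_{A_1}(F_k^{A_1}\rho^{EA_1})\bigr)$ reproduces $\mc E^{EA_1\to B_1}(\rho^{EA_1})$; this is a routine computation running the inverse Choi formula, $\mc E^{EA_1\to B_1}(\rho) = \Tr_{EA_1}\bigl((\rho^T\otimes I^{B_1})J^{EA_1B_1}_{\mc E}\bigr)$, through the product decomposition, where the partial transpose on $A_1$ converts $F_k^{A_1}$ into the effect appearing under $\Tr_{A_1}$ (possibly replacing $F_k$ by $F_k^T$ throughout, which is harmless since it is still a positive operator).

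For the CPTP addendum, I would impose trace preservation on $\mc E^{EA_1\to B_1}$, which is equivalent to $\Tr_{B_1} J^{EA_1B_1}_{\mc E}=I^{EA_1}$. Substituting the product decomposition gives $\sum_k F_k^{A_1}\otimes \bigl(\Tr_{B_1}G_k^{EB_1}\bigr)=I^{A_1}\otimes I^{E}$; this forces $\Tr_{B_1}G_k^{EB_1}$ to be proportional to $I^E$ for each $k$ (after a suitable choice of decomposition), and by rescaling $F_k$ and $\mc E_k$ one arranges $\Tr_{B_1}G_k^{EB_1}=I^E$ — i.e.~each $\mc E_k^{E\to B_1}$ is CPTP — while $\sum_k F_k^{A_1}=I^{A_1}$, so $\{F_k^{A_1}\}_k$ is a POVM. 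The argument is parallel to the CPTP clause of Lemma~\ref{le:EBout}.

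The main obstacle I anticipate is purely bookkeeping: keeping the tilde-labels, the two maximally entangled factors $\phi_+^{A_1\tilde A_1}$ and $\phi_+^{E\tilde E}$, and the partial transposes straight so that the separability cut $A_1:EB_1$ in (B) lines up correctly with the cut $S:RB_1$ in the definition and with the product structure of~\eqref{eq:partlyEBin}. There is no deep new idea beyond the Choi isomorphism already used for Proposition~\ref{Prop:EBC} and Lemma~\ref{le:EBout}; the content is that the ``input'' system $A_1$ plays the role of the system being measured while $E$ is carried through coherently, so the POVM $\{F_k^{A_1}\}$ and the instrument $\{\mc E_k^{E\to B_1}\}$ emerge directly from the two tensor factors of a separable Choi operator.
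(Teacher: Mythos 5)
Your handling of the equivalences (A)$\Leftrightarrow$(B)$\Leftrightarrow$(C) follows essentially the same route as the paper: (A)$\Rightarrow$(B) by specializing Definition~\ref{Defn:partly-EB-in} to maximally entangled inputs, (C)$\Rightarrow$(A) by direct substitution, and (B)$\Rightarrow$(C) by taking a separable decomposition $J_{\mc E}^{EA_1B_1}=\sum_k F_k^{A_1}\otimes G_k^{EB_1}$, letting $\mc E_k$ be the CP map with Choi matrix $G_k^{EB_1}$, and pushing the inverse Choi formula through the decomposition, with the transpose on the $A_1$ factor absorbed into the effects. That part is correct and matches the paper's proof.

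The CPTP addendum contains a genuine gap. From $\sum_k F_k^{A_1}\otimes\Tr_{B_1}G_k^{EB_1}=I^{A_1}\otimes I^{E}$ you assert that, after a suitable choice of decomposition, each $\Tr_{B_1}G_k^{EB_1}$ is proportional to $I^E$, so that each $\mc E_k^{E\to B_1}$ can be rescaled to a CPTP map while $\{F_k^{A_1}\}_k$ becomes a POVM. The marginal identity forces nothing of the sort (a sum of positive product operators equal to $I\otimes I$ does not constrain the individual factors), and no choice of separable decomposition can repair this in general: a representation of $\mc E^{EA_1\to B_1}$ as a POVM on $A_1$ followed by outcome-dependent \emph{channels} on $E$ is precisely a one-way LOCC implementation across the cut $A_1:EB_1$, whereas Theorem~\ref{Thm:NLWE} is built on partly-EB (i.e.\ $A_1:EB_1$-separable) channels that are not LOCC-implementable (nonlocality without entanglement). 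The remark following the lemma makes the same point explicitly: even when $\mc E$ is trace preserving, the individual $\mc E_k$ need not be proportional to trace-preserving maps, and this is exactly why EBSCs differ from measure-and-prepare superchannels. Note also that your conclusion is not what the lemma claims: the assertion is that $\{\mc E_k\}_k$ forms a quantum instrument, i.e.\ $\sum_k\mc E_k$ is trace preserving, not that each $\mc E_k$ is. The paper's argument for the addendum only uses the $B_1$-marginal of the Choi matrix, $I^{EA_1}=J_{\mc E}^{EA_1}=\sum_k (F_k^{A_1})^{\mr T}\otimes\Tr_{B_1}G_k^{EB_1}$, and reads off the POVM and instrument normalizations from that identity; it never claims the individual maps are trace preserving, and your stronger claim should be removed.
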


\begin{proof}
The proof is very similar to the previous lemma. We only need to show $(B)\Rightarrow(C)$. For any $\rho\in\mc D(EA_1)$,
\begin{align}
    \mc E^{EA_1\to B_1}(\rho^{EA_1}) &= \Tr_{EA_1}(J_{\mc{E}}^{EA_1B_1}(\rho^{EA_1})^{\mr T}) \\
                            &= \sum_k \Tr_{EA_1}\left((F_k^{A_1})^{\mr T}\otimes N_k^{EB_1}(\rho^{EA_1})^{\mr T}\right).
\end{align}
Taking $\mc E_k^{E\to B_1}$ to be the unique CP map whose Choi matrix equals $N_k^{EB_1}$ completes the proof. When $\mc{E}$ is furthermore CPTP, we have that $I^{EA_1}=J_{\mc{E}}^{EA_1}=\sum_k (F_k^{A_1})^{\mr T}\otimes N_k^{E}$.  Hence $\{F_k\}_k$ defines a POVM, and $\{N_k^{EB_1}\}_k$ are Choi matrices for an instrument.
\end{proof}

Note that partly-EB maps are equivalent to bipartite separable maps with one trivial input/output system. That is, $\mc E^{B_0\to A_0E}$ is partly EB for output $A_0$ iff it is a separable map w.r.t. $A_0:B_0E$.  Similarly, $\mc E^{EA_1\to B_1}$ is partly EB for input $A_1$ iff $\mc E^{EA_1\to B_1}$ is separable w.r.t. $A_1:EB_1$.  This can be seen from item (B) in Lemmas \ref{le:EBout} and \ref{le:EBin}. 
One may also notice from Eq.~\eqref{eq:partlyEBout} and Eq.~\eqref{eq:partlyEBin} that partly EB channels take a ``measure-and-prepare'' form. But we must be careful here. Although Eq.~\eqref{eq:partlyEBout} can indeed be interpreted as preparing a state conditioned on the measurement outcome, Eq.~\eqref{eq:partlyEBin} is like preparing a CPTNI map conditioned on the measurement outcome, which in general cannot be done physically.  {Even when $\mc{E}^{EA_1\to B_1}$ is trace-preserving, the individual $\mc{E}_k^{E\to B_1}$ need not be proportional to a trace-preserving map.  This is the core reason why EBSCs differ from prepare-and-measure superchannels as will be discussed later.

To see the relevance of partly EB channels to our problem, let us split system $E$ in definitions \ref{Defn:partly-EB-in} and \ref{Defn:partly-EB-out} into two parts: $E=E_AE_B$ (see Fig.~\ref{fig:conjecture}).  In this case, $\mc E^{B_0\to A_0E_AE_B}$ is partly EB for output $A_0E_A$ iff it is a separable map w.r.t. $A_0E_A:B_0E_B$.  Similarly, $\mc E^{E_AE_BA_1\to B_1}$ is partly EB for input $A_1E_A$ iff $\mc E^{EA_1\to B_1}$ is separable w.r.t. $A_1E_A:B_1E_B$.  Then it is not difficult to show that every EB supermap can be realized with partly-EB pre/post-processing maps of this form.}
\begin{theorem}
\label{thm:new}
For a CP supermap $\Theta\in\mbb L (A\to B)$, the following are equivalent
\begin{enumerate}[label=(\Alph*)]
\item $\Theta^{A\to B}$ is an EB supermap.
\item There exists a CPTP map $\Gamma_\pr^{B_0\to A_0E_AE_B}$ that is partly EB for the output system $A_0E_A$, and a CP map $\Gamma_\post^{A_1E_AE_B\to B_1}$ that is partly EB for the input system $A_1E_A$, such that
\begin{equation}
\Theta^{A\to B}[\mc E^A]=\Gamma_\mathrm{post}^{A_1E_AE_B\to B_1}\circ\left(\mc E^A\otimes\id^{E_AE_B}\right)\circ\Gamma_\mathrm{pre}^{B_0\to A_0E_AE_B}
\end{equation}
\end{enumerate}
\end{theorem}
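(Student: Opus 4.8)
The plan is to prove the equivalence in two steps. The implication $(B)\Rightarrow(A)$ is a direct unfolding of the measure-and-prepare forms of partly EB maps (Lemmas~\ref{le:EBout} and \ref{le:EBin}) followed by an appeal to Theorem~\ref{th:EBSC}. The implication $(A)\Rightarrow(B)$ is the substantive one: I would build the required pre/post-processing from the separable decomposition of $\mc J_\Theta^{AB}$ using a ``classically-controlled teleportation'' gadget, in which the pre-processing simply prepares a fixed Choi-matrix-encoding state and relays $B_0$, while the post-processing does a flag-controlled $\phi_+$ measurement on the $A_1$-leg.

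\medskip\noindent\textit{$(B)\Rightarrow(A)$.} Using the identifications recorded just before the theorem, ``$\Gamma_{\mathrm{pre}}$ partly EB for output $A_0E_A$'' is separability w.r.t.\ $A_0E_A:B_0E_B$, so by Lemma~\ref{le:EBout}(C) (with the relabeling $A_0\to A_0E_A$, $E\to E_B$) one may write $\Gamma_{\mathrm{pre}}^{B_0\to A_0E_AE_B}(\rho)=\sum_i\mc G_i^{B_0\to E_B}(\rho)\otimes\sigma_i^{A_0E_A}$ for CP maps $\mc G_i$ and states $\sigma_i$; likewise by Lemma~\ref{le:EBin}(C), $\Gamma_{\mathrm{post}}^{A_1E_AE_B\to B_1}(\tau)=\sum_j\mc H_j^{E_B\to B_1}\!\left(\Tr_{A_1E_A}[F_j^{A_1E_A}\tau]\right)$ for CP maps $\mc H_j$ and positive operators $F_j$. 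Feeding $\omega^{A_1B_0}$ through $\Gamma_\Theta^{A_1B_0\to A_0B_1}=(\id^{A_0}\otimes\Gamma_{\mathrm{post}})\circ(\id^{A_1}\otimes\Gamma_{\mathrm{pre}})$, the registers $A_0,E_A$ produced by $\Gamma_{\mathrm{pre}}$ are consumed together with the external $A_1$ by the measurement $\{F_j\}$, while $B_0$ is routed through $\mc G_i$ and then $\mc H_j$ into $B_1$; these two subcircuits interact only through the classical labels $i,j$. Making this precise (e.g.\ by inserting an operator decomposition $F_j=\sum_m G^{A_1}_{jm}\otimes H^{E_A}_{jm}$ to split the $A_1$ and $E_A$ registers) gives
\begin{equation}
\Gamma_\Theta=\sum_{i,j}\Phi_{ij}^{A_1\to A_0}\otimes\left(\mc H_j\circ\mc G_i\right)^{B_0\to B_1},\qquad \Phi_{ij}(\alpha):=\Tr_{A_1E_A}\!\left[\left(F_j^{A_1E_A}\otimes I^{A_0}\right)\!\left(\alpha^{A_1}\otimes\sigma_i^{A_0E_A}\right)\right].
\end{equation}
Each $\Phi_{ij}$ is CP (it is the composition of ``$\otimes\,\sigma_i$'', the CP map $\beta\mapsto\Tr_{A_1E_A}[(F_j\otimes I^{A_0})\beta]$ coming from $F_j\ge0$, and discarding), and each $\mc H_j\circ\mc G_i$ is CP; hence $\Gamma_\Theta$ is $A_1A_0:B_0B_1$ separable. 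Since $J_{\Gamma_\Theta}=\mc J_\Theta^{AB}$, Theorem~\ref{th:EBSC} $(C)\Rightarrow(A)$ shows $\Theta$ is an EB supermap.

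\medskip\noindent\textit{$(A)\Rightarrow(B)$.} By Theorem~\ref{th:EBSC}, $\mc J_\Theta^{AB}$ is $A:B$ separable, equivalently $\Gamma_\Theta^{A_1B_0\to A_0B_1}=\sum_{k=1}^{K}\Phi_k^{A_1\to A_0}\otimes\Psi_k^{B_0\to B_1}$ for CP maps $\Phi_k,\Psi_k$. I would take $E_B\cong B_0$ and $E_A\cong\hat A_1\otimes C$ with $\hat A_1\cong A_1$ and $C=\mbb C^K$ a classical flag, set $\mc N:=\sum_k\Tr J_{\Phi_k}>0$, and define
\begin{equation}
\Gamma_{\mathrm{pre}}^{B_0\to A_0\hat A_1CE_B}(\rho^{B_0}):=\left(\tfrac1{\mc N}\sum_k J_{\Phi_k}^{\hat A_1A_0}\otimes\op{k}{k}^C\right)\otimes\rho^{E_B},\qquad \Gamma_{\mathrm{post}}^{A_1\hat A_1CE_B\to B_1}(\tau):=\sum_k\Psi_k^{E_B\to B_1}\!\left(\Tr_{A_1\hat A_1C}\!\left[\left(\mc N\phi_+^{A_1\hat A_1}\otimes\op{k}{k}^C\right)\tau\right]\right).
\end{equation}
Then $\Gamma_{\mathrm{pre}}$ is CPTP (it prepares a fixed normalized state on $A_0E_A$ and relays $B_0$), and being of product form it is partly EB for output $A_0E_A$ by Lemma~\ref{le:EBout}(C); $\Gamma_{\mathrm{post}}$ is CP and of the form in Lemma~\ref{le:EBin}(C) with positive operators $F_k=\mc N\phi_+^{A_1\hat A_1}\otimes\op{k}{k}^C$ on $A_1E_A$, so it is partly EB for input $A_1E_A$. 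It then remains to check that pushing $\mc E^{A_0\to A_1}$ through the prepared state, reading the flag $C$, and evaluating the $\phi_+$ functional on $A_1\hat A_1$ gives, for every $\mc E$,
\begin{equation}
\Gamma_{\mathrm{post}}\circ\left(\mc E\otimes\id^{E_AE_B}\right)\circ\Gamma_{\mathrm{pre}}[\beta^{B_0}]=\sum_k\Tr\!\left[J_{\Phi_k}^{\hat A_1A_0}\left(J_{\mc E}^{A_0A_1}\right)^{\mr T}\right]\Psi_k(\beta^{B_0}),
\end{equation}
which by Eq.~\eqref{eq:choi_trans} and $\mc J_\Theta^{AB}=\sum_k J_{\Phi_k}^{\hat A_1A_0}\otimes J_{\Psi_k}^{B_0B_1}$ equals $\Theta[\mc E](\beta^{B_0})$; equivalently, $\Gamma_{\mathrm{post}}\circ\Gamma_{\mathrm{pre}}=\Gamma_\Theta$, so the constructed supermap has the same Choi matrix as $\Theta$ and hence equals $\Theta$ by the supermap--Choi bijection.

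\medskip The step I expect to require the most care is the verification of the last displayed identity in $(A)\Rightarrow(B)$: one must fix the Choi conventions for $J_{\Phi_k}$ and $\phi_+$, track the partial transpose produced by the $\phi_+$ functional, and carry the normalization constant $\mc N$ through the composition, so that the gadget reproduces $\Theta$ exactly rather than a transposed or rescaled variant. A conceptual point worth recording is that the constructed $\Gamma_{\mathrm{post}}$ is genuinely not trace-preserving — the $\phi_+$ measurement is subnormalized and the conditional maps $\Psi_k$ need not be proportional to channels — which is precisely why the theorem only demands $\Gamma_{\mathrm{post}}$ to be CP, and whether one can always take $\Gamma_{\mathrm{post}}$ to be CPTP is left open.
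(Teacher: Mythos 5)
Your proof is correct and takes essentially the same route as the paper: for $(A)\Rightarrow(B)$ both arguments start from the $A:B$ separable decomposition of $\mc J_\Theta^{AB}$ provided by Theorem~\ref{th:EBSC}, relay $B_0$ into $E_B\cong B_0$, prepare a state on $A_0E_A$, measure $A_1E_A$, and apply the $B$-side CP maps conditionally, reading off the partly-EB property from Lemmas~\ref{le:EBout} and \ref{le:EBin} and matching Choi matrices via Eq.~\eqref{eq:choi_trans}, while your $(B)\Rightarrow(A)$ is just an explicit version of the separability argument the paper declares obvious from Fig.~\ref{fig:conjecture}. The only differences are cosmetic: the paper fixes $E_A\cong A_0$, prepares $\tfrac{1}{d_{A_0}}\phi_+^{A_0E_A}$ and places the decomposition operators $M_j$ in the post-processing measurement, whereas you flag-encode the Choi operators $J_{\Phi_k}$ in the prepared state and use a fixed $\phi_+$ readout (and your parenthetical suggestion to further decompose $F_j$ in $(B)\Rightarrow(A)$ is unnecessary, since the displayed factorization already has CP factors exactly as you then argue).
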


\begin{figure}[!htb]
    \centering
    \includegraphics[width=0.7\columnwidth]{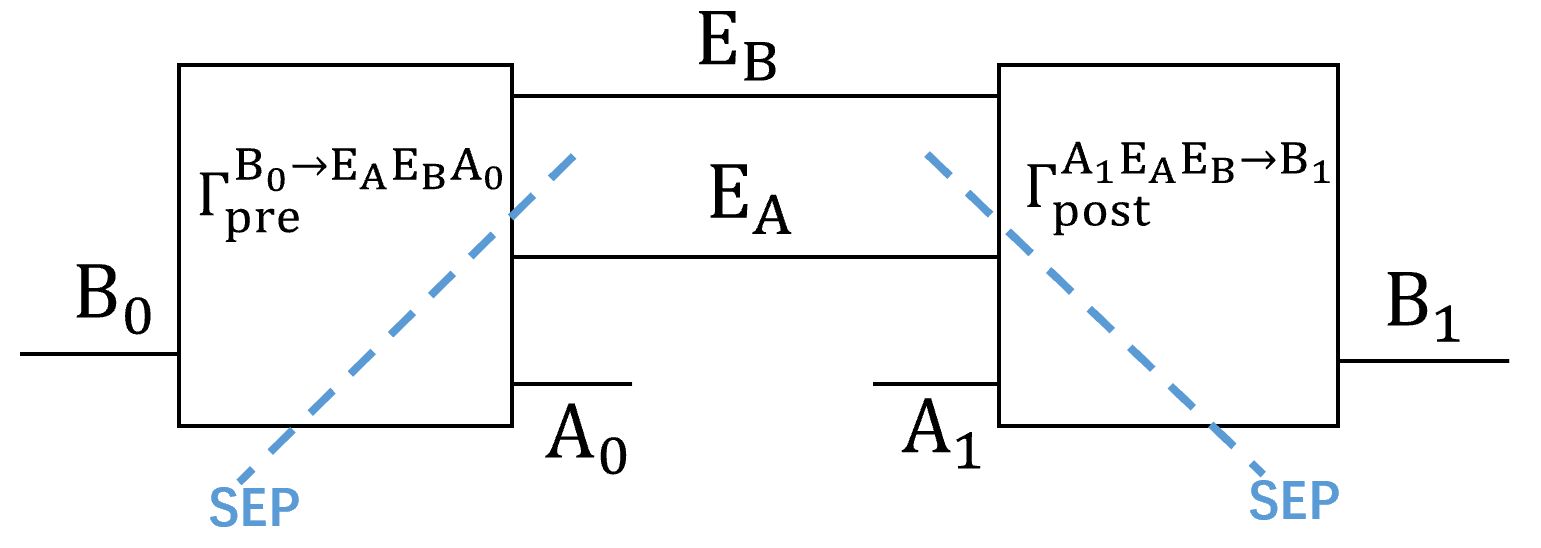}
    \caption{A general way to implement an EB supermap. $\Gamma_\mr{pre}^{B_0\to E_AE_BA_0}$ is partly EB for the output system $E_AA_0$ and $\Gamma_\mr{post}^{A_1E_AE_B\to B}$ is partly-EB for the input system $E_AA_1$.}
    \label{fig:conjecture}
\end{figure}

\begin{proof}
$(B)\Rightarrow (A)$: As is obvious from Fig.~\ref{fig:conjecture}, a CP supermap which can be realized as in (B) must be an EB supermap.

$(A)\Rightarrow (B)$: By Theorem~\ref{th:EBSC}, the Choi matrix of any EB supermap can be written as
\begin{equation}
\mc J_{\Theta}^{AB} = \sum_j M_j^{A}\otimes N_j^{B},
\end{equation}
for $M_j^{A}\in \mc P(A),~N_j^{B}\in \mc P(B)$. Let $E_A\cong A_0,~E_B\cong B_0$. Define the following pre/post-processing map
\begin{align}
\Gamma_\mr{pre}^{B_0\to A_0E_AE_B}(\rho) &= \id^{B_0\to E_B}(\rho)\otimes\frac{1}{d_{A_0}}\phi_+^{A_0E_A},\\
\Gamma_\mr{post}^{A_1E_AE_B\to B_1}(\rho) &= d_{A_0}\sum_j\mc N_j^{E_B\to B_1}\left(\tr_{E_AA_1}\left((M_j^{E_AA_1})^T\rho\right)\right),\label{eq:EB-post}
\end{align}
where $\mc N_j^{B_0\to B_1}$ is defined to be the unique CP map whose Choi matrix equals to $N_j^B$. Obviously these two maps are CPTP and CP respectively.

One can verify that the Choi matrix of $\Gamma_\mr{post}^{A_1E_AE_B\to B_1}\circ\Gamma_\mr{pre}^{B_0\to A_0E_AE_B}$ equals to $\mc J_\Theta^{AB}$, which means that this pair of pre/post-processing maps indeed implements $\Theta^{A\to B}$. On the other hand, by Lemma~\ref{le:EBout} and Lemma~\ref{le:EBin}, we see these two maps are partly-EB with respect to system $E_AA_0$ and $E_AA_1$, respectively. This completes the proof.
\end{proof}

The above theorem provides a structural condition on the pre/post-processing maps that can be used to construct any EB supermap. However, it fails to conclude that every EBSC can be realized by a pair of partly-EB pre/post-processing channels since the construction involves a non-TP map in Eq. \eqref{eq:EB-post}. Nevertheless, we conjecture that it is still possible to realize every EBSC by partly-EB pre/post-processing channels using a different construction than the one given in Theorem \ref{thm:new}.  We leave this conjecture as an open question to pursue elsewhere.  One difficulty in tackling this problem is that the splitting of side channels $E$ into two parts $E=E_AE_B$ makes the general structure of partly-EB pre/post-processing channels rather complex.  
Here we show that an implementation as in Fig.~\ref{fig:conjecture} but with $E_A$ removed is not strong enough to implement all EBSCs. Hence, implementing an EBSC will in general requires entanglement to be distributed across a memory side channel.  To see this, we first make a simple observation.


\begin{proposition}
{Suppose that $\Theta$ is a superchannel realized by a partly EB pre-processing map $\Gamma_{\pr}^{B_0\to A_0E}$ for $A_0$ and any \textrm{CPTP} post-processing map $\Gamma_{\post}^{EA_1\to B_1}$.  Then for any $\rho^{RB_0}$ and $\omega^{SA_1}$, the output state
\begin{equation}
    \id^{RS}\otimes\Gamma_{\post}^{EA_1\to B_1}\circ\Gamma_{\pr}^{B_0\to A_0E}\left(\rho^{RB_0}\otimes\omega^{SA_1}\right)
\end{equation}
is separable across $A_0:RB_1S$.  Similarly, if $\Theta$ is realized by a partly EB post-processing map for $A_1$, then this output state is separable across $A_0RB_1:S$.}
\end{proposition}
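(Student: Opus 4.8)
The plan is to prove both claims by propagating separability through the two stages of the circuit (pre-processing, then post-processing), using repeatedly that separability with respect to a fixed bipartite cut is invariant under applying a channel to one side of that cut. The structural fact that makes everything work is that the input $\rho^{RB_0}\otimes\omega^{SA_1}$ is a product: the pre-processing map $\Gamma_{\pr}^{B_0\to A_0E}$ only touches $B_0$, which is uncorrelated with $SA_1$, and the post-processing map $\Gamma_{\post}^{EA_1\to B_1}$ only touches $E$ and $A_1$.

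For the first statement, I would first apply $\Gamma_{\pr}^{B_0\to A_0E}$. Because $B_0$ is in product with $SA_1$, the resulting state is $\tau^{RA_0E}\otimes\omega^{SA_1}$, where $\tau^{RA_0E}=\id^R\otimes\Gamma_{\pr}^{B_0\to A_0E}(\rho^{RB_0})$. Since $\Gamma_{\pr}$ is partly EB for output $A_0$, Definition~\ref{Defn:partly-EB-out} (with reference system $R$) gives that $\tau^{RA_0E}$ is separable across $A_0:RE$, say $\tau^{RA_0E}=\sum_k p_k\,\sigma_k^{A_0}\otimes\xi_k^{RE}$; hence $\tau^{RA_0E}\otimes\omega^{SA_1}$ is separable across $A_0:RESA_1$. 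Now $\id^{RS}\otimes\Gamma_{\post}^{EA_1\to B_1}$ acts entirely inside the $RESA_1$ block, so it preserves the $A_0:(\cdot)$ cut and maps the state to $\sum_k p_k\,\sigma_k^{A_0}\otimes\Gamma_{\post}^{EA_1\to B_1}(\xi_k^{RE}\otimes\omega^{SA_1})$, which is separable across $A_0:RB_1S$, as required.

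For the second statement the pre-processing map is an arbitrary CPTP map, so after applying it we only know the state is $\tau^{RA_0E}\otimes\omega^{SA_1}$ with $\tau^{RA_0E}$ some (possibly entangled) state; but this is already in exactly the product form $\sigma^{R'E}\otimes\omega^{S'A_1}$ appearing in Definition~\ref{Defn:partly-EB-in}, with $R'=RA_0$ and $S'=S$ (no correlation between the $E$-block and the $A_1$-block was created, since $\Gamma_{\pr}$ left $SA_1$ alone). Applying the partly-EB post-processing map $\Gamma_{\post}^{EA_1\to B_1}$ then yields, directly from Definition~\ref{Defn:partly-EB-in}, that $\id^{RA_0S}\otimes\Gamma_{\post}^{EA_1\to B_1}(\tau^{RA_0E}\otimes\omega^{SA_1})$ is separable across $S:RA_0B_1$, that is, across $A_0RB_1:S$.

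There is no genuine obstacle here; the only thing demanding care is keeping track of which registers sit on which side of the relevant bipartition at each step, and checking that every map applied is local with respect to that cut --- for the first claim the pre-processing lives on the $R$-side of the final $A_0:RB_1S$ cut and the post-processing stays within it, while for the second claim the pre-processing lives on the non-$S$ side of the $A_0RB_1:S$ cut. Once the bookkeeping is arranged this way, both conclusions follow immediately from Definitions~\ref{Defn:partly-EB-out} and~\ref{Defn:partly-EB-in} together with the invariance of separability under one-sided local channels.
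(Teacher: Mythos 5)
Your proof is correct and follows essentially the same route as the paper's: use the partly-EB property of the pre-processing map to obtain $A_0:RE$ separability, then note that the post-processing acts only on the non-$A_0$ side so the cut is preserved, with the second claim handled by invoking the partly-EB-for-input definition on the product state $\tau^{RA_0E}\otimes\omega^{SA_1}$. Your write-up is simply a more explicit version of the paper's brief argument, including spelling out the "analogous" second case.
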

\begin{proof}
By definition, $\sigma^{A_0:RE}=\id^R\otimes\Gamma_{\pr}^{B_0\to A_0E}\left(\rho^{RB_0}\right)$ is separable across $A_0:RE$.  Then $\id^{S}\otimes\Gamma_{\post}^{EA_1\to B_1}(\sigma^{A_0:RE}\otimes\omega^{SA_1})$ will also be separable across $A_0:RB_1S$.  An analogous argument proves the second statement when the post-processing map is partly EB.
\end{proof}

From this proposition, it follows that if $\Theta^{A\to B}$ is an EBSC that can be realized by a partly EB pre-processing (resp. post-processing) map {for system $A_0$ (resp. system $A_1$)}, then we must have that $\mc J_\Theta^{AB}$ is both $A_0A_1:B_0B_1$ separable as well as $A_0:A_1B_0B_1$ (resp. $A_1:A_0B_0B_1$) separable.  It is not difficult to construct superchannels that fail to have this separability structure.
\begin{theorem}\label{Th:non-decomp}
{There exist EBSCs that cannot be realized using either a pre-processing channel that is partly EB for $A_0$ or a post-processing channel that is partly EB for $A_1$.  In other words, in Fig.~\ref{fig:conjecture}, input system $A_0$ must be entangled with some side channel $E_A$ and output system $A_1$ must couple with $E_A$ nonlocally.}

\end{theorem}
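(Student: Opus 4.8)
The plan is to avoid the realization question altogether and argue entirely at the level of Choi matrices. By the Proposition immediately above, if an EBSC $\Theta$ could be realized with a pre-processing channel that is partly EB for $A_0$, then $\mc J_\Theta^{AB}$ would be separable across the cut $A_0:A_1B_0B_1$; symmetrically, a realization with a post-processing channel partly EB for $A_1$ would force $\mc J_\Theta^{AB}$ to be separable across $A_1:A_0B_0B_1$. Hence it suffices to exhibit a single superchannel whose Choi matrix is $A_0A_1:B_0B_1$ separable --- so that the superchannel is an EBSC by Theorem~\ref{th:EBSC}(C) --- but is \emph{not} separable across either $A_0:A_1B_0B_1$ or $A_1:A_0B_0B_1$.

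I would take all four systems $A_0,A_1,B_0,B_1$ to be qubits and set
\begin{equation}
    \mc J_\Theta^{AB}:=\sum_{a,b\in\{0,1\}}\pi_{ab}^{A_0A_1}\otimes\pi_{ab}^{B_0B_1},
\end{equation}
where $\{\pi_{ab}\}_{a,b}$ are the four rank-one projectors onto the two-qubit Bell basis. This operator is manifestly positive semidefinite and manifestly $A_0A_1:B_0B_1$ separable. The first step is the one-line computation that, using the Pauli expansion $\pi_{ab}=\frac14\sum_\mu s^{(ab)}_\mu\,\sigma_\mu\otimes\sigma_\mu$ of the Bell projectors together with the orthogonality of the sign vectors $s^{(ab)}$, collapses the sum to
\begin{equation}
    \mc J_\Theta^{AB}=\frac14\left(\1^{\otimes 4}+X^{\otimes 4}+Y^{\otimes 4}+Z^{\otimes 4}\right),
\end{equation}
i.e.\ the rank-$4$ projector onto the common $+1$ eigenspace of the commuting operators $X^{\otimes 4}$ and $Z^{\otimes 4}$ (note $X^{\otimes 4}Z^{\otimes 4}=Y^{\otimes 4}$). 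From this Pauli form the superchannel marginal conditions of Lemma~\ref{le:superchannel}(B) are immediate: tracing out $B_1$ gives $\mc J_\Theta^{AB_0}=\frac12\1^{A_0A_1B_0}=\mc J_\Theta^{A_0B_0}\otimes u^{A_1}$, and tracing out $A_0$ in addition gives $\mc J_\Theta^{A_1B_0}=I^{A_1B_0}$. So $\mc J_\Theta^{AB}$ is a bona fide superchannel Choi matrix, and by the previous paragraph the corresponding superchannel is an EBSC.

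It then remains to verify the two non-separability statements, which I expect to be routine. Partial transposition on $A_0$ alone flips the sign of the $Y^{\otimes 4}$ term and fixes the rest, so $\bigl(\mc J_\Theta^{AB}\bigr)^{\Gamma_{A_0}}=\frac14(\1^{\otimes 4}+X^{\otimes 4}-Y^{\otimes 4}+Z^{\otimes 4})$; on a joint eigenvector of $X^{\otimes 4},Z^{\otimes 4}$ with eigenvalues $x,z$ one has $Y^{\otimes 4}=X^{\otimes 4}Z^{\otimes 4}$ acting as $xz$, so the eigenvalue there is $\frac14(1+x-xz+z)$, which equals $-\frac12$ for $x=z=-1$ (and that eigenspace, of dimension $4$, is nonempty). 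Hence $\mc J_\Theta^{AB}$ fails the PPT criterion across $A_0:A_1B_0B_1$, so it is not separable there; and since $\frac14\sum_\mu\sigma_\mu^{\otimes 4}$ is invariant under permuting its four tensor factors, the identical argument gives non-separability across $A_1:A_0B_0B_1$. Combined with the Proposition this proves the theorem, and the pictorial restatement about the side channel $E_A$ follows by juxtaposing with Theorem~\ref{thm:new}.

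The one place that needs an idea rather than a calculation is the choice of example. The obvious first guess --- a maximally entangled Choi operator on $A_0A_1$ tensored with an arbitrary channel Choi operator on $B_0B_1$ --- is not a legal superchannel at all, since the marginal condition $\mc J_\Theta^{AB_0}=\mc J_\Theta^{A_0B_0}\otimes u^{A_1}$ forces $A_1$ to be maximally mixed and uncorrelated with $A_0B_0$ after $B_1$ is discarded, which no pure $A_0$--$A_1$ entanglement survives. The resolution is to ``hide'' the $A_0$--$A_1$ entanglement in correlations with $B_1$, so that it washes out in the required marginal yet persists in the full state; the symmetric Bell mixture above is essentially the minimal construction achieving this, and once it is written down the rest is mechanical.
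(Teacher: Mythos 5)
Your proposal is correct, and it follows the same skeleton as the paper's proof: invoke the preceding Proposition to reduce the realizability question to separability of $\mc J_\Theta^{AB}$ across the cuts $A_0:A_1B_0B_1$ and $A_1:A_0B_0B_1$, then exhibit one superchannel whose Choi matrix is $A:B$ separable (hence an EBSC by Theorem~\ref{th:EBSC}) yet entangled across both of those cuts. Where you genuinely diverge is in the example and the verification tools. The paper builds an explicit realization on qubit/qutrit systems ($d_{A_0}=d_{A_1}=d_{B_0}=2$, $d_{B_1}=d_E=3$): an isometric pre-processing followed by a three-outcome channel POVM, computes the resulting Choi matrix by hand, and certifies the entanglement by noting that the local projection onto $\ket{00}^{B_0B_1}$ leaves the maximally entangled state $\tfrac14\phi^{+A_0A_1}$, which no $A_0$- or $A_1$-separable state could do. You instead write down the Choi matrix directly as the symmetric Bell mixture $\tfrac14(\1^{\otimes4}+X^{\otimes4}+Y^{\otimes4}+Z^{\otimes4})$ on four qubits, check the superchannel marginal conditions of Lemma~\ref{le:superchannel}(B) (so that a realization exists without ever displaying one), and certify entanglement via the PPT criterion; your Pauli algebra is right ($X^{\otimes4}Z^{\otimes4}=Y^{\otimes4}$, eigenvalue $-\tfrac12$ on the $x=z=-1$ eigenspace after transposing either $A_0$ or $A_1$), and the permutation symmetry of $\sum_\mu\sigma_\mu^{\otimes4}$ handles both cuts in one stroke. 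Each approach buys something: the paper's construction is an explicit physical implementation (useful later, e.g.\ in the proof of Proposition~\ref{prop:image}), while yours is more economical and symmetric, trading the concrete circuit for Choi-level checks that Lemma~\ref{le:superchannel} then converts into existence of a realization.
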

\begin{proof}
By the previous proposition, it suffices to construct pre/post-processing maps $\Gamma_{\pr}^{B_0\to A_0E}$ and $\Gamma_{\post}^{EA_1\to B_1}$ such that
\[\mc J_\Theta^{AB}=\id^{B_0A_1}\otimes\Gamma_{\post}^{E\tilde{A}_1\to B_1}\circ\Gamma_{\pr}^{\tilde{B}_0\to A_0E}\left(\phi^{+A_1\tilde{A}_1}\otimes\phi^{+B_0\tilde{B}_0}\right)\]
is $A_0A_1:B_0B_1$ separable, but neither $A_0:A_1B_0B_1$ nor $A_1:A_0B_0B_1$ is separable.  Let $d_{B_0}=d_{A_0}=d_{A_1}=2$ and $d_{B_1}=d_E=3$.  Define the isometry $\Gamma_{\pr}^{\tilde{B}_0\to A_0E}$ by
\begin{align}
    \ket{0}^{\tilde{B}_0}&\to\tfrac{1}{\sqrt{2}}(\ket{00}+\ket{11})^{A_0E},&\ket{1}^{\tilde{B}_0}&\to\ket{02}^{A_0E}.
\end{align}
Hence $\id^{B_0}\otimes \Gamma_{\pr}^{\tilde{B}_0\to A_0E}(\phi^{+B_0\tilde{B}_0})=\op{\tau}{\tau}$, where $\ket{\tau}=\frac{1}{\sqrt{2}}(\ket{00}+\ket{11})^{A_0E}\ket{0}^{B_0}+\ket{02}^{A_0E}\ket{1}^{B_0}$.  The post-processing map $\Gamma_{\post}^{E\tilde{A}_1\to B_1}$ is then a POVM with elements
\begin{align}
    \Pi_0&=\frac{1}{2}\phi^{+E\tilde{A}_1},\notag\\
    \Pi_1&=(I-\op{2}{2})^E\otimes I^{\tilde{A}_1}-\frac{1}{2}\phi^{+E\tilde{A}_1}\notag\\ \Pi_2&=\op{2}{2}^E\otimes I^{\tilde{A}_1}.
\end{align}
This leads to the Choi matrix
\begin{align}
    \mc{J}_\Theta^{AB}=\tfrac{1}{4}&\phi^{+A_0A_1}\otimes\op{00}{00}^{B_0B_1}+\tfrac{1}{2}\left(I^{A_0A_1}-\tfrac{1}{2}\phi^{+A_0A_1}\right)\otimes\op{01}{01}^{B_0B_1}\notag\\
    &+\op{0}{0}^{A_0}\otimes I^{A_1}\otimes\op{12}{12}^{B_0B_1}.
\end{align}
Clearly it is $A:B$ separable, and yet it is entangled for both parts $A_0$ and $A_1$ since projecting onto $\ket{00}^{B_0B_1}$ leads to the maximally entangled state $\tfrac{1}{4}\phi^{+A_0A_1}$.   
\end{proof}




\subsection{EBSCs realized by measure-and-prepare superchannels}

\label{Sect:CMPSC}

On the level of channels, entanglement-breaking is equivalent to measuring and preparing, as described in Eq.~\eqref{Eq:m-p-c}.  A natural question is whether the same holds for EBSCs. To this end, we first need a generalized notion of measurement that also applies to dynamical resources.  Recall that a quantum instrument is a collection of CP maps $\{\mc{E}_k\}_k$ whose sum $\sum_k\mc{E}_k$ is TP.  For a generalized measurement on state $\rho$ described by instrument $\{\mc{E}_k\}_k$, outcome $k$ occurs with probability $p_k=\tr[\mc{E}_k(\rho)]$ and the post-measurement state is $\mc{E}_k(\rho)/p_k$.  In Ref.~\cite{Burniston2019measure}, the authors introduce the concept of a \textit{quantum super-instrument}, which is a set of c-CPTNI (completely CP preserving and trace-non-increasing preserving) supermaps $\{\Theta_x\}$ such that $\sum_x \Theta_x$ is a superchannel.  A quantum super-instrument therefore describes in one sense how a quantum channel can be measured. 
It is proven in \cite{Burniston2019measure} that every super-instrument can be realized with a CPTP pre-processing channel and a post-processing quantum instrument, as shown on the left side of Fig.~\ref{fig:MeasureChannel}.
We are particularly interested in a special kind of super-instrument with trivial output channel, which is a generalization of the positive operator-valued measurement (POVM) to quantum channels, as shown on the right side of Fig.~\ref{fig:MeasureChannel}. Such object has been considered in the study of quantum channel discrimination  \cite{chiribella2008memory, Ziman2008PPOVM} where it is called a tester or a process-POVM, and in the study of quantum games \cite{gutoski2007toward} where it is called a co-strategy. 

\begin{figure}[!htb]
    \centering
    \includegraphics[width=0.9\columnwidth]{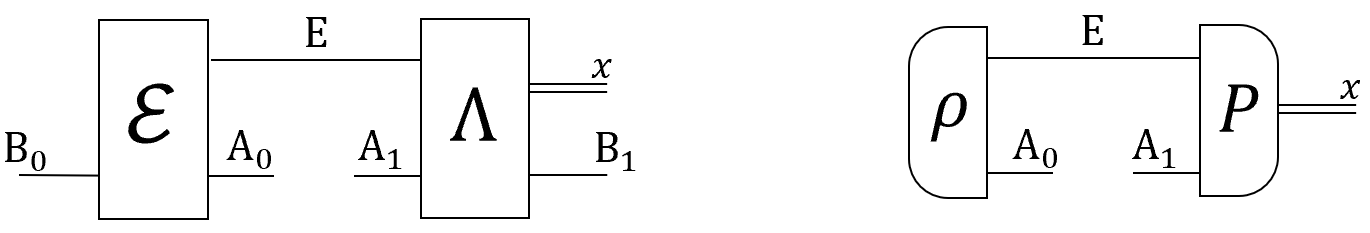}
    \caption{(Left) Implementation of a quantum super-instrument. A CPTP map $\mc E$ is performed as pre-processing while a quantum instrument $\Lambda = \sum_x \Lambda_x\otimes\ket x\bra x$ encompasses the post-processing. (Right) A POVM of quantum channels. The pre-processing is a quantum state $\rho$ and the post processing is a bipartite measurement channel $P = \sum_x\Tr_x(\cdot P_x)\otimes\ket x\bra x$ for POVM $\{P_x\}_x$.}
    \label{fig:MeasureChannel}
\end{figure}

With the concept of a channel POVM in place, we next combine it with a channel preparation step to obtain a measure-and-prepare superchannel.  This is depicted in Fig.~\ref{fig:MPSC} where the channel $\mc{F}^B_x$ is prepared contingent on the outcome $x$ of the proceeding channel POVM.
\begin{definition}\label{de:MPSC}
A superchannel $\Theta^{A\to B}$ is called a \textbf{measure-and-prepare superchannel} (MPSC) if it can be realized as
\begin{equation}
    \Theta^{A\to B}[\mc E^A] = \sum_x \Tr\left({\mr{id}^E\otimes\mc E^A(\rho^{A_0E}) P_x^{A_1E}}\right)\mc F_x^B,
\end{equation}
where $\rho^{A_0E}$ is some quantum state, $\{P_x^{A_1E}\}$ is some POVM, and $\{\mc F_x^B\}$ is a collection of CPTP maps.
\end{definition}

\begin{figure}[!htb]
    \centering
    \includegraphics[width=0.5\columnwidth]{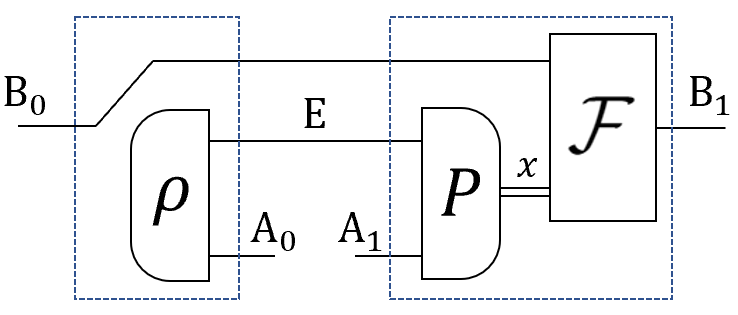}
    \caption{A measure-and-prepare superchannel (MPSC).}
    \label{fig:MPSC}
\end{figure}


There is a good physical motivation to consider MPSCs from the point of view of quantum memories.  An experimenter might want to transform an input channel $\mathcal{E}^A$ into a new channel that could be used at some later time.  For an MPSC, no quantum memory is needed to accomplish this goal.  Input channel $\mathcal{E}^{A}$ can be processed at time $t_0$ with output $x$ stored in classical memory.  Then at some later time $t_1$, system $B_0$ can be directly processed by map $\mathcal{F}_x$.  Note, for this interpretation to hold, it is crucial that $\mathcal{F}_x$ be trace-preserving.  

In the definition of an MPSC, the choice of pre-processing state $\rho^{E_1 A_0}$ provides an extra degree of freedom that is not present in the channel case.  Furthermore, the state $\rho^{E_1A_0}$ is independent of the input system $B_0$.  One may also want to apply a quantum instrument on $B_0$ and generate different pre-processing states conditioned on the measurement outcome of this instrument. We will refer to a superchannel having this type of structure as a controlled measure-and-prepare superchannel, or CMPSC for short.  Its rigorous definition is as follows.
\begin{definition}\label{de:CMPSC}
A superchannel $\Theta^{A\to B}$ is called a \textbf{controlled measure-and-prepare superchannel} (CMPSC) if it can be realized as
\begin{equation}
    \Theta^{A\to B}[\mc E^A] = \sum_{xy}  \Tr\left({\mr{id}^{E_1}\otimes\mc E^A(\rho_y^{A_0E_1}) {P_{x}^{A_1E_1}}}\right)\mc F_x^{E_2\to B_1}\circ\Lambda_y^{B_0\to E},
\end{equation}
where $\{\Lambda_y\}_y$ is a quantum instrument, $\rho_y^{A_0E}$ is some quantum state, {$\{P_{x}^{A_1E}\}_{x}$ is a POVM}, and $\{\mc F_x\}_x$ is a family of CPTP maps.
\end{definition}
The realization of a CMPSC is shown in Fig.~\ref{fig:CMPSC}.  

\begin{figure}[!htb]
    \centering
    \includegraphics[width=0.6\columnwidth]{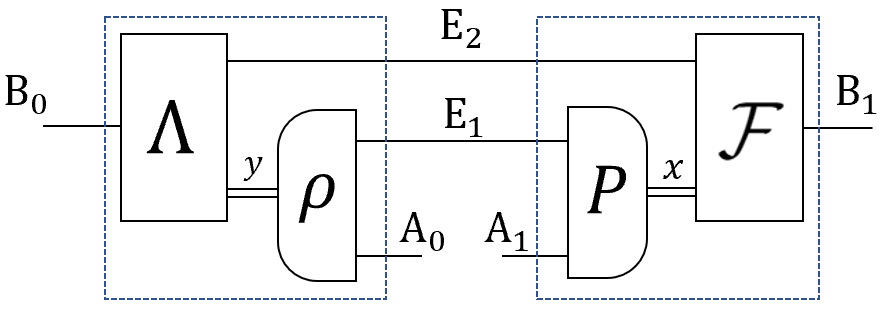}
    \caption{A controlled measure-and-prepare superchannel (CMPSC).}
    \label{fig:CMPSC}
\end{figure}

\noindent One can easily verify that every CMPSC is an EBSC. It is therefore natural to conjecture that every EBSC can be realized as an MPSC, similar to the case of EBCs. However, we will now show below that this conjecture fails to be true.


\begin{figure}[!t]
    \centering
    \includegraphics[width=0.6\columnwidth]{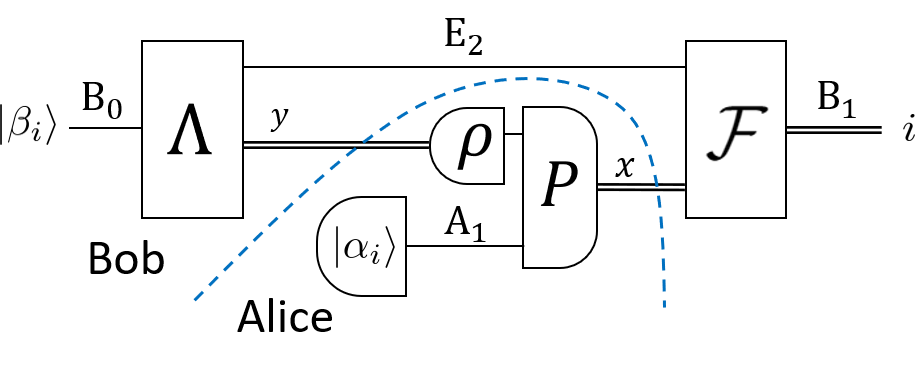}
     \caption{When $A_0$ is one-dimensional, a CMPSC reduces to a 2-way LOCC map on product states.  Since there are separable maps that cannot be implemented by 2-way LOCC, there exist EBSCs that are not CMPSCs.} 
    \label{fig:NLWE-2way}
\end{figure}

\begin{theorem}
\label{Thm:NLWE}
There exist EBSCs that are not CMPSC.
\end{theorem}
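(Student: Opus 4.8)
The plan is to realize the counterexample sketched in Fig.~\ref{fig:NLWE-2way}: an EBSC whose input system $A_0$ is one-dimensional and whose action encodes a separable measurement that is provably not implementable by bounded-round LOCC. Taking $d_{A_0}=1$ is the crucial simplification. With $A_0$ trivial, a ``channel'' $\mc E^A$ is nothing but a state $\omega^{A_1}$, a supermap $\Theta\in\mbb L(A\to B)$ is just a linear map $\mc B(A_1)\to\mc L(B_0\to B_1)$, and the induced map $\Delta_\Theta$ of Theorem~\ref{th:EBSC} becomes the ordinary CP map $A_1\to B_0B_1$, $\omega\mapsto J^{B_0B_1}_{\Theta[\omega]}$. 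By Theorem~\ref{th:EBSC}, such a $\Theta$ is an EB supermap iff $\Delta_\Theta$ is entanglement breaking, and it is an EBSC once it additionally satisfies the marginal conditions of Lemma~\ref{le:superchannel}, which in this degenerate case amount only to requiring that $\Theta[\omega]$ be trace preserving for every state $\omega$.

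Next I would specialize Definition~\ref{de:CMPSC} to $d_{A_0}=1$ and read off its LOCC structure. Here $\id^{E_1}\otimes\mc E^A(\rho_y^{A_0E_1})=\rho_y^{E_1}\otimes\omega^{A_1}$, so, introducing the outcome-$y$-dependent POVM $F^{A_1}_{x|y}:=\Tr_{E_1}\!\big((\rho_y^{E_1}\otimes I^{A_1})P^{A_1E_1}_x\big)$ on $A_1$, one obtains
\[\Theta[\omega^{A_1}](\xi^{B_0})=\sum_{x,y}\Tr\!\big(\omega^{A_1}F^{A_1}_{x|y}\big)\,\big(\mc F^{E\to B_1}_x\circ\Lambda^{B_0\to E}_y\big)(\xi^{B_0}).\]
This is precisely a two-round LOCC protocol across the cut $A_1:B_0B_1$, where $A_1$ is held by the party supplying the superchannel input: the $B$-party runs the instrument $\{\Lambda_y\}$ on $B_0$ and transmits $y$; the $A$-party measures $\{F_{x|y}\}_x$ on $A_1$ and transmits $x$ back; the $B$-party applies the channel $\mc F_x$. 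Consequently, if $\Theta$ is a CMPSC with $d_{A_0}=1$, then for every pair of inputs $\omega,\xi$ the state $\Theta[\omega](\xi^{B_0})$ is obtainable by LOCC acting on $\xi^{B_0}\otimes\omega^{A_1}$.

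I would then construct an EBSC that no such structure can reproduce. Fix an orthonormal product basis $\{\ket{\psi_i}=\ket{a_i}^{B_0}\otimes\ket{b_i}^{A_1}\}$ of the two-qutrit space exhibiting nonlocality without entanglement --- e.g.\ the domino basis of Bennett \emph{et al.} --- so that $\{\ket{\psi_i}\}$ admits no perfect LOCC discrimination across $B_0:A_1$. Take $d_{A_0}=1$, $d_{A_1}=d_{B_0}=3$, $B_1$ a classical register with $9$ values, and define
\[\Theta[\omega^{A_1}](\xi^{B_0})=\sum_i \Tr\!\big((\xi^{B_0}\otimes\omega^{A_1})\op{\psi_i}{\psi_i}\big)\,\op{i}{i}^{B_1}.\]
Since each $\ket{\psi_i}$ is a product vector, $\Delta_\Theta(\omega)=\sum_i\Tr\!\big(\op{b_i}{b_i}\,\omega\big)\,\sigma_i^{B_0B_1}$ for fixed states $\sigma_i^{B_0B_1}$, which is exactly the measure-and-prepare form of Proposition~\ref{Prop:EBC}(C), so $\Delta_\Theta$ is EB. A short computation using $\sum_i\op{a_i}{a_i}\otimes\op{b_i}{b_i}=I$ then confirms that $\Theta[\omega]$ is trace preserving for every state $\omega$ and that the Lemma~\ref{le:superchannel} marginals hold, so $\Theta$ is a bona fide EBSC. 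But were $\Theta$ a CMPSC, the previous step would express it as an LOCC protocol that, fed the inputs $\xi=\op{a_j}{a_j}$ and $\omega=\op{b_j}{b_j}$, outputs $j$ with certainty for each $j$ --- i.e.\ a perfect LOCC discrimination of $\{\ket{\psi_i}\}$ across $B_0:A_1$, a contradiction. Hence $\Theta$ is not a CMPSC.

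The step I expect to be the main obstacle is the reduction in the second paragraph: one must check carefully that the $d_{A_0}=1$ form of Definition~\ref{de:CMPSC} genuinely collapses to an LOCC protocol with only two classical messages, and that no bookkeeping of the auxiliary registers $E,E_1,E_2$ secretly permits extra rounds, since the whole argument rests on a CMPSC being strictly weaker than unrestricted LOCC. A secondary, routine point is to verify that the $\Theta$ defined in the third paragraph is a well-defined supermap on all of $\mc L(A)$ (not merely on channels) and meets the superchannel marginal conditions; this becomes straightforward once $d_{A_0}=1$ trivializes the pre-processing side.
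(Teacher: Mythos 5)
Your proposal is correct and follows essentially the same route as the paper: take $A_0$ trivial, observe that a CMPSC then collapses to a two-round ($B\to A\to B$) LOCC protocol acting on the product input, and derive a contradiction from a separable-but-not-LOCC task, with the paper using a non-LOCC separable channel (citing nonlocality without entanglement and the double-trine ensemble) where you concretely instantiate the domino-basis measurement and verify the EB-supermap and superchannel conditions explicitly.
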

\begin{proof}
The construction uses the fact that there exist channels $\mc{E}^{E_2A_1\to B_1}$ that are partly EB for $A_1$, yet they are not implementable by local operations and classical communication (LOCC).  From the remark following Lemma~\ref{le:EBout}, a partly EB channel for input $A_1$ is an $A_1:E_2B_1$ separable channel.  Let  $\mc{E}^{E_2A_1\to B_1}$ be any such channel whose action on a collection of product states $\{\ket{\alpha_i}^{A_1}\otimes \ket{\beta_i}^{E_2}\}$ cannot be simulated by LOCC.  Such maps arise, for instance, when considering orthogonal product bases that can be distinguished by separable operations but not LOCC, a phenomenon famously known as \textit{nonlocality without entanglement} \cite{Bennett-1999a}.  Another example is the so-called double-trine ensemble \cite{Peres-1991a}, given by $\ket{\alpha_i}=\ket{\beta_i}=U^i\ket{0}$ for $i=0,1,2$, where $U=\text{exp}(-i\tfrac{\pi}{3}\sigma_y)$.  While an optimal minimum-error discrimination measurement on $\{\ket{\alpha_i}\otimes\ket{\beta_i}\}_{i=0,1,2}$ can be achieved by a separable map, it cannot be implemented by LOCC \cite{Chitambar-2013a}.  Let $\mc{E}^{E_2A_1\to B_1}$ be any such non-LOCC channel whose action is \[\mc{E}^{E_2A_1\to B_1}:\ket{\alpha_i}\ket{\beta_i}\mapsto\rho_{i}\]
for all $i$.  We can regard this as a post-processing map for a superchannel that converts each state $\ket{\alpha_i}$ into a QC channel having action $\ket{\beta_i}\to \rho_i$.  Since $\mc{E}^{E_2A_1\to B_1}$ is separable, it is clearly an EBSC.  However, with system $A_0$ being one-dimensional, the possible implementation by a CMPSC reduces to two-way LOCC, as shown in Fig.~\ref{fig:NLWE-2way}, which is not possible by construction.  Therefore the superchannel is not CMPSC.
\end{proof}



\section{Superactivation of EBSC}

\label{sec:superactivation}

Entanglement-breaking channels have the property that they are closed under tensor product.  That is, if $\mc{E}^{A_0\to A_1}$ and $\mc{N}^{R_0\to R_1}$ are both EB channels, then so is their tensor product, $[\mc{E}\otimes\mc{N}]^{A_0R_0\to A_1R_1}$.  In this section we consider whether an analogous result holds for EBSCs.  The problem has an added level of complexity when dealing with superchannels since the dynamic nature of channels allows them to be processed in different ways.  On the one hand, two copies of a superchannel can be used for parallel processing, which means that their pre/post-processing occurs simultaneously, as shown in Fig.~\ref{fig:EBSC-parallel}.  Alternatively, the two superchannels can implement a processing of channels in series, such that the output of one superchannel can be used as the input for the other, as shown in Fig. \ref{fig:EBSC-series}.  While the full theory of sequential processing can be described using the formalism of quantum combs \cite{Chiribella-2008a, Chiribella-2009a} or quantum strategies \cite{gutoski2007toward, Gutoski2012distance, Gutoski_PhD}, here we do not need to invoke these to demonstrate the generic superactivation of EBSC.

\begin{figure}[!t]
    \centering
    \includegraphics[width=0.6\columnwidth]{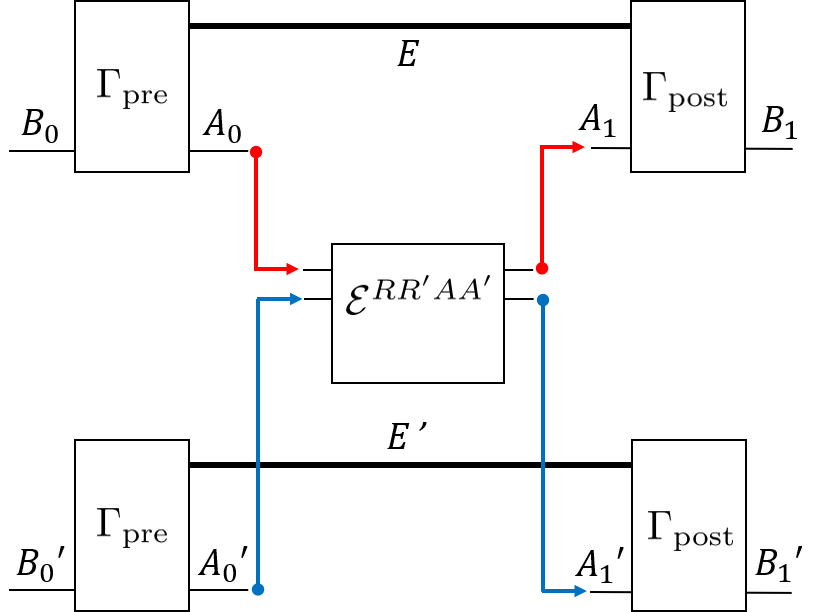}
     \caption{A parallel implementation of two EBSCs is still an EBSC.} 
     \label{fig:EBSC-parallel}
\end{figure}

\begin{theorem}
Suppose $\Theta^{A\to B}$ is an EBSC.  Then two copies of $\Theta^{A\to B}$ will be an EBSC when used in parallel.  However, if $\mc{J}_{\Theta}^{A_1B_0B_1}$ is entangled across $A_1B_0:B_1$ and $d_{A_1}\geq d_{B_0}$, then two copies of $\Theta^{A\to B}$ is no longer an EBSC when used in series.
\end{theorem}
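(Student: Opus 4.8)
\emph{The parallel part} is the superchannel analogue of the closure of entanglement-breaking channels under tensor product, established through the Choi-matrix criterion of Theorem~\ref{th:EBSC}. Writing the two copies as $\Theta^{A\to B}$ and $\Theta^{A'\to B'}$ with $A',B'$ duplicates of $A,B$, their parallel use is the supermap $\Theta\otimes\Theta\in\mbb L(AA'\to BB')$; tensoring the pre/post realizations furnished by Lemma~\ref{le:superchannel}(C) shows it is again a superchannel, and its Choi matrix, after regrouping tensor factors, equals $\mc J_\Theta^{AB}\otimes\mc J_\Theta^{A'B'}$. Since $\Theta$ is EB, Theorem~\ref{th:EBSC}(C) supplies a decomposition $\mc J_\Theta^{AB}=\sum_k M_k^A\otimes N_k^B$, so the Choi matrix of $\Theta\otimes\Theta$ is $\sum_{k,l}(M_k^A\otimes M_l^{A'})\otimes(N_k^B\otimes N_l^{B'})$, which is manifestly $AA':BB'$ separable; Theorem~\ref{th:EBSC} then certifies that $\Theta\otimes\Theta$ is an EBSC.

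\emph{For the series part}, the plan is to produce a single bipartite input channel on which the series composite $\Xi$ fails to output a separable channel, pinning the obstruction to the hypothesised $A_1B_0:B_1$ entanglement of the marginal $\mc J_\Theta^{A_1B_0B_1}$ --- equivalently, to the failure of the channel $\Tr_{A_0}\circ\Gamma_\Theta^{A_1B_0\to A_0B_1}$, whose Choi matrix is exactly that marginal, to be entanglement breaking. First one fixes a pre/post realization $(\Gamma_\pr^{B_0\to A_0E},\Gamma_\post^{A_1E\to B_1})$ of $\Theta$ and writes $\Xi$ out explicitly as in Fig.~\ref{fig:EBSC-series}; the point at which the hypothesis $d_{A_1}\ge d_{B_0}$ is used is that it lets the channel-output wire $A_1$ of the first copy carry, through an isometric embedding, the external-input wire $B_0$ of the second copy. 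That $\Xi$ remains a superchannel is immediate from its CPTP building blocks.

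Next, feed into $\Xi$'s channel slot a bipartite discard-and-prepare channel $\mc E^{RA}$ ($R$ a reference) that ignores $A_0$ and prepares one half of an unnormalized maximally entangled state between $R$ and the portion of the output wire that the series wiring routes into the second copy, filling any remaining slot with a fixed channel. Pushing this through $\Xi$ and evaluating $J_{\1^R\otimes\Xi[\mc E]}$ via Eq.~\eqref{eq:choi_trans}, all inert subsystems can be traced off, leaving --- up to local isometries and a positive scalar --- the operator $\mc J_\Theta^{A_1B_0B_1}$ sitting across the input:output cut of the produced channel, with $A_1$ in the role of the reference. Since that operator is entangled by hypothesis, the produced channel is not separable, so $\Xi$ is not an EB supermap, and in particular not an EBSC.

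\emph{The main obstacle} is the first step of the series argument: specifying $\Xi$ exactly --- the causal order of the two pre-maps, the two post-maps and the two memory registers, and which wires are external, internal, or reference --- and then verifying that the $B_0$--$B_1$ correlations recorded in $\mc J_\Theta^{A_1B_0B_1}$ genuinely survive the composition instead of being averaged away by some intermediate partial trace. That is precisely where $d_{A_1}\ge d_{B_0}$ is invoked: it ensures the second copy's external input can be transmitted undistorted along the first copy's channel-output wire, so no lossy compression is imposed on the $B_0$-register. When $d_{A_1}<d_{B_0}$, the natural wiring would have to squeeze $B_0$ into a smaller space, potentially destroying the entanglement --- consistent with the hypothesis being needed.
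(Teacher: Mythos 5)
Your parallel argument coincides with the paper's: the Choi matrix of the parallel composite is $\mc J_\Theta^{AB}\otimes\mc J_\Theta^{A'B'}$, separability of each factor gives separability of the product, and Theorem~\ref{th:EBSC} finishes the job. No issue there.

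The series half, however, has a genuine gap, and it sits exactly where you yourself flag ``the main obstacle''. The paper's proof \emph{is} the computation you defer: with the wiring of Fig.~\ref{fig:EBSC-series} it feeds \emph{two} specific discard-and-prepare channels, $\mc E_0^{RA}$ preparing $\tfrac{1}{d_{B_0}}\phi_+^{R_1B_0'}$ (this is where $d_{A_1}\ge d_{B_0}$ enters, as you correctly anticipated) and $\mc E_1^{R'A'}$ preparing $\tfrac{1}{d_{A_1}}\phi_+^{R_1'A_1'}$, and then verifies explicitly that the resulting output is a state $\Omega^{R_1R_1'B_1'}$ proportional to the marginal $\mc J_\Theta^{A_1B_0B_1}$ with $B_0$ relabeled $R_1$ and $A_1$ relabeled $R_1'$, so the assumed $A_1B_0:B_1$ entanglement is witnessed across the $R_1R_1':B_1'$ (Rachel versus Bob) cut. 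Your construction instead uses a single maximally entangled probe and fills ``any remaining slot with a fixed channel''. That is not enough: with a generic fixed channel $\mc F$ in the other slot, the $A_1$ leg of the marginal is contracted against $J_{\mc F}$ rather than purified by an external reference, and the output across the relevant cut reduces (up to relabeling) to the Choi matrix of $\Theta[\mc F]$, i.e., to whether $\Theta[\mc F]$ fails to be EB for that particular CPTP $\mc F$ --- which the hypothesis on $\mc J_\Theta^{A_1B_0B_1}$ does not supply. Note also that $\mc J_\Theta^{A_1B_0B_1}$ \emph{is} separable across $A_1:B_0B_1$ because $\Theta$ is an EBSC, so the only entanglement available is across $A_1B_0:B_1$, and to exploit it both legs $B_0$ and $A_1$ must end up on the reference side; this forces the second slot to carry the specific probe $\mc E_1$ that maximally entangles the slot-output wire with a second reference, not an arbitrary fixed channel. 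Consequently the step ``leaving $\mc J_\Theta^{A_1B_0B_1}$ with $A_1$ in the role of the reference'' does not follow as stated, and the identification of the output with the marginal Choi matrix --- the computation you postpone --- is precisely the substance of the proof that still needs to be carried out.
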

\begin{proof}
When two copies of $\Theta^{A\to B}$ are used in parallel, they become a single superchannel $\Theta^{AA'\to BB'}$.  Its Choi matrix satisfies $\mc{J}_{\Theta}^{AA'BB'}=\mc{J}_{\Theta}^{AB}\otimes\mc{J}_{\Theta}^{A'B'}$, and since two copies of a separable operator is separable, it follows that $\mc{J}_{\Theta}^{AA'BB'}$ represents an EBSC.  On the other hand, suppose that each copy of $\Theta^{A\to B}$ can be used in series.  Let $d_{R_0}=d_{A_0}$, $d_{R_1}=d_{A_1}$, and define the channel $\mc{E}_0^{RA}$ that discards its input and prepares the state $\frac{1}{d_{B_0}}\phi_+^{R_1B'_0}$.  Likewise, let $\mc{E}_1^{R'A'}$ be the channel that discards and prepares $\frac{1}{d_{A_1}}\phi_+^{R'_1A_1'}$.   Then with the wiring shown in Fig.~\ref{fig:EBSC-series}, it follows that the resulting channel will be the state
\begin{equation}
   \Omega^{R_1R_1'B_1'}=d^{-1}_{B_0}d^{-1}_{A_1} \tr_{A_{0}'}\left(\id^{R'_1}\otimes\Gamma_{\post}^{EA_1'\to B_1'}\circ\id^{R_1}\otimes\Gamma_{\pr}^{B_0'\to A_0' E}\left(\phi_+^{R_1B_0'}\otimes\phi_+^{R_1'A_1'}\right)\right).
\end{equation}
However, this is proportional to $\mc{J}_{\Theta}^{A'_1B'_0B'_1}$ with $A_1'$ replaced by $R_1'$ and $B_0'$ replaced by $R_1$.  The $A_1'B'_0:B'_1$ entanglement implies that $\Omega^{R_1R_1'B_1'}$ is entangled between Rachel and Bob.  Hence  $\mc{E}_0^{RA}$ and $ \mc{E}_1^{R'A'}$ is transformed into an entangled channel by two copies of the EBSC $\Theta^{A\to B}$.
\end{proof}

\begin{figure}[!t]
    \centering
    \includegraphics[width=1\columnwidth]{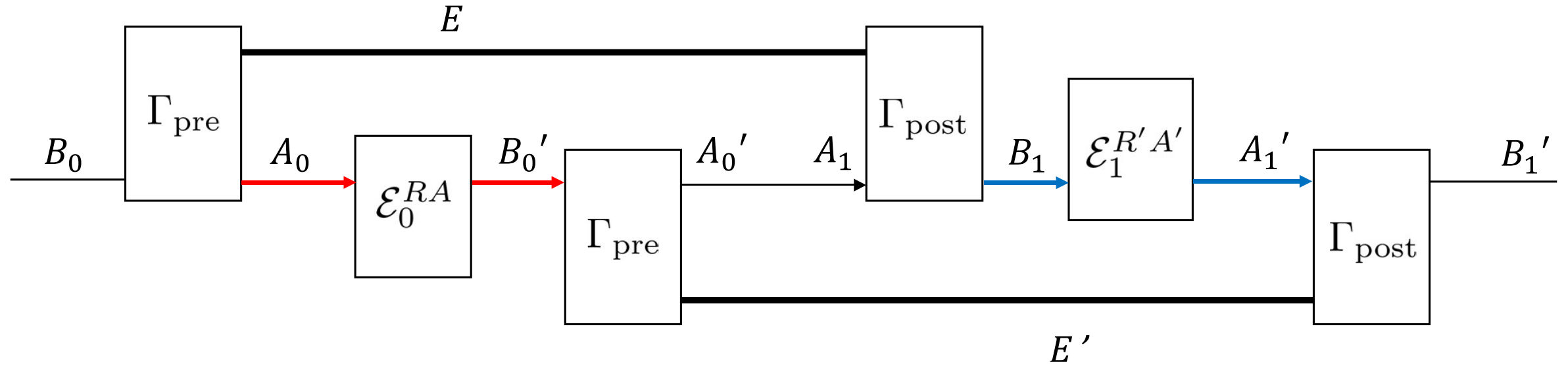}
     \caption{A series implementation of two EBSCs that can yield a non-separable channel.  The channels $\mc{E}_0^{RA}$ and $\mc{E}_1^{R'A'}$ discard their inputs and prepare maximally entangled states.} 
     \label{fig:EBSC-series}
\end{figure}

\section{Image of EBSC}\label{sec:image}

In Section \ref{sec:EBSC} we introduced EBSC and its two subsets MPSC and CMPSC. Here, we investigate the comparative powers of these operational classes.  In particular, we consider whether every bipartite separable channel can be generated by applying an EBSC on one part of some bipartite channel, and whether MPSC/CMPSC has the same channel-generation power as EBSC. For this purpose, we introduce the notion of \emph{CPTP-complete image}.

\begin{definition}
    Given a superchannel $\Theta^{A\to B}$, the \textbf{CPTP image} of $\Theta^{A\to B}$ is defined as
    \begin{equation}
        \mr{Im}_\Theta = \left\{ \mc G^B\in \mr{CPTP}(B) ~|~ \mc G^{B} = \Theta^{A\to B}[\mc E^{A}],~\mc E^A\in\mr{CPTP}(A)\right\}.
    \end{equation}
    The \textbf{CPTP-complete image} of $\Theta^{A\to B}$ is defined as
    \begin{equation}
        \mr{C}_\Theta = \bigcup_R \mr{Im}_{\1^R\otimes\Theta},
    \end{equation}
    which is the union of CPTP images $\1^R\otimes\Theta^{A\to B}$ for all systems $R$. The \textbf{CPTP-complete image} of a set of superchannels $\mc S$ is defined as
    $
        \mr{C}_\mc S = \bigcup_{\Theta\in \mc S}  \mr{C}_\Theta.
    $
\end{definition}
\noindent Similarly, we introduce the \emph{CP-complete image} as follows.
\begin{definition}
\label{de:CPim}
    Given a superchannel $\Theta^{A\to B}$, the \textbf{CP image} of $\Theta^{A\to B}$ is defined as
    \begin{equation}
        \mr{Im}^*_\Theta = \left\{ \mc G^B\in \mr{CP}(B) ~|~ \mc G^{B} = \Theta^{A\to B}[\mc E^{A}],~\mc E^A\in\mr{CP}(A)\right\}.
    \end{equation}
    The \textbf{CP-complete image} of $\Theta^{A\to B}$ is defined as
    \begin{equation}
        \mr{C}^*_\Theta = \bigcup_R \mr{Im}^*_{\1^R\otimes\Theta},
    \end{equation}
    which is the union of CP images $\1^R\otimes\Theta^{A\to B}$ for all systems $R$. The \textbf{CP-complete image} of a set of superchannels $\mc S$ is defined as
    $
        \mr{C}^*_\mc S = \bigcup_{\Theta\in \mc S}  \mr{C}^*_\Theta.
    $
\end{definition}

\noindent Alternatively, we can say that the CPTP image of $\Theta^{A\to B}$ is its image when the domain is restricted to CPTP maps, and likewise the CP image of $\Theta^{A\to B}$ is its image when the domain is restricted to CP maps.



With these definitions in place, a superchannel $\Theta^{A\to B}$ is an EBSC if and only if $\mr C^*_\Theta \subseteq \mr{SEP}^*$.  
Here we use SEP$^*$ to denote the set of separable CP maps while SEP denotes the set of all separable CPTP maps.  An interesting question is whether $\mr C_\mr{EBSC} = \mr{SEP}$ and $\mr C^*_\mr{EBSC} = \mr{SEP}^*$ hold, which will be the main concern of this section.

Let us first consider this question for EBCs.  Note that our definition of CPTP (CP) complete image also applies for channels since they are a special case of superchannels.  Specifically, for a channel $\mc{E}$, the set $\mr{Im}_{\mc{E}}$ (resp. $\mr{Im}^*_{\mc{E}}$) is the image of $\mc{E}$ when the domain is restricted to density operators (resp. positive operators).  In this case, it is easy to see that both $\mr C_\mr{EBC} = \mr{SEP}$ and $\mr C^*_\mr{EBC} = \mr{SEP}^*$ hold. Indeed for an arbitrary separable positive operator $\rho^{RB}=\sum_{i}p_i\op{\psi_i}{\psi_i}^R\otimes\op{\alpha_i}{\alpha_i}^B$, one need only consider the action of the EBC $\mc{E}^{A\to B}$ on the positive operator $\sigma^{RA}=\sum_{i}p_i\op{\psi_i}{\psi_i}^R\otimes\op{i}{i}^A$, where $\mc{E}^A(X)=\bra{i}X\ket{i}^A\op{\alpha_i}{\alpha_i}^B$, and $\sigma^{RA}$ is a density matrix if and only if $\rho^{RB}$ is.

In following part of this section, we set out to study the image of general EBSC. Regarding the CPTP-complete image, while we are unable to precisely characterize $\mr C_\mr{EBSC}$, we relate $\mr C_\mr{MPSC}$ and $\mr C_\mr{CMPSC}$ to LOCC channels of certain communication rounds.  As for the CP-complete image, we show that $\mr C^*_\mr{EBSC}=\mr{SEP}^*$ holds exactly. These results reveal some fundamental differences between channels and superchannels of physical significance, and we discuss this further at the end of this section.

\subsection{CPTP-complete Image of EBSC}




As noted above, one of our primary interests is determining whether $\mr C_\mr{EBSC}=\mr{SEP}$.  We consider here a special subset of the separable channels that can be generated by LOCC.  While LOCC is a notoriously difficult class of operations to analyze, here it will be sufficient to just consider one-round and two-round LOCC.  A precise definition of these operational classes is as follows.
\begin{definition}\label{de:LOCC2}\cite{Chitambar2014}
A channel $\mc E\in \mr{CPTP}(R_0B_0\to R_1B_1)$ is called a ($B\to R\to B$) two-round LOCC channel if it can be written as
\begin{equation}
    \mc E^{RB} = \sum_{ij}\Gamma^{R_0\to R_1}_{j|i}\otimes(\mc F^{B_2\to B_1}_{ij}\circ\Lambda^{B_0\to B_2}_i )
\end{equation}
for some quantum instrument $\{\Lambda_i^{B_0\to B_2}\}_i$ and family of channels $\{\mc F^{B_2\to B_1}_{ij}\}_{ij}$ for Bob, and a family of instruments $\{\Gamma^{R_0\to R_1}_{j|i}\}_{i,j}$ for Rachel.  An ($R\to B$) one-way LOCC channel is defined similarly except with the added condition that all the $\Lambda_i$ are trivial; \textit{i.e.}~$\mc{E}^{RB}=\sum_{j}\Gamma_j^R\otimes\mc{F}_j^B$.
\end{definition}




\begin{proposition}\label{prop:image}
The CPTP-complete images of EBSC, MPSC and CMPSC satisfy the following relation.
\begin{equation}\label{eq:image}
    \mr C_\mr{MPSC} = \mr{LOCC_1} \subsetneq \mr{LOCC_2} = \mr C_{\mr{CMPSC}} \subsetneq \mr C_\mr {EBSC} \subseteq \mr{SEP},
\end{equation}
where $\mr{LOCC_1}$ denotes the set of all ($R\to B$) one-round LOCC channels, $\mr{LOCC_2}$ denotes the set of all ($B\to R\to B$) two-round LOCC channels, and $\mr{SEP}$ denotes the set of all bipartite separable channels.
\end{proposition}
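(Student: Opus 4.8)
The plan is to establish the chain link by link. The rightmost inclusion is immediate: if $\Theta$ is an EBSC, then for every $\mc E^A\in\mr{CPTP}(A)$ and every $R$ the channel $\1^R\otimes\Theta[\mc E^{RA}]$ is separable by Definition~\ref{de:EBSC} and trace preserving because $\Theta$ is a superchannel, so $\mr{Im}_{\1^R\otimes\Theta}\subseteq\mr{SEP}$; taking unions gives $\mr C_\mr{EBSC}\subseteq\mr{SEP}$. The inclusion $\mr{LOCC_1}\subseteq\mr{LOCC_2}$ holds because an $R\to B$ one-round protocol is the special case of a $B\to R\to B$ protocol in which Bob's first instrument is trivial, and $\mr{LOCC_2}\subseteq\mr C_\mr{EBSC}$ will follow from $\mr{LOCC_2}=\mr C_\mr{CMPSC}$ together with the already-noted fact that every CMPSC is an EBSC, whence $\mr C_\mr{CMPSC}\subseteq\mr C_\mr{EBSC}$.

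For $\mr C_\mr{MPSC}=\mr{LOCC_1}$ I would unfold Definition~\ref{de:MPSC} in both directions. Given an MPSC $\Theta$ realized by $(\rho^{A_0E},\{P_x^{A_1E}\},\{\mc F_x^B\})$ and a map $\mc E^{RA}\in\mr{CPTP}(RA)$, a direct computation gives
\[
\1^R\otimes\Theta[\mc E^{RA}]=\sum_x \Lambda_x^{R_0\to R_1}\otimes\mc F_x^{B_0\to B_1}, \quad \Lambda_x(\cdot)=\Tr_{A_1E}\big[(I^{R_1}\otimes P_x^{A_1E})(\mc E^{R_0A_0\to R_1A_1}\otimes\id^E)(\cdot\otimes\rho^{A_0E})\big],
\]
and since $\sum_xP_x=I$ and $\mc E$ is trace preserving, $\{\Lambda_x\}$ is a quantum instrument on $R$; this is exactly the $R\to B$ one-round form of Definition~\ref{de:LOCC2}. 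Conversely, for $\mc G^{RB}=\sum_x\Gamma_x^{R_0\to R_1}\otimes\mc F_x^{B_0\to B_1}$ with $\{\Gamma_x\}$ an instrument, I take $E$ and $A_0$ trivial, let $\mc E^{RA}\in\mr{CPTP}(R_0\to R_1A_1)$ be the ``instrument-with-pointer'' map $\rho^{R_0}\mapsto\sum_x\Gamma_x(\rho^{R_0})\otimes\op xx^{A_1}$, and let $\Theta$ be the MPSC that measures $A_1$ in the pointer basis and prepares $\mc F_x$ upon outcome $x$; then $\1^R\otimes\Theta[\mc E^{RA}]=\mc G$.

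The equality $\mr C_\mr{CMPSC}=\mr{LOCC_2}$ is proved the same way, now with the controlled structure of Definition~\ref{de:CMPSC}. Unfolding $\1^R\otimes\Theta[\mc E^{RA}]$ for a CMPSC shows that Bob first applies the instrument $\{\Lambda_y^{B_0\to E_2}\}$, obtains $y$, this $y$ selects the pre-processing state $\rho_y$ fed into the $R$-side map $\mc E$, the output of $\mc E$ together with $E_1$ is then measured on the $R$-side with $\{P_x\}$, and finally Bob applies $\mc F_x$ to $E_2$ -- exactly a $B\to R\to B$ channel, with Rachel's conditional instrument $\Gamma_{x|y}$ built from $\rho_y$, $\mc E$, and $\{P_x\}$. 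For the converse, the only subtle point is that a CMPSC's final channel $\mc F_x$ may depend on the second outcome $x$ only, whereas in $\sum_{ij}\Gamma_{j|i}^{R_0\to R_1}\otimes(\mc F_{ij}^{B_2\to B_1}\circ\Lambda_i^{B_0\to B_2})$ Bob's final channel depends on both rounds; I would resolve this by having $\mc E^{RA}$ write the \emph{pair} $(i,j)$, not just $j$, into its $A_1$ output, so that the pointer measurement yields $x=(i,j)$ and $\mc F_x:=\mc F_{ij}$ is well defined. Choosing $\rho_i:=\op ii^{A_0}$ with $E_1$ trivial and $E_2\cong B_2$, and the given $\{\Lambda_i\}$, then produces a CMPSC realizing the target channel.

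For the two strict inequalities: $\mr{LOCC_1}\subsetneq\mr{LOCC_2}$ is the direction-sensitivity of one-round LOCC -- there are bipartite ensembles perfectly discriminable by a single Bob-to-Rachel message but by no single Rachel-to-Bob message (a ``Bob-first'' product ensemble, cf.~\cite{Chitambar2014}), whose measurement channel lies in $\mr{LOCC_2}\setminus\mr{LOCC_1}$. For $\mr{LOCC_2}\subsetneq\mr C_\mr{EBSC}$ I would reuse the nonlocality-without-entanglement construction behind Theorem~\ref{Thm:NLWE}: pick a separable channel $\mc M^{E_2A_1\to B_1}$ realizing a separable measurement not implementable by LOCC in any number of rounds (e.g.\ the double-trine discrimination measurement \cite{Peres-1991a,Chitambar-2013a}), let $\Theta$ be the EBSC with $\Gamma_\pr=\id^{B_0\to E_2}$ and $\Gamma_\post=\mc M$, and feed in the identity channel $\mc E^{R_0\to A_1}$ with trivial $A_0,R_1$; then $\1^R\otimes\Theta[\mc E^{RA}]=\mc M^{R_0B_0\to B_1}\in\mr C_\mr{EBSC}\setminus\mr{LOCC_2}$. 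I expect the CMPSC--LOCC$_2$ bookkeeping -- matching the communication directions and handling the $(i,j)$-versus-$x$ dependence of Bob's last channel -- to be the step requiring the most care; the rest is either definitional or a direct consequence of known LOCC facts and of Theorem~\ref{Thm:NLWE}.
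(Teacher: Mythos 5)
Your proof is correct and follows essentially the same route as the paper: unfold the MPSC/CMPSC realizations for the forward inclusions, construct pointer-type input channels together with explicit superchannels for the converses, and reuse the Theorem~\ref{Thm:NLWE} construction applied to a noiseless input channel to get $\mr{C}_{\mr{CMPSC}}\subsetneq\mr{C}_{\mr{EBSC}}$. The only difference is cosmetic bookkeeping in the $\mr{LOCC}_2\subseteq\mr{C}_{\mr{CMPSC}}$ direction: you have Rachel's channel write the pair $(i,j)$ into $A_1$ (with $E_1$ trivial), whereas the paper keeps a classical copy of $i$ in a register $\hat{E}_2$ on Bob's side that is measured jointly with $A_1$; both devices handle the dependence of $\mc F_{ij}$ on both rounds equally well.
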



\begin{figure}[!b]
    \centering
    \includegraphics[width=0.7\columnwidth]{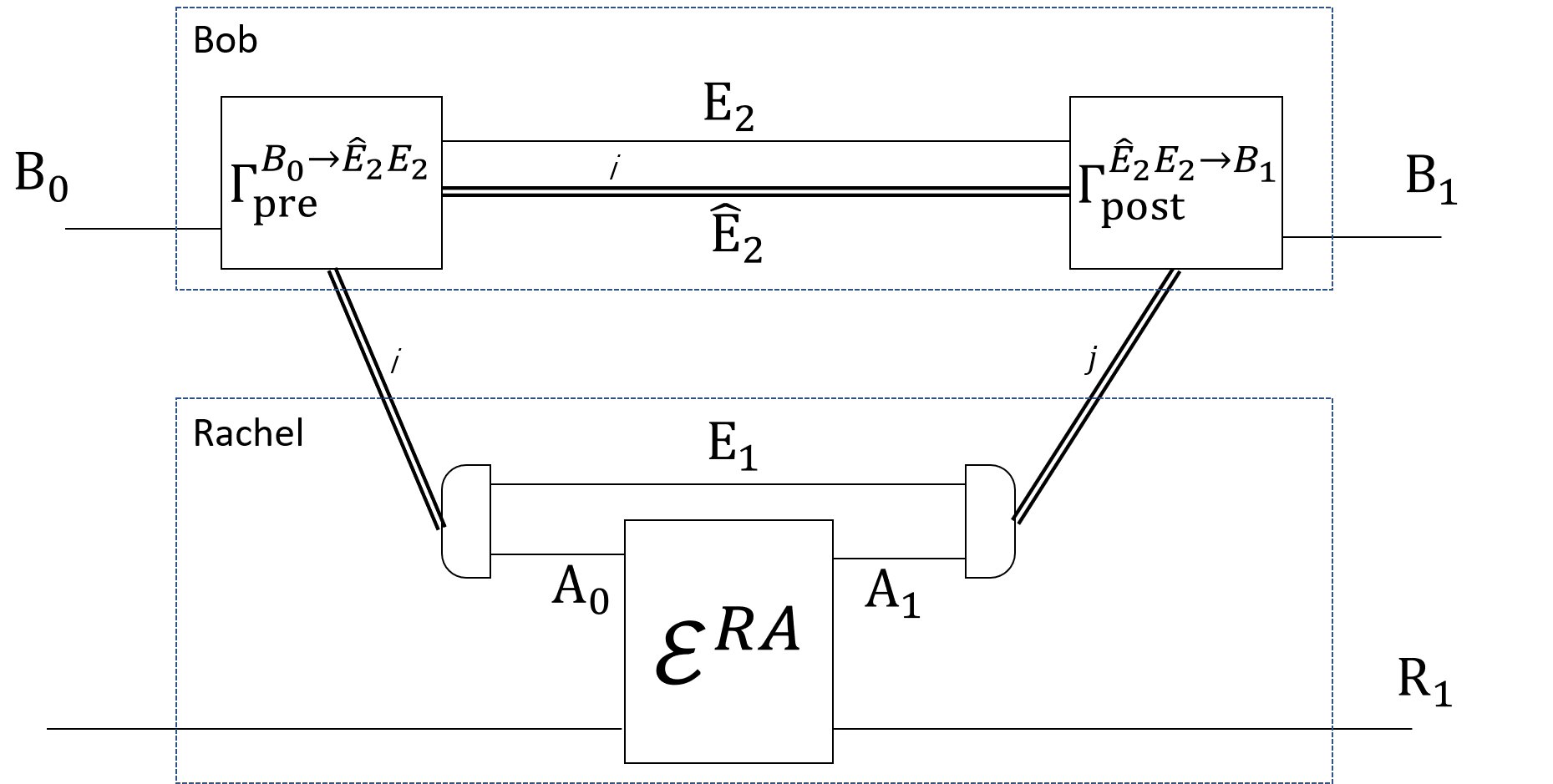}
     \caption{A CMPSC superchannel transforms any $\mc{E}^{RA}$ into a channel $\mc{F}^{RB}$ that can implemented by a two-round LOCC channel from Bob to Rachel.} 
     \label{fig:RA-to-RB-2}
\end{figure}

\begin{proof}
    For the second equality, first notice that if $\Theta$ is CMPSC, then $\1\otimes\Theta[\mc E]$ is a two-round LOCC channel for every quantum channel $\mc E^{RA}$, as can be seen from Fig.~\ref{fig:RA-to-RB-2}.  This establishes $\mr C_{\mr{CMPSC}}\subseteq \mr{LOCC}_2$. On the other hand, consider an arbitrary ($B\to R\to B$) two-round LOCC channel $\mc G$. By definition~\ref{de:LOCC2}, it can be written as
    \begin{equation}\label{eq:31b}
        \mc G^{RB} = \sum_{ij} \Gamma^R_{j|i}\otimes (\mc F^{B_2\to B_1}_{ij}\circ\Lambda^{B_0\to B_2}_i )
    \end{equation}
    for some CPTP maps $\mc F^B_{ij}$ and deterministic quantum instruments $\{\Lambda^B_i\}$, $\{\Gamma^R_{j|i}\}$. Set $B_2=E_2$ and construct a superchannel $\Theta$ with the following realization:
    \begin{align}
        \Gamma^{B_0\to A_0\hat{E}_2E_2}_\mr{pre}(\cdot) &= \sum_i \ket i\bra i^{A_0}\otimes\op{i}{i}^{\hat{E}_2}\otimes\Lambda^{B_0\to E_2}_i(\cdot),\\
        \Gamma^{A_1\hat{E}_2E_2\to B_1}_\mr{post}(\cdot) &= \sum_{i,j} \bra j \cdot \ket j^{A_1}\bra{i}\cdot\ket{i}^{\hat{E}_2}\mc F_{ij}^{E_2\to B_1}(\cdot).
    \end{align}
    Here, system $E_1$ is taken to be trivial. It is easy to see that $\Theta$ is a CMPSC by Def.~\ref{de:CMPSC}. Then consider the quantum channel
    \begin{equation}\label{eq:34b}
        \mc E^{RA}(\cdot)=\sum_{ij}\Gamma_{j|i}^R(\cdot)\otimes\bra i \cdot \ket i^{A_0}\otimes\ket j\bra j^{A_1}.
    \end{equation}
    One can immediately verify that
        $\mc G^{RB} = \1^R\otimes\Theta^{A\to B}[\mc E^{RA}],$
    which means $\mc G^{RB}\in\mr C_{\mr{CMPSC}}$. By the arbitrariness of $\mc G$ we have $\mr{LOCC_2}\subseteq\mr{C_{CMPSC}}$, hence $\mr{LOCC_2} = \mr C_{\mr{CMPSC}}$.
    
    For the first equality in Eq.~\eqref{eq:image}, we can similarly verify that $\mr{C_{MPSC}}\subseteq \mr{LOCC}_1$ from Fig.~\ref{fig:RA-to-RB-2} when the classical channel from Bob to Rachel is removed. For an arbitrary ($R\to B$) one-way LOCC channel $\mc G^{RB} = \sum_i \Gamma^R_i\otimes \mc F_i^B$ with CPTP maps $\mc F_i$ and quantum instrument $\{\Gamma_i\}_i$, construct a superchannel $\Theta^{A_1\to B}$ with trivial $A_0$ system as
    \begin{equation}
        \Theta^{A_1\to B}[\cdot] = \sum_i \bra i\cdot\ket i^{A_1}\otimes\mc F_i^B,
    \end{equation}
    and further define a quantum channel $\mc E^{R_0 \to R_1A_1} = \sum_i \Gamma^{R}_i\otimes \ket i\bra i^{A_1}$. Then we have $\1^R\otimes\Theta^{A_1\to B}[\mc E]=\mc G^{RB}$. It is obvious that $\Theta^{A_1\to B}$ is an MPSC, and so $\mc G \in \mr{C_{MPSC}}$. Combining the above results we get $\mr{C_{MPSC}} = \mr{LOCC}_1$.


As for $\mr{C_{CMPSC}} \subsetneq \mr{C_{EBSC}}$, recall the EBSC we constructed to prove Theorem~\ref{Thm:NLWE}. By applying that EBSC at an input noiseless channel $\id^{R_0\to A_1}$, we obtain a separable channel that cannot be implemented by LOCC. This complete the proof of proposition~\ref{prop:image}.
\end{proof}


The above proposition precisely characterizes the CPTP-complete image of MPSC and CMPSC. Also, it tells us $\mr{C_{EBSC}}$ contains all ($B\to R\to B$) two-round LOCC channels and some non-LOCC separable channels. It remains an open problem whether $\mr{C_{EBSC}}=\mr{SEP}$. We next turn to the CP-complete image.



\subsection{CP-complete Image of EBSC}


Interestingly, we can easily prove $\mr{C^*_{EBSC}} = \mr{SEP^*} $. We first present a proof by direct construction, and then later discuss how this result is related to the notion of stochastic LOCC (SLOCC) strategy. 

\begin{proposition}\label{prop:pimage}
The CP-complete image of EBSC and CMPSC satisfy
\begin{equation}
    \mr{C^*_{CMPSC}}=\mr{C^*_{EBSC}} = \mr{SEP^*},
\end{equation}
where $\mr{SEP^*}$ is the set of all bipartite separable CP maps.
\end{proposition}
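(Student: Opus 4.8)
The plan is to close the chain
\[
\mr{SEP}^*\ \subseteq\ \mr C^*_{\mr{CMPSC}}\ \subseteq\ \mr C^*_{\mr{EBSC}}\ \subseteq\ \mr{SEP}^*.
\]
The rightmost inclusion is immediate from Definition~\ref{de:EBSC} (equivalently Theorem~\ref{th:EBSC}): if $\Theta$ is an EBSC then $\1^R\otimes\Theta[\mc E^{RA}]$ is separable for every CP map $\mc E^{RA}$ and every $R$, so $\mr C^*_\Theta\subseteq\mr{SEP}^*$, and taking the union over all EBSCs gives $\mr C^*_{\mr{EBSC}}\subseteq\mr{SEP}^*$. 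The middle inclusion holds because every CMPSC is an EBSC. Hence the proposition reduces to showing $\mr{SEP}^*\subseteq\mr C^*_{\mr{CMPSC}}$: that every separable CP map $\mc G^{R_0B_0\to R_1B_1}=\sum_{k=1}^K\Phi_k^{R_0\to R_1}\otimes\Psi_k^{B_0\to B_1}$ equals $\1^R\otimes\Theta[\mc E^{RA}]$ for some CMPSC $\Theta^{A\to B}$ and some CP map $\mc E^{RA}$.

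The construction is a ``dynamical'' analogue of the measure-and-prepare realization of an EBC: the branch label $k$ is carried on the $A$-register, the $\Phi_k$ are implemented on Rachel's side by the input map $\mc E^{RA}$, and the $\Psi_k$ are implemented on Bob's side by the post-processing. Take $A_1\cong E_1\cong\mbb C^{K}$, $A_0\cong\mbb C^{K+1}$ with an extra ``null'' basis vector $\ket0$, and $E_2\cong B_1$. Fix $c>0$ with $c\sum_k\Psi_k^\dagger(I^{B_1})\le I^{B_0}$ and let Bob's instrument be $\{\Lambda_k:=c\,\Psi_k\}_{k=1}^{K}$ together with one extra ``fail'' element $\Lambda_{\mr{fail}}(\,\cdot\,)=\Tr\!\big[\big(I-c\sum_k\Psi_k^\dagger(I)\big)(\,\cdot\,)\big]\,\omega_0^{B_1}$ for a fixed state $\omega_0$; this is a valid instrument $B_0\to E_2$. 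Let the pre-processing state conditioned on outcome $k$ be $\rho_k^{A_0E_1}=\op kk^{A_0}\otimes\op kk^{E_1}$ (and $\op00^{A_0}\otimes\op11^{E_1}$ on outcome fail), let the measurement be the POVM $\{P_k=\op kk^{A_1}\otimes\op kk^{E_1}\}_{k=1}^{K}$ completed with $P_{\mr{rest}}=I-\sum_kP_k$, and let every post-processing channel be $\mc F_x=\id^{E_2\to B_1}$. This defines a CMPSC $\Theta$ by Definition~\ref{de:CMPSC}. Finally let $\mc E^{RA}\in\mr{CP}(R_0A_0\to R_1A_1)$ have Kraus operators $\big\{\,c^{-1/2}A^{R_0\to R_1}_{j\nu}\otimes\ket j^{A_1}\bra j^{A_0}\,\big\}_{j=1,\dots,K;\ \nu}$, where $\{A_{j\nu}\}_\nu$ are Kraus operators of $\Phi_j$; equivalently $\mc E^{RA}$ measures $A_0$ in the computational basis, on outcome $j\in\{1,\dots,K\}$ applies $c^{-1}\Phi_j$ to $R_0$ and re-prepares $\op jj^{A_1}$, and sends the $\ket0^{A_0}$ component to $0$. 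Expanding $\1^R\otimes\Theta[\mc E^{RA}]$, the fail instrument outcome contributes nothing because $\mc E^{RA}$ annihilates $\ket0^{A_0}$, the $\mr{rest}$ POVM outcome contributes nothing because each surviving pre-measurement operator lies in the range of some $P_k$, and the matched terms $x=k=y$ give $\sum_k\big(c^{-1}\Phi_k\big)\otimes\big(c\,\Psi_k\big)=\mc G$. (Alternatively one uses the unscaled $\Phi_k$ and concludes by noting that each $\mr{Im}^*_{\1^R\otimes\Theta}$, hence $\mr C^*_{\mr{CMPSC}}$, is closed under positive scaling.)

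The one genuine obstacle — and the reason the statement is for CMPSC rather than for the smaller class MPSC — is that the factors $\Psi_k$ of a separable CP map are in general only CP, whereas a CMPSC is assembled from bona fide channels, instruments and POVMs. The construction funnels this lost normalization into the two ingredients that can legitimately carry it: the individual non--trace-preserving elements $\Lambda_k=c\Psi_k$ of Bob's instrument (whose missing weight is absorbed by the fail outcome), and the sub-normalization permitted in the input CP map $\mc E^{RA}$. The fail branch is harmless because $\mc E^{RA}$ kills the associated $A_0$ basis vector, and the classical flag on $E_1$ forces Bob's trivial post-processing to retain exactly the branch certified by Rachel's measurement. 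What remains is routine: checking that $\mc E^{RA}$ is CP, that $\{\Lambda_y\}$ is an instrument, that $\Theta$ is a superchannel of the CMPSC form, and — the only bookkeeping that needs a little care — tracking which measurement outcomes in $\1^R\otimes\Theta[\mc E^{RA}]$ match up and verifying that all other terms vanish. It is then natural to add the remark promised in the text, that this is literally a stochastic-LOCC strategy producing $\mc G$, with Bob's non-deterministic instrument and Rachel's conditional operation playing the role of the post-selected branches.
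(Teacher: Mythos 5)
Your proposal is correct and follows essentially the same route as the paper: an explicit construction in which Bob's factors $\Psi_k$ are carried by a (sub-normalized) instrument in the pre-processing, the classical outcome is flagged into the $A$-register, Rachel's factors $\Phi_k$ are implemented by the non-trace-preserving input CP map, and the missing normalization is dumped into a fail branch. The differences are presentational: the paper works at the level of Choi matrices, takes $A_1$ trivial so that no flag-matching POVM is needed, handles normalization by first reducing to CPTNI maps rather than by your $c$-rescaling, and identifies the constructed EBSC as a CMPSC only afterwards.
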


\begin{proof}

By definition, $\mr{C^*_{EBSC}}\subseteq\mr{SEP^*}$. We only need to prove that $\mr{C^*_{EBSC}}\supseteq\mr{SEP^*}$. It is enough to show that $\mr{C^*_{EBSC}}$ contains all CPTNI separable maps, because every CP map can be normalized to be CPTNI by dividing a positive factor, and we can always multiply this factor to the input CP map.

 For any bipartite separable CPTNI map $\mc G^{RB}$, its Choi matrix can be written as $J^{RB}_{\mc G}=\sum_{k=1}^r P^{R_0R_1}_k\otimes Q^{B_0B_1}_k$ such that $P_k,~Q_k$ are positive operators. Without loss of generality we take $\Tr(P_k) = 1$.  Since $\mc G^{RB}$ is CPTNI, we have $\sum_k P^{R_0}_k\otimes Q^{B_0}_k\le I^{R_0B_0}$, and so $\sum_k Q^{B_0}_k\le I^{B_0}$. Let $F^{B_0B_1}$ be a positive operator satisfying $F^{B_0}=I^{B_0}-\sum_k Q^{B_0}_k$. Then for an $(r+1)$-dimensional system $A_0$ and system $A_1$ being one-dimensional, construct a supermap $\Theta$ whose Choi matrix is as follows.
    \begin{equation}\label{eq:34}
        \mc J^{AB}_\Theta = \sum_{k=1}^r \ket k\bra k^{A_0}\otimes Q_k^{B_0B_1} + \ket{r+1}\bra{r+1}^{A_0}\otimes F^{B_0B_1}.
    \end{equation}
Since $A_1$ is one-dimensional, it holds trivially that $\mc J^{AB_0}_\Theta = \mc J^{A_0B_0}_\Theta\otimes u^{A_1}$ and $\mc J^{A_1B_0}_\Theta = I^{A_1B_0}$, and hence $\Theta$ is a superchannel.   Furthermore, since $\mc J^{AB}_\Theta$ is separable with respect to $A:B$, it is also an EBSC, by Theorem~\ref{th:EBSC}.  Define a CP map $\mc E$ whose Choi matrix is given by
    \begin{equation}
    \label{Eq:TP-LOSR}
        J^{RA}_{\mc E} = \sum_{k=1}^{r} P_k^R\otimes\ket k\bra k^{A_0}.
    \end{equation}
Note the condition $\sum_{k}P_k^{R_0}=I$ need not be enforced here since we allow $\mc E$ to be non-trace-preserving.   Then,
    \begin{align}
        J^{RB}_{\1\otimes\Theta[\mc E]}&= \Tr_A\left( \mc J^{AB}_\Theta \left( J^{RA}_{\mc E} \right)^{\Gamma_A} \right) \\
                &= \sum_k P^R_k\otimes Q^B_k = J^{RB}_{\mc G},
    \end{align}
    hence $\mc G= \1\otimes\Theta[\mc E]$. By the arbitrariness of $\mc G$ and the argument before, we have $\mr{SEP^*\subseteq C^*_{EBSC}}$, hence $\mr{ C^*_{EBSC}=SEP^*}$.
    
    Notice that the EBSC $\Theta$ in the above proof is actually a CMPSC. To see this, take the pre/post-processing map to be 
    \begin{align}
        \Gamma_\mr{pre}^{B_0\to A_0E}&=\sum_{k=1}^r \Lambda_k^{B_0\to E}\otimes\ket k\bra k^{A_0} + \mc F^{B_0\to E}\otimes\ket{r+1}\bra{r+1}^{A_0},\\
        \Gamma_\mr{post}^{A_1E\to B_1}&= \id^{E\to B_1}\otimes\Tr_{A_1},
    \end{align}
    where $\Lambda_k^{B_0\to E}$ (resp. $\mc F^{B_0\to E}$) is the unique map with Choi matrix $Q_k^{B_0E}$ (resp. $F^{B_0E}$). It is easy to check $\Gamma_\mr{pre}^{B_0\to A_0E}$ is CPTP.
    Therefore, we have $\mr{C^*_{CMPSC} = C^*_{EBSC}= SEP^* }$. This complete the proof of Proposition~\ref{prop:pimage}.
\end{proof}

It is interesting to consider the physical interpretation of the superchannel and input CP used in the previous proof. In order to implement the CPTNI map $\mc G^{RB}=\sum_{k=1}^r P^R_k\otimes Q^B_k$, Bob first performs a quantum instrument $\{\Lambda_1,...,\Lambda_r,\mc F\}$ and sends the outcome (stored as classical information in system $A_0$) to Rachel. 
Rachel then acts as follows. If Bob's action is $\Lambda_k$, Rachel implements the CPTNI map with Choi matrix $P_k$ to complete the procedure. If Bob's action is $\mc F$, the protocol aborts.  Of course, it may only be possible for Rachel to implement the CPTNI map $P_k$ with some nonzero probability.  In this case, Rachel needs an extra round of classical communication to let Bob know whether her implementation is successful.  Bob's final CPTP map would then be the identity in the case that she succeeds, and some other fixed ``failure'' channel in the case that she does not.

The above procedure describes a general stochastic LOCC (SLOCC) protocol. It is shown in \cite{Dur2000} that every separable map can be implemented by an SLOCC protocol, which provides a rough explanation for why Proposition \ref{prop:pimage} holds. It also helps shed light on the physical significance of CP image, as defined in Definition \ref{de:CPim}.  This is related to stochastic quantum processes, which we discuss further in the next subsection.

\subsection{Discussion on CP Image}

Suppose that, for superchannel $\Theta^{A\to B}$ and CP maps $\mc G^B$ and $\mc E^A$ one has \begin{equation}\label{eq:cp}
    \mc G^B = \Theta^{A\to B}[\mc E^A].
\end{equation} 
If both $\mc G$ and $\mc E$ are CPTP, the physical interpretation of this equation is clearly deterministic channel conversion. What if they are not trace-preserving? Let us consider the case when $\mc G^B$ is CPTNI and $\mc E^{A}$ is a general CP map. We can always find a positive number $p\le 1$ such that $p\mc E$ is a CPTNI map, hence we can find another CPTNI map $\mc T$ to make up a quantum instrument $\{p\mc E,\mc T\}^{A}$. By applying $\Theta^{A\to B}$ to this instrument (see Fig.~\ref{fig:sc_instrument}) we get a new instrument $\{p\mc G,  \Theta[\mc T]\}^{B}$. This provides an implementation of the CPTNI map $\mc G$ with success probability $p$. Therefore, the physical interpretation of the above equation is probabilistic channel conversion. Hence the CPTP (resp. CP) image describes the channels (resp. CP maps) that can be deterministically (resp. probabilistically) generated by a superchannel. 

\begin{figure}[!htb]
    \centering
    \includegraphics[width=0.65\linewidth]{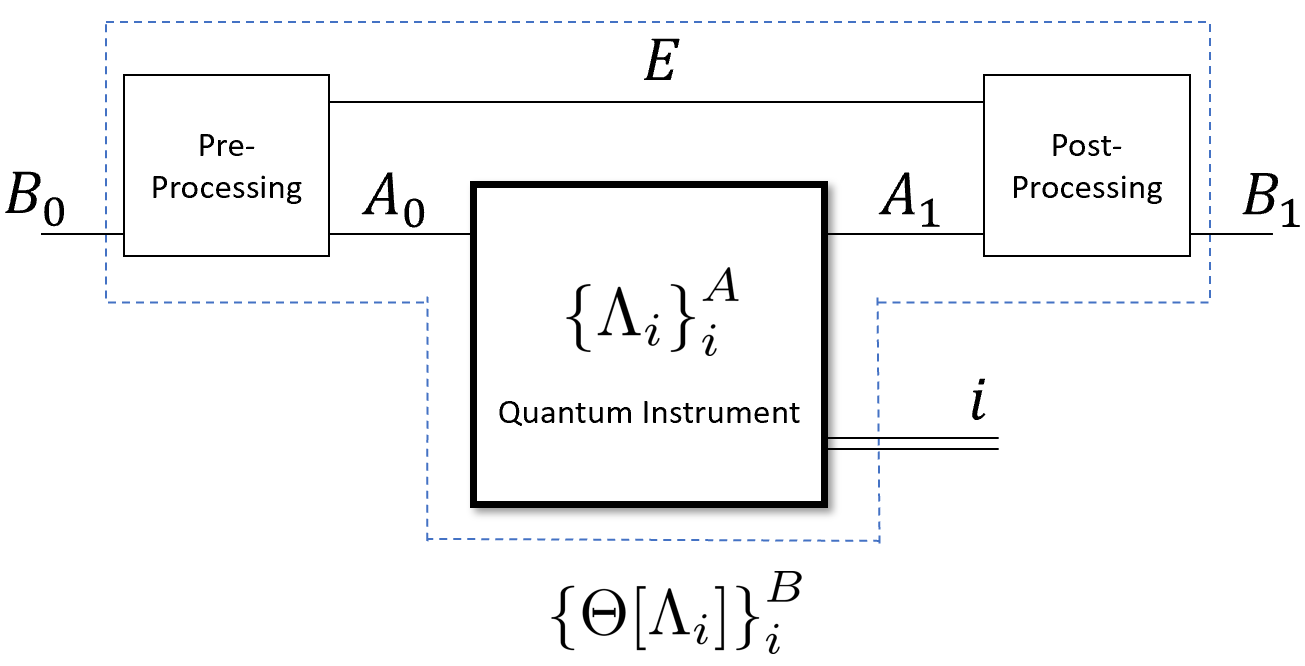}
    \caption{A quantum instrument $\left\{\Theta[\Lambda_i]\right\}_i^B$ resulting from applying superchannel $\Theta^{A\to B}$ at a bipartite quantum instrument $\left\{\Lambda_i\right\}_i^A$.
    }
    \label{fig:sc_instrument}
\end{figure}

Comparing Propositions \ref{prop:image} and \ref{prop:pimage}, which say that $\mr{C_{CMPSC}=LOCC_2}$ but $\mr{C^*_{CMPSC}=SEP^*}$, we see that there are CPTP maps that can be probabilistically implemented by CMPSC but not deterministically, since there are non-LOCC CPTP maps in $\textrm{SEP}\subset\textrm{SEP}^*$ that are not in $\mr{LOCC_2}$.  Hence, there are separable channels $\mc{G}^{RB}$ and valid superchannels $\Theta^{A\to B}$ for which $\mc G^{RB} = \1^R\otimes\Theta^{A\to B}[\mc E^{RA}]$ holds only for some CP non-TP map $\mc E^{RA}$. Such a phenomenon does not exist at the level of channels and states because we cannot transform an unnormalized state into a normalized one by a quantum channel.  This result shows that the structure of unnormalized maps is much more complex than the structure of unnormalized states, which is one of the reasons why QRTs of channels are often more challenging to formalize than QRTs of states (see also Ref.~\cite{Burniston2019measure} for a related discussion on this point).


\section{Alternative definitions of entanglement breaking}

\label{sec:sidechannel-free}

In the final section, we examine certain relaxations to the definition of entanglement breaking. To this end, we first consider relaxations to separable channels.  Recall that a bipartite channel $\mc{E}\in\mr{CP}(A_0R_0\to A_1R_1)$ is separable if $\sigma^{R'A'R_1A_1}=\id^{R'A'}\otimes\mc{E}^{A_0R_0\to A_1R_1}(\rho^{R'R_0}\otimes\omega^{A'A_0})$ is $R'R_1:A'A_1$ separable for all $\rho^{R'R_0}\otimes\omega^{A'A_0}$ and all systems $R'$ and $A'$.  In fact, it suffices to consider systems $R'$ and $A'$ of dimension $d_{R_0}$ and $d_{A_0}$, respectively.  A relaxation would be to require that $\sigma^{R'A'R_1A_1}$ is separable only for systems $R'$ and $A'$ of smaller dimension.

\begin{definition}
A bipartite channel $\mc{E}\in\mr{CP}(A_0R_0\to A_1R_1)$ is called \textbf{$(p,q)$-non-entangling} if $\id^{R'A'}\otimes\mc{E}^{A_0R_0\to A_1R_1}(\rho^{R'R_0}\otimes\omega^{A'A_0})$ is $R'R_1:A'A_1$ separable for all $\rho^{R'R_0}\otimes\omega^{A'A_0}$ with any systems $R'$ and $A'$ of dimension $d_{A'}=p$ and $d_{R'}=q$.
\end{definition}

\noindent A $(k,k)$-non-entangling channel is also called $k$-non-entangling in previous literature \cite{chitambar2020entanglement}. {We call a channel $(p,\text{complete})$-non-entangling if it is $(p,q)$-non-entangling for every positive integer $q$. A separable channel is then a $(\text{complete},\text{complete})$-non-entangling channel.}


\begin{definition}
A superchannel $\Theta^{A\to B}$ is called a \textbf{(p,q)-EBSC} if $\1^R \otimes\Theta^{A\to B}[\mc E^{RA}]$ is a $(p,q)$-non-entangling map ($p$ with $B$ and $q$ with $R$) for every $\mc{E}\in \textrm{CP}(RA)$, with $R$ being an arbitrary finite-dimensional system.
\end{definition}
\noindent Although the general structure of $(p,q)$-EBSCs could be rather complicated, 
we can obtain some decent results when restricting our attention to superchannels that have an implementation without a side-channel (see Def~\ref{de:sidefree} below), {and only $(k, \text{complete})$-EBSCs}. Before doing that, we first introduce a class of $k$-entanglement breaking channels in the following subsection.

\subsection{k-Entanglement Breaking Channel}


\begin{definition}\label{de:kEBC}\cite{christandl2019composed}
A quantum channel $\Lambda\in\mathrm{CPTP}(B_0\to B_1)$ is called a \textbf{k-entanglement breaking channel} ($k$-EBC) if $\id^R\tens\Lambda(\rho)$ is separable for any $\rho\in\mathcal{D}(R\otimes B_0)$, where $R$ is any $k$-dimensional Hilbert space.
\end{definition}

This class of $k$-EBC was first introduced in \cite{christandl2019composed} to study the condition when concatenated maps become entanglement breaking. Note that, this definition differs from the $k$-partially entanglement-breaking channel studied in Ref.~\cite{Chruscinski2006}, which is a channel $\Lambda^{B_0\to B_1}$ that satisfies $\mathrm{SN}(\id^R\tens\Lambda(\rho))\le k$ for any $\rho\in\mathcal{D}(R\tens B_0)$ with arbitrarily large system $R$, where $\mathrm{SN}$ denotes the Schmidt number.


For simplicity, in the following we only consider channels with the same input and output dimension, $d\geq 2$. A $k$-EBC is a completely-EBC whenever $k\ge d$~\cite{Horodecki2003EB}. For $1\le k<d$, the set of $k$-EBC is clearly a subset of ($k$+1)-EBC, and the set of all $1$-EBCs trivially equals $\mathrm{CPTP}(B_0\to B_1)$. To summarize, we have
\begin{equation}
    \text{completely-EBC}=\text{$d$-EBC}\subseteq\text{($d$-$1$)-EBC}\subseteq ... \subseteq \text{$1$-EBC} = \text{CPTP}.
\end{equation}
In Theorem~\ref{th:kEBC} below, we show that each of the above inclusions is strict. That is, there exists a $k$-EBC that is a not ($k$+1)-EBC for all $1\le k<d$.

%
\begin{theorem}\label{th:kEBC}
For all integers $k$ and $d$ with $1\le k< d$, there exists a $k$-EBC $\Lambda\in \mathrm{CPTP}(\mathbb{C}^d\to\mathbb{C}^d)$ that is not $(k+1)$-EBC.  
\end{theorem}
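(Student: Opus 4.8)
Here is a proof strategy for Theorem~\ref{th:kEBC}.

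The plan is to exhibit an explicit one-parameter family of ``transpose--depolarizing'' channels on $\mathbb C^d$ and tune the parameter so that it lies strictly between the entanglement threshold that a $(k+1)$-dimensional ancilla can detect and the one that a $k$-dimensional ancilla can detect. Fix a number $t$ with $\tfrac1{k+1}<t\le\tfrac1k$ and define $\Lambda\in\mr{CPTP}(\mathbb C^d\to\mathbb C^d)$ by
\begin{equation}
\Lambda(X)=\frac1{d-t}\bigl(\Tr(X)\,I_d-t\,X^{\mr T}\bigr),
\end{equation}
where $X^{\mr T}$ is the transpose in the computational basis. First I would verify that $\Lambda$ is a channel by passing to its Choi matrix $J_\Lambda=\tfrac1{d-t}\bigl(I_d\otimes I_d-t\,\mbb F\bigr)$, where $\mbb F=\sum_{ij}|i\rangle\langle j|\otimes|j\rangle\langle i|$ is the swap operator: since $\mbb F$ has eigenvalues $\pm1$ and $t\le1$, we get $J_\Lambda\ge0$, so $\Lambda$ is CP, and $\Tr_{\mr{out}}J_\Lambda=\tfrac1{d-t}(d-t)\,I_d=I_d$, so $\Lambda$ is trace preserving. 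Two features of $\Lambda$ will do all the remaining work: $J_\Lambda$ is a positive multiple of a Werner state, and $\Lambda$ obeys the covariance $\Lambda(UXU^\dagger)=\overline U\,\Lambda(X)\,\overline U^\dagger$ for every unitary $U$.

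The crucial step is the reduction showing that, to prove $\Lambda$ is $k$-EBC, it suffices to verify that for every $k$-dimensional subspace $V\subseteq\mathbb C^d$ the restriction $\Lambda|_V$ of $\Lambda$ to inputs supported on $V$ is an entanglement-breaking map. To prove this I would restrict attention to pure inputs $|\psi\rangle\in\mathbb C^k\otimes\mathbb C^d$, write such a vector as $(I\otimes A)|\Omega_k\rangle$ with $|\Omega_k\rangle=\sum_{i=1}^k|ii\rangle$ and $A\colon\mathbb C^k\to\mathbb C^d$ (so that $(\id_{\mathbb C^k}\otimes\Lambda)(|\psi\rangle\langle\psi|)$ is the Choi matrix of $\Lambda\circ\mr{Ad}_A$, with $\mr{Ad}_A(Y)=AYA^\dagger$), polar-decompose $A=U|A|$ with $U$ an isometry whose range is a $k$-dimensional subspace $V$, and use that an entanglement-breaking map precomposed with a CP map is again entanglement breaking, which is immediate from the measure-and-prepare characterization in Proposition~\ref{Prop:EBC}. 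Then, by the covariance above, the Choi matrices of $\Lambda|_V$ and $\Lambda|_{V'}$ for any two $k$-dimensional subspaces differ only by a local unitary of the form $U\otimes\overline U$, so it suffices to treat the coordinate subspace $V_k=\mr{span}(e_1,\dots,e_k)$.

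It then remains to compute. The Choi matrix of $\Lambda|_{V_k}$ on $\mathbb C^k\otimes\mathbb C^d$ equals $\tfrac1{d-t}\bigl(I_k\otimes I_d-t\,\mbb F_k\bigr)$ with $\mbb F_k=\sum_{a,b\le k}|e_a\rangle\langle e_b|\otimes|e_b\rangle\langle e_a|$; decomposing $\mathbb C^d=V_k\oplus V_k^\perp$ in the output tensor factor rewrites it as $\tfrac1{d-t}\bigl[(I_{k^2}-t\,\mbb F_k')\oplus I_{k(d-k)}\bigr]$, where $\mbb F_k'$ is the genuine swap on $\mathbb C^k\otimes\mathbb C^k$. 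The summand supported on $\mathbb C^k\otimes V_k^\perp$ is manifestly separable, and $I_{k^2}-t\,\mbb F_k'$ is a Werner state on $\mathbb C^k\otimes\mathbb C^k$, which by the standard separability criterion for Werner states is separable precisely when $t\le\tfrac1k$; this holds by the choice of $t$, so $\Lambda|_{V_k}$ is entanglement breaking and hence $\Lambda$ is $k$-EBC. For the complementary statement I would feed in the maximally entangled state $\rho_0=\tfrac1{k+1}|\Omega_{k+1}\rangle\langle\Omega_{k+1}|$, regarded as a state on $\mathbb C^{k+1}\otimes\mathbb C^d$ supported on $\mathbb C^{k+1}\otimes V_{k+1}$ (legitimate since $k+1\le d$); the identical calculation gives $(\id_{\mathbb C^{k+1}}\otimes\Lambda)(\rho_0)$ equal to a positive multiple of $(I_{(k+1)^2}-t\,\mbb F_{k+1}')\oplus I_{(k+1)(d-k-1)}$, whose first summand is a Werner state on $\mathbb C^{k+1}\otimes\mathbb C^{k+1}$ that is entangled because $t>\tfrac1{k+1}$. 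Since the two summands act on orthogonal subspaces of the output system, the whole operator remains entangled, so $\Lambda$ is not $(k+1)$-EBC.

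I expect the only genuinely nontrivial ingredient to be the reduction in the second paragraph (passing from the defining condition of $k$-EBC to entanglement breaking of the restrictions $\Lambda|_V$, together with the covariance observation that collapses the check to the single coordinate subspace $V_k$). Once this is in hand, the remainder is the elementary and classical fact that the Werner state proportional to $I-t\,\mbb F$ on $\mathbb C^m\otimes\mathbb C^m$ is separable if and only if $t\le1/m$, applied at $m=k$ and $m=k+1$.
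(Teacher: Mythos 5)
Your construction is correct and, at its core, it is the same one the paper uses: the channel $\Lambda(X)=\frac{1}{d-t}\left(\Tr(X)I_d-tX^{\mr T}\right)$ has Choi matrix proportional to the Werner operator $I-tF_d$, which is exactly the paper's $\Lambda_\beta$ under the reparametrization $t=(\beta+1)/d$, and your window $1/(k+1)<t\le 1/k$ is precisely the paper's condition $\beta\le(d-k)/k$ together with $\beta>(d-k-1)/(k+1)$. Your reduction step is also Lemma~\ref{le:kEBC} in substance: the paper writes a pure input as $(X\otimes I)\ket{\phi^+_d}$ and conjugates by the pseudo-inverse of $X$ to reduce $k$-EB-ness to separability of the rank-$\le k$ projections $(P\otimes I_d)J_\Lambda(P^\dagger\otimes I_d)$, while you reach the same reduction via polar decomposition combined with closure of EB maps under precomposition with CP maps (valid, since $\mc N\circ\Phi(\rho)=\sum_k\Tr(\Phi^\dagger(F_k)\rho)\,\omega_k$ and $\Phi^\dagger(F_k)\ge 0$), plus unitary covariance to fix a single coordinate subspace. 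Where you genuinely diverge is the separability half of the key computation (the paper's Lemma~\ref{le:werner}): the paper keeps the projected operator on $\mathbb C^k\otimes\mathbb C^d$ and invokes the Gurvits separable-ball theorem ($I+A$ is separable whenever $\|A\|_2\le 1$), whereas you block-decompose the output space as $V_k\oplus V_k^{\perp}$, observe that the nontrivial block is literally a Werner state on $\mathbb C^k\otimes\mathbb C^k$, and cite Werner's exact separability criterion $t\le 1/k$; the trivially separable complementary block and closure of separability under sums finish the job. Both routes give the same threshold; yours trades the separable-ball theorem for the (standard but itself nontrivial) separable decomposition of Werner states, and it makes the sharpness transparent because the identical block analysis at $m=k+1$ yields the entanglement direction via NPT, matching the paper's partial-transpose argument. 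Two cosmetic points to tighten: when $A$ is rank-deficient the polar partial isometry should be extended to a genuine isometry $\mathbb C^k\to\mathbb C^d$ (possible since $k\le d$), and two restrictions $\Lambda|_V$, $\Lambda|_{V'}$ are related by a local unitary of the form $W_1\otimes\overline{W_2}$ rather than $U\otimes\overline{U}$; neither affects the argument.
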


The proof of this theorem is inspired by Theorem~20 in \cite{chitambar2020entanglement}, which establishes the existence of $k$-non-entangling map that is not $(k+1)$-non-entangling. Interestingly, Theorem~20 in \cite{chitambar2020entanglement} turns out to be a corollary of Theorem~\ref{th:kEBC} presented here.

\begin{corollary}\cite{chitambar2020entanglement}
    For all integers $k$ and $d$ with $1\le k< d$, there exists a $k$-non-entangling map $\Lambda:\mathcal{L}(\mathbb{C}^d\tens\mathbb{C}^d)\to\mathcal{L}(\mathbb{C}^d\tens\mathbb{C}^d)$ that is not $(k+1)$-non-entangling.  
\end{corollary}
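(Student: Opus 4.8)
The plan is to derive the Corollary directly from Theorem~\ref{th:kEBC} by exploiting the Choi--Jamiolkowski isomorphism to convert a $k$-EBC into a $k$-non-entangling map. First I would recall that Theorem~\ref{th:kEBC} supplies, for each pair $1\le k<d$, a channel $\Lambda\in\mathrm{CPTP}(\mathbb{C}^d\to\mathbb{C}^d)$ that is $k$-EBC but not $(k+1)$-EBC. The map we want to produce is a supermap-like object on $\mathcal{L}(\mathbb{C}^d\otimes\mathbb{C}^d)$, so the natural candidate is the map $\widehat\Lambda$ that acts on Choi matrices the way $\Lambda$ acts on states: concretely, I would define $\widehat\Lambda:\mathcal{L}(\mathbb{C}^d\otimes\mathbb{C}^d)\to\mathcal{L}(\mathbb{C}^d\otimes\mathbb{C}^d)$ by letting it be the unique linear map sending the Choi matrix $J_{\mc N}^{B_0B_1}$ of any $\mc N\in\mathcal{L}(\mathbb{C}^d\to\mathbb{C}^d)$ to $\id^{B_0}\otimes\Lambda^{B_1}(J_{\mc N}^{B_0B_1})$, i.e. $\widehat\Lambda = \id^{B_0}\otimes\Lambda^{\tilde B_1\to B_1}$ read through the isomorphism $\mathcal{L}(B_0\otimes B_1)\cong\mathcal{L}(B_0\to B_1)$. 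Equivalently, this is exactly the ``$\Delta_\Theta$''-type construction from Section~\ref{sec:pre}: $\widehat\Lambda$ is the map whose effect is $J_{\mc N}\mapsto J_{\Theta[\mc N]}$ for the superchannel-like operation $\Theta$ given by post-composition with $\Lambda$.

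The key technical step is to check that the $(k,\text{complete})$-non-entangling property of $\widehat\Lambda$ (as a map on operators, tested against an auxiliary system of dimension $k$ on one side and arbitrary dimension on the other) corresponds precisely to $\Lambda$ being a $k$-EBC. This is a dimension-counting translation of Definition~\ref{de:kEBC}: applying $\id^R\otimes\widehat\Lambda$ to the Choi matrix of a bipartite map $\mc E^{RB_0\to RB_1}$ — or, more simply, to an arbitrary positive operator $\rho^{R B_0 B_1}$ which is itself the Choi matrix of some map — amounts to applying $\id^{R B_0}\otimes\Lambda^{B_1}$, and the output is $R B_0 : B_1$-separable for all such $\rho$ with $d_R=k$ exactly when $\Lambda$ is $k$-EBC. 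Here I would use that any positive operator on $\mathbb{C}^d\otimes\mathbb{C}^d$ arises as the Choi matrix of a CP map, so ranging over all CP (or CPTP) inputs to the supermap recovers the full state-test in Definition~\ref{de:kEBC}. One then reads off that $\widehat\Lambda$ is $k$-non-entangling but, because $\Lambda$ is not $(k+1)$-EBC, there is a positive operator on a $(k+1)\otimes d$ system whose image under $\id\otimes\Lambda$ is entangled, hence $\widehat\Lambda$ is not $(k+1)$-non-entangling.

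The main obstacle — really the only thing requiring care — is matching the \emph{two-sided} bookkeeping of the non-entangling definition (the auxiliary systems $R'$ on both input \emph{and} output of the bipartite channel) against the \emph{one-sided} state-test in the $k$-EBC definition. The cleanest way around this is to note that $k$-non-entangling for a map $\Lambda':\mathcal{L}(\mathbb{C}^d\otimes\mathbb{C}^d)\to\mathcal{L}(\mathbb{C}^d\otimes\mathbb{C}^d)$ is equivalent, via the Choi isomorphism applied to the ancilla as well, to a separability condition on a single output operator with a $k$-dimensional ancilla, so the whole argument reduces to the observation ``$\widehat\Lambda$ sends $k$-ancilla states to separable states iff $\Lambda$ does,'' which is immediate from the definition of $\widehat\Lambda$. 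I would also remark, as the paper does, that this shows Theorem~20 of \cite{chitambar2020entanglement} is strictly a consequence of Theorem~\ref{th:kEBC}, since the construction is purely a relabelling and no new extremal example is needed.
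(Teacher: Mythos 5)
There is a genuine gap here, and it is fatal to the construction rather than a fixable bookkeeping issue. Your candidate map $\widehat\Lambda=\id^{B_0}\otimes\Lambda^{B_1\to B_1}$, viewed as a bipartite channel on $\mathbb{C}^d\otimes\mathbb{C}^d$, is a \emph{product} channel, hence a separable channel, hence $(p,q)$-non-entangling for \emph{every} $p$ and $q$: with an input $\rho^{R'R_0}\otimes\omega^{A'A_0}$ the output is $\bigl(\omega^{A'A_0}\bigr)\otimes\bigl(\id^{R'}\otimes\Lambda(\rho^{R'R_0})\bigr)$, a tensor product across the cut $R'R_1:A'A_1$, which is separable across that cut no matter how badly $\Lambda$ fails to be entanglement breaking. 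So it can never witness failure of $(k+1)$-non-entangling. The misstep is in the "translation" you rely on: the $(p,q)$-non-entangling property is \emph{not} equivalent to a one-sided separability test of $\id^R\otimes\widehat\Lambda$ on arbitrary positive operators $\rho^{RB_0B_1}$ with $d_R=k$. The definition in the paper tests entanglement generation across the bipartite cut of the channel, with \emph{product} inputs across that cut and a bounded-dimension ancilla attached locally to each side; your reformulation quietly allows the ancilla to be entangled with the whole input (and collapses the two ancillas into one), which is a strictly different and stronger requirement. Under the correct definition your Choi-relabelling argument does not go through.

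The paper's proof avoids this by taking the channel $\mathrm{SWAP}\circ(\Lambda\otimes\Lambda)$, where $\Lambda$ is the $k$-EBC but not $(k+1)$-EBC channel from Theorem~\ref{th:kEBC}. The SWAP is the essential ingredient you are missing: it routes each copy of $\Lambda$'s output to the \emph{other} side of the bipartition, so the output on $R'A_1$ is $\id^{R'}\otimes\Lambda(\rho^{R'R_0})$ and the output on $A'R_1$ is $\id^{A'}\otimes\Lambda(\omega^{A'A_0})$. Separability of the total state across $R'R_1:A'A_1$ then holds for all $k$-dimensional ancillas precisely because $\Lambda$ is $k$-EB, while choosing a $(k{+}1)$-dimensional $R'$ and a state $\rho^{R'R_0}$ witnessing that $\Lambda$ is not $(k{+}1)$-EB produces entanglement across the cut, so the channel is not $(k{+}1)$-non-entangling. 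If you want to salvage your write-up, replace the $\id\otimes\Lambda$ construction with this swapped one; the rest of your appeal to Theorem~\ref{th:kEBC} is then fine.
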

\begin{proof}
    The channel $\mathrm{SWAP}\circ(\Lambda\tens\Lambda)$ with $k$-EBC but not ($k$+1)-EBC $\Lambda$ is by definition $k$-non-entangling but not ($k$+1)-non-entangling map, for all $1\le k< d$.
\end{proof}

Now we set out to prove Theorem~\ref{th:kEBC}. In the following, we assume $\mathcal{H}_R = \mathbb{C}^k$ and $\mathcal{H}_{B_0} =\mathcal{H}_{B_1} = \mathbb{C}^d$ and $\Lambda\in\mathrm{CPTP}(B_0\to B_1)$. First notice that, in order to prove a channel $\Lambda$ is $k$-EBC, it suffices to show $\id_R\tens\Lambda(\varphi)$ is separable for all pure states $\ket\varphi\in \mathcal{H}_R\tens\mathcal{H}_{B_0}$, since we can then apply convexity to cover mixed states. Any such pure state can be written as
\begin{equation}
    \ket\varphi =  X\tens\id_d\ket{\phi_d^+}
\end{equation}
for some operator $X\in \mathcal{L}(\mathbb{C}^d\to\mathbb{C}^k)$. Let $X=\sum_{i=1}^r\lambda_i\op{\alpha_i}{\beta_i}$ denote a singular value decomposition of $X$, and define $X^{-1}:=\sum_{i=1}^r \lambda_i^{-1}\op{\beta_i}{\alpha_i}$ (with $\lambda_i>0$ for $i=1,\cdots r$).  Then $X^{-1}X=P=\sum_{i=1}^r\op{\beta_i}{\beta_i}$ is a projector of rank $r\leq k$.  Then since local operators preserve separability, we have that
 \[\id\otimes\Lambda[\op{\phi}{\phi}]=\id\otimes\Lambda[(X\otimes\mbb{I}_d)\phi^+_d(X\otimes {I}_d)^\dagger]\]
is separable if and only if
\begin{equation}
(X^{-1}\otimes\mbb{I})\id\otimes\Lambda[(X\otimes\mbb{I}_d)\phi^+_d(X\otimes I_d)^\dagger](X^{-1}\otimes I)^\dagger= (P\tens\id_d) J_\Lambda(P^\dagger\tens\id_d)
\end{equation}
is separable.  We summarize this observation in the following lemma.
\begin{lemma}\label{le:kEBC}
A channel $\Lambda\in \mathrm{CPTP}(B_0\to B_1)$ is $k$-EBC if and only if the operator
\begin{equation}
    (P\tens I_d) J_\Lambda(P^\dagger\tens I_d)
\end{equation}
is separable for all projectors $P$ with dimension no larger than $k$.
\end{lemma}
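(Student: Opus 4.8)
The plan is to prove the two implications separately, and in both directions the only tool needed is the elementary fact (already invoked in the text preceding the lemma) that separability of a bipartite operator $Y$ is preserved under any transformation of the form $Y\mapsto (M\otimes I)\,Y\,(M^\dagger\otimes I)$, since such a ``local filtering'' sends a product term $A\otimes B$ to $(MAM^\dagger)\otimes B$. I would also record at the outset the identity $\id_R\otimes\Lambda(\op{\varphi}{\varphi})=(X\otimes I_{B_1})\,J_\Lambda\,(X^\dagger\otimes I_{B_1})$ whenever $\ket{\varphi}=(X\otimes I_d)\ket{\phi_d^+}$ with $X\in\mathcal L(\mathbb C^d\to\mathbb C^k)$; this is immediate by pulling the local operator $X$ through $\id_R\otimes\Lambda$, and here $X$ is understood to act on the $B_0$-factor of $J_\Lambda$.

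For the ``only if'' direction, assume $\Lambda$ is $k$-EBC and fix a projector $P$ on $\mathbb C^d$ of rank $r\le k$. I would choose an isometry $W\colon\mathrm{range}(P)\to\mathbb C^k$ (which exists precisely because $r\le k$) and set $X:=WP\in\mathcal L(\mathbb C^d\to\mathbb C^k)$, so that $X^\dagger X=P W^\dagger W P=P$. The vector $\ket{\varphi}:=(X\otimes I_d)\ket{\phi_d^+}$ lies in $\mathbb C^k\otimes B_0$, with squared norm $\Tr(X^\dagger X)=\Tr P=r$, so a normalized multiple of $\op{\varphi}{\varphi}$ is a legitimate state; the $k$-EBC hypothesis then gives that $\id\otimes\Lambda(\op{\varphi}{\varphi})=(X\otimes I)\,J_\Lambda\,(X^\dagger\otimes I)$ is separable. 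Applying the local operator $X^\dagger\otimes I$ on the left and $X\otimes I$ on the right turns this into $(X^\dagger X\otimes I)\,J_\Lambda\,(X^\dagger X\otimes I)=(P\otimes I_d)\,J_\Lambda\,(P^\dagger\otimes I_d)$ (using $P^\dagger=P$), which is therefore separable.

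For the ``if'' direction, assume $(P\otimes I_d)\,J_\Lambda\,(P^\dagger\otimes I_d)$ is separable for every projector $P$ on $\mathbb C^d$ of rank at most $k$. By convexity of the set of separable operators it suffices to show $\id_R\otimes\Lambda(\op{\varphi}{\varphi})$ is separable for an arbitrary pure state $\ket{\varphi}\in\mathbb C^k\otimes B_0$. Writing $\ket{\varphi}=(X\otimes I_d)\ket{\phi_d^+}$ and taking a singular value decomposition $X=\sum_{i=1}^r\lambda_i\op{\alpha_i}{\beta_i}$ with $\lambda_i>0$, I would put $P:=\sum_{i=1}^r\op{\beta_i}{\beta_i}$, the support projector of $X$ on $B_0$, which has rank $r\le\min(d,k)\le k$ and satisfies $XP=X$. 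Then $\id\otimes\Lambda(\op{\varphi}{\varphi})=(X\otimes I)\,J_\Lambda\,(X^\dagger\otimes I)=(X\otimes I)\bigl[(P\otimes I_d)\,J_\Lambda\,(P^\dagger\otimes I_d)\bigr](X^\dagger\otimes I)$, and since the bracketed operator is separable by hypothesis and $X\otimes I$ is a local operator, the whole expression is separable.

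There is no real obstacle here; the argument is essentially bookkeeping. The three points that need a little care are: (i) the reduction to pure states — in the ``if'' direction one passes from pure to mixed by convexity, while in the ``only if'' direction a single pure state per projector $P$ suffices; (ii) the dimension bound $r\le k$, which is exactly what licenses the isometric embedding $W$ used to realize an arbitrary rank-$\le k$ projector, and which also guarantees that the support projector arising from an $X\in\mathcal L(\mathbb C^d\to\mathbb C^k)$ indeed has rank at most $k$; and (iii) normalization, which is harmless because separability is a property of the positive cone, so the unnormalized $\op{\varphi}{\varphi}$ and the convention that $J_\Lambda$ is built from the unnormalized $\ket{\phi_d^+}$ introduce no spurious factors.
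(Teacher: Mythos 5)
Your proof is correct and follows essentially the same route as the paper: writing a pure state on $\mathbb{C}^k\otimes B_0$ as $(X\otimes I_d)\ket{\phi_d^+}$ and using a local-filtering/support-projector argument via the singular value decomposition (the paper phrases the two directions at once with the pseudoinverse $X^{-1}$, while you split them and make the isometry $W$ realizing an arbitrary rank-$\le k$ projector explicit). The extra care you take with normalization and the rank bound is consistent with the paper's argument and introduces nothing new.
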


Now we introduce the Werner states \cite{Werner1989} which will play an important role in our construction of k-EBC. The Werner states are a family of states on $\mathbb{C}^d\tens\mathbb{C}^d$ which take the following form,
\begin{equation}
    \rho^\mathcal{W}_d(\beta)=\cfrac{1}{d^2-(\beta+1)}\left( I_{d^2}-\frac{\beta+1}{d}F_d \right),
\end{equation}
for $-(d+1)\le\beta\le d-1$, where $F_d = \sum_{ij}\ket{i}\bra{j}\tens\ket{j}\bra{i}$ is the swap operator (or the partial transpose of $\phi_d^+$).  A crucial observation is that the partial trace of a Werner state is $\frac{1}{d} I_{d}$, which means it is a valid Choi matrix of some CPTP quantum channel (with some normalization). Specifically, consider the following map,
\begin{equation}\label{eq:LambdaBeta}
    \Lambda_\beta(\rho_{B_0}) := d \Tr_{B_0}[\rho^\mathcal{W}_{d}(\beta)(\rho^{\mathrm{T}}_{B_0}\tens I_{B_1})].
\end{equation}
This map is clearly completely positive, and it is also trace-preserving since
\begin{equation}
    \begin{aligned}
        \Tr\left(\Lambda_\beta(\rho_{B_0})\right) &= \Tr[d\Tr_{B_1}\left(\rho^\mathcal{W}_{d}(\beta)\right)\rho_{B_0}]\\ &= \Tr( I_d \ \rho_{B_0})\\&=\Tr(\rho_{B_0}),
    \end{aligned}
\end{equation}
where the second equation is from our observation on the partial trace of Werner states. We conclude that $\Lambda_\beta\in\mathrm{CPTP}(B_0\to B_1)$.

Since $\rho_d^{\mc{W}}(\beta)$ is the Choi matrix of $\Lambda_\beta$ up to normalization, we can apply Lemma~\ref{le:kEBC} by studying the separability properties of $\rho_d^{\mc{W}}(\beta)$.  The following lemma is modified from Lemma~19 in \cite{chitambar2020entanglement}.
\begin{lemma}\label{le:werner}
Let $k$ and $d$ be integers such that $1\le k < d$ and let $-(d+1)\le\beta\le d-1$. The operator
\begin{align*}
    \rho^{\mathrm{proj}}_{RB_1}:=(P\tens I_d)\rho_d^\mathcal{W}(\beta)(P\tens I_d)    
\end{align*}
is separable for all projectors P with dimension no greater than $k$ if and only if $\beta\le(d-k)/k$.
\end{lemma}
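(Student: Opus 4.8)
The plan is to analyze the separability of $\rho^{\mathrm{proj}}_{RB_1} = (P\otimes I_d)\rho_d^{\mathcal{W}}(\beta)(P\otimes I_d)$ by exploiting the symmetry of the Werner state and reducing the problem to a one-parameter family on $\mathbb{C}^r\otimes\mathbb{C}^d$, where $r = \mathrm{rank}(P)\le k$. First I would observe that without loss of generality $P$ can be taken to be the projector onto $\mathrm{span}\{\ket 1,\dots,\ket r\}$, since conjugating a separable operator by a local unitary preserves separability, and any rank-$r$ projector is a local unitary conjugate of the standard one (acting on the $R$ system, with the identity on $B_1$). Then $(P\otimes I_d)F_d(P\otimes I_d)$ is just the ``partial swap'' operator $F_{r,d} := \sum_{i,j=1}^{r}\op{i}{j}\otimes\op{j}{i}$ acting on $\mathbb{C}^r\otimes\mathbb{C}^d$ (embedded in $\mathbb{C}^d\otimes\mathbb{C}^d$), and $(P\otimes I_d)I_{d^2}(P\otimes I_d) = I_r\otimes I_d$. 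So up to the positive normalization factor, the operator in question is $I_r\otimes I_d - \tfrac{\beta+1}{d}F_{r,d}$, and the question becomes: for which $\beta$ is this operator separable across $R:B_1$?

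The key step is to recognize that this is a Werner-type operator on $\mathbb{C}^r\otimes\mathbb{C}^d$ and to invoke (or re-derive) the known separability criterion for such operators. I would use the standard fact that an operator of the form $aI_{rd} + bF_{r,d}$ (with the convention $r\le d$) is positive semidefinite iff $a + b \ge 0$ and $a\ge 0$ appropriately, and is separable iff additionally its partial transpose is positive — and for this highly symmetric family PPT is equivalent to separability (this is the analogue of the classic Werner-state result; the relevant reference \cite{Werner1989}, together with the twirl argument, gives it). Concretely, taking the partial transpose on $B_1$ turns $F_{r,d}$ into (a submatrix of) $d\cdot\phi_d^+$, whose largest eigenvalue is $d$; so positivity of the partial transpose of $I_r\otimes I_d - \tfrac{\beta+1}{d}F_{r,d}$ amounts to $1 - \tfrac{\beta+1}{d}\cdot d \ge 0$ after accounting for the projection — wait, more carefully: the partial transpose of $F_{r,d}$ is $\sum_{i,j=1}^r \op{ii}{jj}$ which is $r$ times a rank-one projector, with top eigenvalue $r$; hence PPT requires $1 - \tfrac{\beta+1}{d}\cdot r \ge 0$, i.e.\ $\beta \le (d-r)/r$. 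Since $\beta \le (d-r)/r$ is most restrictive when $r$ is as large as possible, and we need separability for \emph{all} projectors of dimension $\le k$, the binding constraint is $r = k$, giving $\beta \le (d-k)/k$. For the converse direction, when $\beta \le (d-r)/r$ I would exhibit an explicit separable decomposition (or cite the Werner construction: such symmetric PPT operators on $\mathbb{C}^r\otimes\mathbb{C}^d$ are separable), and also check that the overall normalization constant $d^2-(\beta+1)$ is positive on the allowed range so that no sign issues arise.

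The main obstacle I anticipate is justifying the equivalence ``PPT $\iff$ separable'' for the projected operator $(P\otimes I_d)\rho_d^{\mathcal{W}}(\beta)(P\otimes I_d)$ on the unequal-dimensional space $\mathbb{C}^r\otimes\mathbb{C}^d$. For genuine Werner states on $\mathbb{C}^d\otimes\mathbb{C}^d$ this is classical, but here the two factors have different dimensions and the operator is supported on a subspace, so one must either appeal to the $U\otimes U$-twirl argument adapted to the $\mathrm{U}(r)\times \mathrm{U}(d)$ (or the relevant isometry) symmetry group, or cite the appropriate generalization. I would handle this by noting that $I_r\otimes I_d - c\,F_{r,d}$ is invariant under $U\otimes U$ for $U$ in the subgroup fixing the $r$-dimensional subspace, and that the commutant of this representation is spanned by $\{I, F_{r,d}\}$ plus terms orthogonal to the support, so any state can be twirled into this two-parameter family without changing separability — reducing the general separability question to checking PPT on this family, which is the computation above. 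The rest (plugging in, simplifying the inequality, checking edge cases $k=r$, and the normalization sign) is routine.
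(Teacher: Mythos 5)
Your reduction and your treatment of the entanglement direction are fine and essentially coincide with the paper's: after projecting, the operator is proportional to $I\otimes I-\frac{\beta+1}{d}F_{k,d}$, its partial transpose is $I-\frac{\beta+1}{d}k\,\psi$ for a normalized pure state $\psi$, and negativity for $\beta>(d-k)/k$ certifies entanglement; the binding rank is $r=k$, and ranks $r<k$ only give weaker constraints, matching the paper's closing remark.

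The genuine gap is in the separability (``if'') direction, which is the crux of the lemma and which you yourself flag as the main obstacle. The claim you need --- that PPT implies separability for the family $I_r\otimes I_d-cF_{r,d}$ on $\mathbb{C}^r\otimes\mathbb{C}^d$ --- is true, but your proposed justification does not deliver it: twirling over the stabilizer subgroup shows that separable states are mapped to separable members of the commutant family, i.e.\ it preserves separability in the wrong direction, and knowing that the commutant is spanned by $\{I\otimes\Pi_r,\,F_{r,d},\,I\otimes\Pi_{d-r}\}$ does not by itself show that every PPT member of that family is separable; one must actually exhibit separable states filling out the PPT range (as in Werner's original argument) or reduce to a known result. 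The cleanest repair along your lines is to notice that $F_{r,d}=\sum_{i,j\le r}\op{i}{j}\otimes\op{j}{i}$ is supported on $\mathbb{C}^r\otimes\mathbb{C}^r$, so the projected operator block-decomposes, with respect to a local projection on $B_1$, into a genuine square Werner operator $I_{r^2}-cF_r$ plus an identity block; separability of each block then follows from the classical equal-dimension Werner criterion, and the unequal-dimension issue you worried about disappears. The paper instead sidesteps Werner separability entirely: it invokes the Gurvits--Barnum separable ball (the operator $I+A$ is separable for every Hermitian $A$ with Frobenius norm $\left\|A\right\|_2\le 1$) and checks $\left\|\tfrac{\beta+1}{d}k\,\psi^\Gamma\right\|_2\le 1$ when $\beta\le(d-k)/k$, which is shorter, needs no symmetry reduction, and works directly for an arbitrary projector $P$ rather than the standard one.
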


\begin{proof}(Proof of lemma~\ref{le:werner})
We first require $P$ to be a $k$-dimensional projector in $\mathcal{L}(B_0\to B_1)$. By direct calculation, we have
\begin{align}
    \rho^\mathrm{proj}_{RB_1} &\propto  I_{kd} - \frac{\beta+1}{d}\left((P\otimes I_d) F_d (P^\dagger\otimes I_d)\right)\\
                            &= I_{kd} - \frac{\beta+1}{d}\left((P^*\otimes I_d) {\phi}_d^+ (P^{\mathrm{T}}\otimes I_d)\right)^\Gamma\\
                            &=  I_{kd} - \frac{\beta+1}{d}c\  \psi^\Gamma,
\end{align}
for some normalized pure state $\ket\psi=(P^*\otimes\id_d)\ket{{\phi}_d^+}/\sqrt{c}$, with normalizing factor
\begin{align}
    c &= \Tr\left( (P^*\otimes I_d){\phi}_d^+(P^\mathrm{T}\tens  I_d) \right) \\
      &= \Tr\left(\overline{P} P^\mathrm{T}\right) = \Tr\left( PP^\dagger\right)= k.
\end{align}
If $\beta>(d-k)/k$, then $\frac{\beta+1}{d}k>1$ and so
\begin{equation}
    \left(\rho^\mathrm{proj}_{RB_1}\right)^\Gamma = I_{kd} - \frac{\beta+1}{d}k\ \psi
\end{equation}
is obviously not positive. A non-positive partial transpose implies that $\rho^\mathrm{proj}_{RB_1}$ is entangled.  On the other hand, if $\beta\le(d-k)/k$, we apply Theorem~1 from \cite{Gurvits2002} which states that the operator $\id+A$ is separable for all Hermitian $A$ with $\|A\|_2\le1$, where $\|A\|_2 = \sqrt{\Tr A^\dagger A}$ is the Frobenius norm. Here, we have
\begin{equation}
    \left\| \frac{\beta+1}{d}k\ \psi^\Gamma \right\|_2 \le \|\psi^\Gamma\|_2 = \|\psi\|_2 = 1,
\end{equation}
since $\psi$ is a normalized pure state. We therefore conclude that $\rho^\mathrm{proj}_{RB_1}$ is separable.  This establishes that $\rho^\mathrm{proj}_{RB_1}$ is separable for all $k$-dimensional projectors $P$ if and only if $\beta\le(d-k)/k$.  Since any projector of dimension strictly less than $k$ can be implemented by first performing a $k$-dimensional projector and then projecting into a smaller subspace, it immediately follows that $\rho^\mathrm{proj}_{RB_1}$ is separable for all projectors $P$ of dimension no greater than $k$ if and only if $\beta\le(d-k)/k$. This completes the proof of Lemma~\ref{le:werner}.
\end{proof}

As a direct result of Lemma~\ref{le:kEBC} and Lemma~\ref{le:werner}, the channel $\Lambda_\beta$ constructed in Eq.~\eqref{eq:LambdaBeta} is $k$-EBC if and only if $\beta\le(d-k)/k$, for $1\le k<d$. This completes the proof of Theorem~\ref{th:kEBC}.

Finally, we note that, a special case of Theorem~\ref{th:kEBC} for $2$-EBCs is also established in \cite[Corollary~III.4]{christandl2019composed} using the Holevo-Werner maps similar as in the above proof.

 \subsection{Interplay between $k$-EBC and $(k,\text{complete})$-EBSC}

In this subsection, we discuss the interplay between generalized EBC and generalized EBSC. For simplicity, the systems $A$, $B$, we consider are both required to have $d$-dimensional input and output systems.  We will also restrict attention to the special class of superchannels that allow for a realization without the side-channel $E$.  


\begin{definition}\label{de:sidefree}
A superchannel $\Theta$ is said to be \textbf{without side-channel}, if it can be realized as
\begin{equation}\label{eq:sidefree}
    \Theta^{A\to B}[\mc E^A]= \Gamma_\mr{post}^{A_1\to B_1}\circ\mc E^A\circ\Gamma_\mr{pre}^{B_0\to A_0}
\end{equation}
for some CPTP maps $\Gamma_\mr{pre}$ and $\Gamma_\mr{post}$.
\end{definition}

\begin{corollary}
If an EBSC $\Theta^{A\to B}$ has a realization without side-channel as in Eq.~\eqref{eq:sidefree}, then both $\Gamma_\mr{pre}^{B_0\to A_0}$ and $\Gamma_\mr{post}^{A_1\to B_1}$ must be entanglement breaking channels.
\end{corollary}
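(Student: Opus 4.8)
The plan is to reduce the claim to the separability of the superchannel's Choi matrix. Since $\Theta$ is an EBSC it is in particular an EB supermap, so by Theorem~\ref{th:EBSC} (item (C)) the operator $\mc J_\Theta^{AB}$ is separable across $A:B$. Recall that $\mc J_\Theta^{AB}$ is also the Choi matrix of the map $\Gamma_\Theta^{A_1B_0\to A_0B_1}$ obtained by gluing the pre- and post-processing maps along the side channel $E$; this holds for any realization of $\Theta$. When the realization is without side channel, $E$ is trivial, and $\Gamma_\Theta$ is just $\Gamma_\mr{post}^{A_1\to B_1}\circ\Gamma_\mr{pre}^{B_0\to A_0}$. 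Because these two channels act on and produce disjoint systems, this composition is nothing but the product map $\Gamma_\mr{pre}^{B_0\to A_0}\otimes\Gamma_\mr{post}^{A_1\to B_1}$, so its Choi matrix factorizes:
\[
\mc J_\Theta^{AB}=J_{\Gamma_\mr{pre}}^{B_0A_0}\otimes J_{\Gamma_\mr{post}}^{A_1B_1},
\]
and under the bipartition $A=A_0A_1$ versus $B=B_0B_1$ this is a tensor product of $J_{\Gamma_\mr{pre}}$ (with $A_0$ on the $A$-side, $B_0$ on the $B$-side) and $J_{\Gamma_\mr{post}}$ (with $A_1$ on the $A$-side, $B_1$ on the $B$-side).

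Next I would invoke the elementary fact that partial traces preserve $A:B$ separability. Tracing out $A_1B_1$ from the $A:B$ separable operator $\mc J_\Theta^{AB}$ leaves $\bigl(\Tr J_{\Gamma_\mr{post}}^{A_1B_1}\bigr)\,J_{\Gamma_\mr{pre}}^{A_0B_0}=d_{A_1}\,J_{\Gamma_\mr{pre}}^{A_0B_0}$, so $J_{\Gamma_\mr{pre}}^{A_0B_0}$ is $A_0:B_0$ separable; symmetrically, tracing out $A_0B_0$ shows $J_{\Gamma_\mr{post}}^{A_1B_1}$ is $A_1:B_1$ separable. The scalars $\Tr J_{\Gamma_\mr{pre}}$ and $\Tr J_{\Gamma_\mr{post}}$ are strictly positive because $\Gamma_\mr{pre}$ and $\Gamma_\mr{post}$ are CPTP, hence nonzero, so dividing by them does not affect separability. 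Finally, separability of the Choi matrix of a channel is exactly the entanglement-breaking condition by Proposition~\ref{Prop:EBC}, (A)$\Leftrightarrow$(B); thus $\Gamma_\mr{pre}^{B_0\to A_0}$ and $\Gamma_\mr{post}^{A_1\to B_1}$ are EB maps, and being trace-preserving by hypothesis they are EBCs. Note that this argument applies verbatim to every side-channel-free realization of $\Theta$, as the statement requires.

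I do not expect a genuine obstacle here. The only point demanding a bit of care is the bookkeeping in the first step: correctly identifying how the "crossing" maps $\Gamma_\mr{pre}:B_0\to A_0$ and $\Gamma_\mr{post}:A_1\to B_1$ sit inside the $A:B$ partition of $\mc J_\Theta^{AB}$, i.e. that $\Gamma_\mr{pre}$ contributes the pair $(A_0,B_0)$ and $\Gamma_\mr{post}$ the pair $(A_1,B_1)$ rather than some other grouping. Once the Choi matrices are written out explicitly this is immediate, and the rest is a one-line application of "partial trace preserves separability" together with the characterizations already proved in Theorem~\ref{th:EBSC} and Proposition~\ref{Prop:EBC}.
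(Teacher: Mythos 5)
Your proof is correct and follows essentially the same route as the paper: the paper's one-line argument invokes Theorem~\ref{th:EBSC} (the separability of the glued map $\Gamma_\Theta$, equivalently of $\mc J_\Theta^{AB}$ across $A:B$), and you simply spell this out at the Choi-matrix level, noting that without a side channel $\Gamma_\Theta=\Gamma_\mr{pre}\otimes\Gamma_\mr{post}$ so the Choi matrix factorizes, and then using the partial-trace-preserves-separability step together with Proposition~\ref{Prop:EBC}. No gaps; the scalar bookkeeping and the identification of the $(A_0,B_0)$ and $(A_1,B_1)$ pairings are handled correctly.
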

\begin{proof}
This simply follows from Thm.~\ref{th:EBSC}.~(D) and the definition of EB channels.
\end{proof}

The following proposition discusses the relation between $k$-EBC and $(k,\text{complete})$-EBSC, for any positive integer $k\le d$. 

\begin{proposition}\label{prop:kEBSC}
For a superchannel $\Theta^{A\to B}$ without side-channel as in Eq.~\eqref{eq:sidefree}, the following are equivalent.
\begin{enumerate}[label=(\Alph*)]
    \item $\Theta^{A\to B}$ is a $(k,\text{complete})$-EBSC.
    \item $\Gamma_\mr{pre}^{B_0\to A_0}$ is a k-EBC, and $\Gamma_\mr{post}^{A_1\to B_1}$ is a completely-EBC.
\end{enumerate}
\end{proposition}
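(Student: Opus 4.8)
The plan is to prove $(B) \Rightarrow (A)$ by a direct calculation on an ancilla-assisted input channel, and $(A) \Rightarrow (B)$ by carefully specializing the input map $\mc E^{RA}$ so as to isolate the action of $\Gamma_\mr{pre}$ and $\Gamma_\mr{post}$ separately. Throughout, I would use the observation (Thm.~\ref{th:EBSC}(D) and its corollary just above) that for a side-channel-free EBSC the pre- and post-processing maps are honest CPTP maps, so the relevant question is purely about which EB-type property each must have.

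For $(B)\Rightarrow(A)$: fix any $\mc E^{RA} \in \mathrm{CP}(RA)$ and any ancillas $R'$ (dimension $q$, arbitrary) and $B'$ (dimension $p$) together with product state $\rho^{R'R_0}\otimes\omega^{B'B_0}$. Feed this through $\Theta$, i.e.\ apply $\Gamma_\mr{pre}^{B_0\to A_0}$ to the $B_0$ half, then $\mc E$ on $R_0A_0$, then $\Gamma_\mr{post}^{A_1\to B_1}$. The first step: $\id^{B'}\otimes\Gamma_\mr{pre}^{B_0\to A_0}(\omega^{B'B_0})$ — since $\Gamma_\mr{pre}$ is a $k$-EBC and $d_{B'}=p$ is \emph{arbitrary}... wait, here I need $p \le k$ for $\Gamma_\mr{pre}$ acting on a $p$-dimensional ancilla to produce something separable. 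This is the subtlety: the statement says $(k,\text{complete})$-EBSC, meaning $p$ ranges over all integers. So I would instead feed through in the opposite order of reasoning: first note $\id^{R'}\otimes\mc E^{RA}$ applied after $\Gamma_\mr{pre}$ still leaves $A_1$ as a system we then hit with $\Gamma_\mr{post}$, a completely-EBC, which kills \emph{all} entanglement across $A_1:(\text{everything else})$ regardless of dimension — so after the post-processing the output is automatically separable across $B_1 : R'R_1B'$, hence certainly $R'R_1:B'B_1$ separable with the grouping placing $B'$ and $B_1$ together. Hmm, but that shows it is even fully non-entangling, which seems too strong. The resolution: the grouping in the definition of $(p,q)$-EBSC puts $p$ (the $B'$ ancilla) with $B$, and $q$ (the $R'$ ancilla) with $R$; so I want separability across $R'R_1 : B'B_1$, and indeed $\Gamma_\mr{post}$ being completely-EBC gives separability across $B_1:(R'R_1B')$, which implies the coarser cut. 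But then $\Gamma_\mr{pre}$ being merely $k$-EBC would be unused — so I must have the grouping or the roles of pre/post reversed. The correct reading: the restriction to $k$ on one side comes from $\Gamma_\mr{pre}$ only being tested against $q$-dimensional $R$-side ancillas via $\mc E^{RA}$, whose $A_0$-input is at most... no. Let me re-anchor: $q$ is arbitrary (complete) and $p=k$ is the constrained one, and $p$ is paired with $B$. Since $B_1$ is the output of $\Gamma_\mr{post}$ and $B_0$ is the input of $\Gamma_\mr{pre}$, the $B'$ ancilla (dimension $p=k$) sits alongside $B_0$ entering $\Gamma_\mr{pre}$. So $(B)\Rightarrow(A)$ should go: $\id^{B'}\otimes\Gamma_\mr{pre}^{B_0\to A_0}(\omega^{B'B_0})$ with $d_{B'}=k$ is separable across $B':A_0$ because $\Gamma_\mr{pre}$ is $k$-EBC; write it as $\sum_i \tau_i^{B'}\otimes\xi_i^{A_0}$; then applying $\id^{R'B'}\otimes\mc E^{RA}$ and then $\Gamma_\mr{post}$, linearity lets me handle each $\xi_i^{A_0}$ separately, and $\Gamma_\mr{post}$ completely-EBC makes each resulting $\id^{R'R_1}\otimes\Gamma_\mr{post}(\cdots\xi_i\cdots)$ separable across $R'R_1:B_1$ — tensoring back the $\tau_i^{B'}$ onto the $B$-side keeps it $R'R_1:B'B_1$ separable. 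Summing over $i$ preserves separability. Done.

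For $(A)\Rightarrow(B)$: I would extract each condition by a suitable choice of $\mc E$. To show $\Gamma_\mr{post}$ is completely-EBC, take $A_0$ trivial (or let $\mc E^{RA}$ discard $A_0$ and prepare a maximally entangled state $\frac1{d_{A_1}}\phi_+^{R_1A_1}$ on $R_1A_1$ with $d_{R_1}$ as large as we like); then $\1^R\otimes\Theta[\mc E]$ applied to any input produces, on the relevant marginal, essentially $\id^{R_1}\otimes\Gamma_\mr{post}(\phi_+^{R_1A_1})$ up to normalization, and $(p,\text{complete})$-non-entangling with $q=d_{R_1}$ arbitrary forces this to be separable for every $R_1$ — i.e.\ $\Gamma_\mr{post}$ is a completely-EBC. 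To show $\Gamma_\mr{pre}$ is $k$-EBC, take the input $\mc E^{RA}$ to be essentially the identity/SWAP-type channel carrying $A_0\to R_1$ (discarding appropriately) so that the overall output channel's Choi matrix reproduces $\id^{R'}\otimes\Gamma_\mr{pre}^{B_0\to A_0}$ read as $\id\otimes\Gamma_\mr{pre}(\phi_+^{B_0\tilde B_0})$, living on a $k$-dimensional ancilla (since the $B'$-ancilla bound is $p=k$); the $(k,\text{complete})$-non-entangling property of the output then forces $(P\otimes I)J_{\Gamma_\mr{pre}}(P\otimes I)$-type separability for all $k$-dimensional projectors, which by Lemma~\ref{le:kEBC} is exactly the statement that $\Gamma_\mr{pre}$ is $k$-EBC. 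The main obstacle I anticipate is bookkeeping the dimension bookkeeping correctly — making sure the ancilla dimension that gets "tested" in the $(k,\text{complete})$ definition lines up with the $B_0$-side ($\to$ the $k$ bound on $\Gamma_\mr{pre}$) versus the $R$-side ($\to$ the "complete" freedom used to force $\Gamma_\mr{post}$ to be fully EB) — and confirming that the constructed input maps are legitimately CP (or CPTP, if one wants the cleanest statement) so that Definition of $(p,q)$-EBSC genuinely applies. Everything else is routine Choi-matrix manipulation combined with Lemma~\ref{le:kEBC}, Proposition~\ref{Prop:EBC}, and the side-channel-free structure.
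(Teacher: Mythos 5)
Your final argument is correct, but your $(A)\Rightarrow(B)$ direction is genuinely different from the paper's. The paper extracts both conditions in one shot: it feeds in the single (non-trace-preserving) maximally entangled map $\Phi_+^{RA}$ together with a maximally entangled ancilla state $\phi_+^{R_0D}$ on a $d$-dimensional $R$-side ancilla $D$, so that the output factorizes as $\Gamma_\mathrm{pre}^{B_0\to D}(\rho^{B_0K})\otimes\Gamma_\mathrm{post}^{A_1\to B_1}(\phi_+^{R_1A_1})$; the $KB_1:DR_1$ separability then simultaneously forces $\Gamma_\mathrm{pre}$ to be $k$-EB (arbitrary $\rho^{B_0K}$, $d_K=k$) and $\Gamma_\mathrm{post}$ to be completely EB via Proposition~\ref{Prop:EBC}. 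You instead use two separate, honest CPTP test channels: a discard-and-prepare-$\phi_+^{R_1A_1}$ channel to pin down $\Gamma_\mathrm{post}$, and a route-$A_0\to R_1$ channel so that the output of $\Gamma_\mathrm{pre}$ lands on the $R$ side of the cut while the $k$-dimensional ancilla $B'$ stays on the $B$ side. This buys two things the paper's argument does not: your tests are trace-preserving (so the equivalence survives even under a definition of $(p,q)$-EBSC restricted to CPTP inputs), and you never use the $R$-side ancilla at all, so your argument in fact shows that $(k,1)$-EBSC already implies (B), whereas the paper's teleportation trick consumes a $d$-dimensional $R$-side ancilla. Two small polish points: in the pre-processing test you do not need Lemma~\ref{le:kEBC} or the Choi matrix at all — with the $k$-dimensional $B'$ and arbitrary $\omega^{B'B_0}$, tracing out $R'$ and $B_1$ from the $R'R_1:B'B_1$-separable output directly gives Definition~\ref{de:kEBC} for $\Gamma_\mathrm{pre}$ (your "Choi matrix" phrasing is slightly off since only $k$-dimensional $B$-side ancillas are available, but you correct for this); and "take $A_0$ trivial" is not available since $A_0$ is fixed $d$-dimensional here — your parenthetical discard-and-prepare channel is the right move. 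Your $(B)\Rightarrow(A)$ direction (decompose $\id^{B'}\otimes\Gamma_\mathrm{pre}(\omega^{B'B_0})$ as a separable sum, push each term through $\mc E$ and the completely-EB $\Gamma_\mathrm{post}$) is essentially the argument the paper leaves as "easy to see," spelled out correctly.
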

As a result, the realization of a $(k,\text{complete})$-EBSC without side-channel is shown in Fig.~\ref{fig:kd-EBSC}.
\begin{figure}[!htb]
    \centering
    \includegraphics[width=0.6\columnwidth]{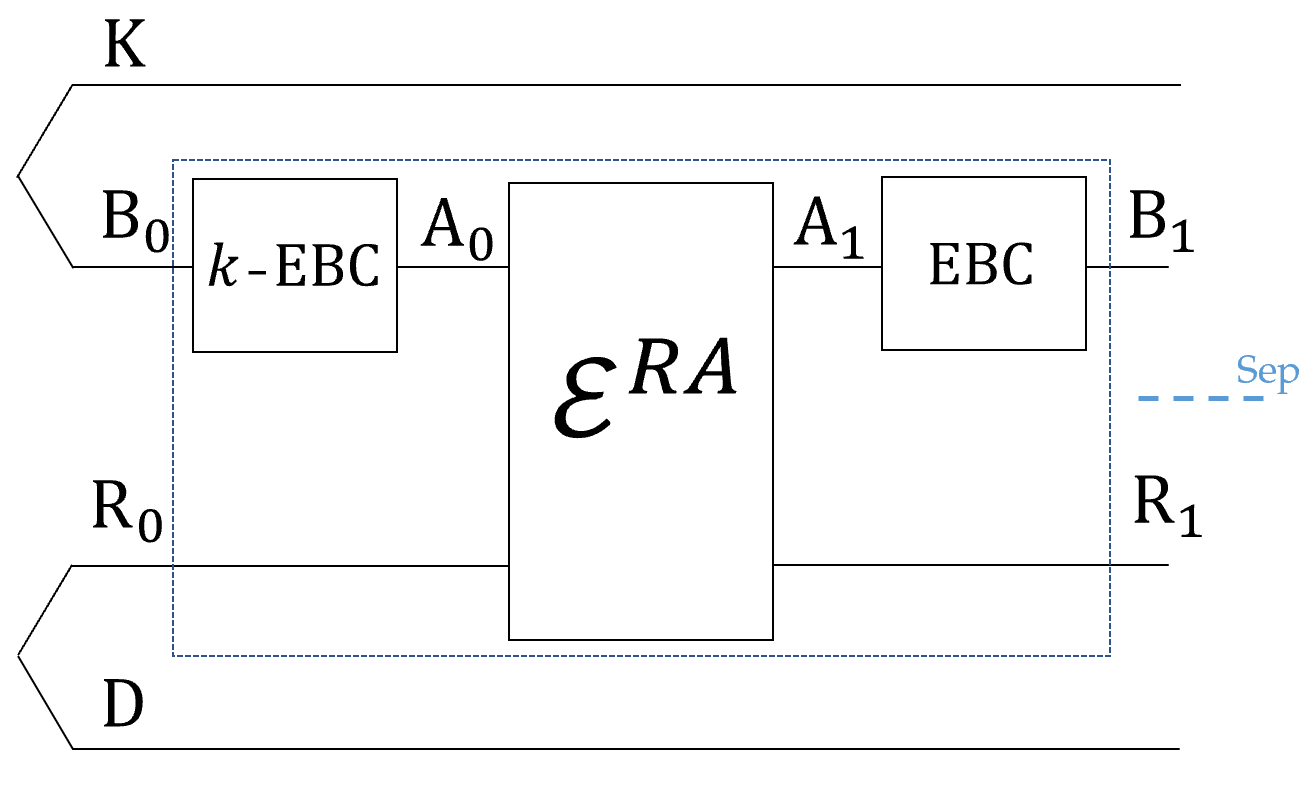}
    \caption{Realization of a $(k,\text{complete})$-EBSC without side-channel (in the blue dash box).} 
    \label{fig:kd-EBSC}
\end{figure}

\begin{proof}
$(A)\Rightarrow (B)$: Choose system $R$ to have $d_{R_0}=d_{R_1}=d$, Let $D$ be a $d$-dimensional system, $K$ be a $k$-dimensional system. Consider the CP map $\Phi_+^{RA}(\cdot)=\Tr((\cdot) \phi_+^{R_0A_0})\phi_+^{R_1A_1}$, the maximally entangled state $\phi_+^{R_0D}$, and an arbitrary quantum state $\rho\in\mc D(B_0K)$. We have
\begin{align}
    &\left(\mr{id}^{KD}\otimes(\1^R\otimes \Theta^{A\to B}[\Phi_+^{RA}])\right)(\rho^{B_0K}\otimes\phi_+^{R_0D})\\=~& 
    \Gamma_\mr{pre}^{B_0\to D}(\rho^{B_0K})\otimes\Gamma_\mr{post}^{A_1\to B_1}(\phi_+^{R_1A_1}).
\end{align}
Since $\Theta$ is $(k,\text{complete})$-EBSC, the above state must be separable with respect to $KB_1:DR_1$, which means both $\Gamma_\mr{pre}^{B_0\to D}(\rho^{B_0K})$ and $\Gamma_\mr{post}^{A_1\to B_1}(\phi_+^{R_1A_1})$ are separable. The former implies $\Gamma_\mr{pre}$ is $k$-EBC since $\rho$ is an arbitrary state on $B_0K$. The latter implies $\Gamma_\mr{post}$ is (completely) EBC by Proposition.~\ref{Prop:EBC} \cite{Horodecki2003EB}. This complete the proof of $(A)\Rightarrow (B)$.

$(B)\Rightarrow (A)$: Consider the circuit in Fig.~\ref{fig:kd-EBSC}. For any separable input $\rho^{B_0K}\otimes\varphi^{R_0D}$, it is easy to see the output state is $K:B_1:R_1D$ separable, which means $\Theta^{A\to B}$ is indeed (k,d)-EBSC.
\end{proof}

Since the hierarchy of $k$-EBC is non-trivial, there is also a non-trivial hierarchy of $(k,\text{complete})$-EBSC, for $1\le k \le d$. 

As a final comment, the series of study about when composed maps become entanglement breaking \cite{christandl2012ppt,lami2015entanglement, lami2016entanglement,rahaman2018eventually, 10.5555/3370201.3370205,christandl2019composed} may also be generalized to supermaps. As an example, the Schmidt number iteration theorem~\cite{christandl2019composed} can be directly applied here to get the following corollary.
\begin{corollary}
If $\Theta^{A\to A}$ is a (k,complete)-EBSC without side-channel and $M$ is an integer no less than $\ceil{\frac{d-1}{k-1}}$, then an $M$ times concatenation of $\Theta$ as $$\Theta^{M}\equiv\underbrace{\Theta\circ\Theta\circ\dots\circ\Theta}_{M~times} $$ is an EBSC.
\end{corollary}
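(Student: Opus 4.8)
The plan is to peel off the side-channel-free structure, reduce the claim to a statement about compositions of $k$-EBCs, and then quote the Schmidt number iteration theorem. First I would record how side-channel-free superchannels behave under series concatenation. Writing the realization of $\Theta^{A\to A}$ as $\Theta[\mc E^A] = \Gamma_\mr{post}^{A_1\to A_1}\circ\mc E^A\circ\Gamma_\mr{pre}^{A_0\to A_0}$, a one-line induction gives
\begin{equation}
\Theta^M[\mc E^A] = \left(\Gamma_\mr{post}\right)^{\circ M}\circ\mc E^A\circ\left(\Gamma_\mr{pre}\right)^{\circ M},
\end{equation}
where $(\cdot)^{\circ M}$ denotes the $M$-fold self-composition of a channel. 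Hence $\Theta^M$ is again a superchannel without side-channel, now with pre-processing $(\Gamma_\mr{pre})^{\circ M}$ and post-processing $(\Gamma_\mr{post})^{\circ M}$.

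Next I would use Proposition~\ref{prop:kEBSC}: since $\Theta$ is a $(k,\text{complete})$-EBSC without side-channel, $\Gamma_\mr{pre}$ is a $k$-EBC and $\Gamma_\mr{post}$ is a completely-EBC. The post-processing side is immediate — pre- or post-composing an EBC with any CPTP map still yields an EBC — so $(\Gamma_\mr{post})^{\circ M}$ is completely-EB for every $M\geq 1$. The real content is to show that $(\Gamma_\mr{pre})^{\circ M}$ is completely-EB (equivalently $d$-EB) once $M\geq\ceil{\frac{d-1}{k-1}}$. This is precisely the Schmidt number iteration theorem of \cite{christandl2019composed}: composing $k$-EBCs on a $d$-dimensional space decreases the Schmidt number of every output, driving it to $1$ (i.e.~making the channel entanglement breaking) as soon as $M(k-1)\geq d-1$. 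I would isolate this as a quotable lemma and check that its hypotheses apply verbatim here, the composed channels all acting on the same $d$-dimensional system (and in fact all being the single map $\Gamma_\mr{pre}$). This is the one genuinely nontrivial step; everything else is bookkeeping.

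Finally I would close the loop by invoking Proposition~\ref{prop:kEBSC} a second time, now with $k$ replaced by $d$: a side-channel-free superchannel whose pre- and post-processing channels are both completely-EB (i.e.~$d$-EB) is a $(d,\text{complete})$-EBSC, and for $d$-dimensional $B$ being $(d,\text{complete})$-non-entangling is the same as being separable (it suffices to take the $B$-side ancilla of dimension $d$, with the $R$-side ancilla arbitrary). Hence $\Theta^M$ outputs a separable channel on every input, i.e.~it is an EBSC. One should also flag the degenerate case $k=1$, where $\ceil{\frac{d-1}{k-1}}$ is formally infinite and the statement is vacuous — consistent with the fact that iterating arbitrary channels need not break entanglement — so one tacitly assumes $k\geq 2$.
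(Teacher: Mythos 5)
Your proof is correct and follows essentially the same route as the paper: Proposition~\ref{prop:kEBSC} to identify $\Gamma_\mr{pre}$ as a $k$-EBC and $\Gamma_\mr{post}$ as completely EB, the Schmidt number iteration theorem of \cite{christandl2019composed} applied to $(\Gamma_\mr{pre})^{\circ M}$, and a conversion back to the superchannel picture. You merely spell out the bookkeeping (the composition formula for side-channel-free superchannels and the final dimension argument) that the paper leaves implicit, and your remark on the degenerate case $k=1$ is a sensible addition.
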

\begin{proof}
The Schmidt number iteration theorem~\cite{christandl2019composed} states that a $\ceil{\frac{d-1}{k-1}}$ times concatenation of a $k$-EBC $\mc E^A$ becomes completely-EB. Combining it with Proposition~\ref{prop:kEBSC} gives this corollary.
\end{proof}
We leave the question when general composed superchannels become EBSC as an interesting future research direction.

\section{Conclusion and Discussion}\label{sec:con}

In this paper, we introduce and thoroughly study the notion of entanglement-breaking superchannels (EBSCs). These are objects that generalize and extend the standard notion of entanglement-breaking channels (EBCs) to ``higher-order'' quantum maps.  On the one hand, EBSCs (and EB supermaps) allow for relatively simple characterization via the Choi matrix, just like its channel counterpart. On the other hand, they can also exhibit some interesting properties, which make them much more complex than EBCs in many aspects.  
Firstly, we show that all entanglement-breaking supermaps can be decomposed into partly-EB pre/post-processing CP maps. While left open the question whether every EBSC can be implemented with partly-EB pre/post-processing channels, we show that a generic implementation must allow entangling the input system to the memory system of a superchannel.
Secondly, we show that EBSC is more general than measure-and-prepare superchannels (MPSCs), and even controlled-measure-and-prepare superchannels (CMPSCs), while for EBC these three classes coincide. Finally, we illustrate a super-activation phenomenon of EBSCs.

We further investigate which quantum channels can be generated using EBSCs, as well as the smaller classes of MPSCs and CMPSCs.  We show that the CPTP-complete image of MPSC/CMPSC equals one/two-round LOCC maps, respectively.  Although we are not able to precisely characterize the CPTP-complete image of EBSC, we show that its CP-complete image equals the collection of all separable maps. 
We argue that the notion of CP image captures some fundamental difference between channels and superchannels, and we hope these results might inspire new lines of investigation into probabilistic channel conversion.

In the final section of this paper, we establish a relationship between $k$-EBC, $k$-non-entangling channel and $(k,\text{complete})$-EBSC without side-channel. By generalizing the method of \cite{chitambar2020entanglement}, we show that all these three objects have a non-trivial hierarchy for $1\le k\le d$.   We remark that other alternative definitions of EBSC are also possible. One can require the output of EBSC to be not only separable, but also LOCC, or even LOSR (local operations and shared randomness). In this sense, our definition of MPSC (CMPSC) is just an example of one-round (two-round) LOCC-EBSC, but it remains unclear whether every one-round (two-round) LOCC-EBSC can be realized this way. 
In other words, whether $\mr{MPSC}\subseteq \mr{LOCC_1\text{-}EBSC}$ and $\mr{CMPSC}\subseteq \mr{LOCC_2\text{-}EBSC}$ are strict inclusions needs further investigation.

Our work provides a useful tool for the dynamical resource theory of quantum entanglement. Many results in entanglement theory based on EBCs can possibly be generalized to the dynamical resource theory with EBSC. For example, inspired by the resource theory of quantum memory where EBCs are a free resource \cite{Rosset2018memory}, one can consider the ability to faithfully store a quantum operation, perhaps call it a ``super-memory'', where EBSC may serve as free resource.  We leave this for future work.  Also, since we have characterized the Choi matrix of an EBSC, it is straightforward to calculate a robustness-type quantity with respect to it, similar to what has been done in \cite{Xiao2019memory}. We anticipate there being other applications of EBSC within the study of dynamical quantum resource theories.

There are some problems left open in our work. The first is whether the deterministic image of EBSC equals the set of all separable channels, namely whether $\mr{C_{EBSC}=SEP}$ or not. Currently we only know that all two-round LOCC and some non-LOCC separable channels are in $\mr{C_{EBSC}}$. The second is whether every EBSC can be realized as in Fig.~\ref{fig:conjecture} with the pre/post-processing maps being CPTP. Answering these questions will help us better understand the intricate structure of EBSCs.


\begin{acknowledgments}
We thank the anonymous reviewers for their helpful suggestions to improve this paper. SC acknowledges the support from the Tsinghua Scholarship for Undergraduate Oversea Studies.  EC is supported by the National Science Foundation (NSF) Award No. 1914440.
\end{acknowledgments}

\bibliographystyle{myunsrtnat}

\bibliography{EBSuper}

\end{document}